\providecommand{\U}[1]{\protect\rule{.1in}{.1in}}
\newtheorem{thm}{Theorem}[section]
\newtheorem{Assumption}{\bf Assumption}
\newtheorem{lem}{Lemma}[section]
\newtheorem{prop}{Proposition}[section]
\newenvironment{proof}[1][Proof]{\noindent\textbf{#1.} }{\ \rule{0.5em}{0.5em}}
\numberwithin{equation}{section}
\definecolor{linkcolor}{rgb}{0,0,0.7}
\begin{document}

\title{Optimal portfolio under ratio-type periodic  evaluation in incomplete markets with stochastic factors}
\author{Wenyuan Wang\thanks{School of Mathematics and Statistics, Fujian Normal University, Fuzhou, 350007, China; and School of Mathematical Sciences, Xiamen University, Xiamen, 361005, China. Email: wwywang@xmu.edu.cn}
\and
Kaixin Yan\thanks{Corresponding author. School of Mathematical Sciences, Xiamen University, Xiamen, 361005, China. Email: kaixinyan@stu.xmu.edu.cn}
\and
Xiang Yu\thanks{Department of Applied Mathematics, The Hong Kong Polytechnic University, Kowloon, Hong Kong. E-mail: xiang.yu@polyu.edu.hk}
}

\date{\ }

\maketitle
\vspace{-0.6in}

\begin{abstract}
This paper studies a type of periodic utility maximization for portfolio management in an incomplete market model, where the underlying price diffusion process depends on some external stochastic factors. The portfolio performance is periodically evaluated on the relative ratio of two adjacent wealth levels over an infinite horizon. For both power and logarithmic utilities, we formulate the auxiliary one-period optimization problems with modified utility functions, for which we develop the martingale duality approach to establish the existence of the optimal portfolio processes and the dual minimizers can be identified as the ``least favorable" completion of the market. With the help of the duality results in the auxiliary problems and some fixed point arguments, we further derive and verify the optimal portfolio processes in a periodic manner for the original periodic evaluation problems over an infinite horizon.\\
\ \\
\textbf{Keywords}: Periodic evaluation, relative portfolio performance, incomplete market, stochastic factors, convex duality, utility maximization\\
\
\\
\noindent\textbf{AMS subject classifications.} 91B28,~93E20,~60G44
\end{abstract}

\section{Introduction}
It has been well documented that for the portfolio management by institutional managers, the long term portfolio performance is often exercised to dictate the daily decision making for the fund management. Various long-run criteria for portfolio optimization have been proposed and studied in the literature, leading to several well-known stochastic control and optimizaiton problems over a large or an infinite horizon. The long term optimal growth rate, also named Kelly's criterion, has been popularized thanks to its tractability and simple financial implications. As an important extension, the so-called risk-sensitive portfolio management was introduced by \cite{BPK99} and \cite{FlemingS99} to encode diverse risk attitudes. Later, \cite{Pham03} formulated the long-run outperformance criterion as a large deviation probability control problem to incorporate the benchmark tracking for the fund management. On the other hand, to overcome the limitation of the prescribed time horizon from the conventional paradigm of utility maximization, \cite{MZ08}, \cite{MZ09}, \cite{MZ10} proposed the forward performance measure such that the decision maker's risk preference can be consistently extended forward in time, leading to the optimal portfolio management with an arbitrary trading horizon.

Another new long-run portfolio criterion was proposed in the recent study \cite{TZ23}, which suggests to roll over the same utility sequentially for infinite periods. In  \cite{TZ23}, the utility during each period is generated by the difference between the wealth levels at two adjacent evaluation dates. In particular, the S-shaped utility with the same risk aversion parameter was adopted in \cite{TZ23} to accommodate both cases that the current wealth may outperform or underperform the benchmark level from the preceding evaluation date. This type of periodic evaluation can partially match with the practical exercises in the annual appraisal review in the fund industry. However, as a common consequence of the S-shaped utility, when the current wealth level falls below the benchmark of the previous evaluation date during some time period, the optimal solution may suggest to cease the investment in the risky asset. Recently, \cite{WYY23} extends the periodic evaluation formulation in \cite{TZ23} by considering the periodic relative performance on the ratio between the current wealth level and the one at the previous evaluation date. One advantage of the ratio-type relative performance is that the bankruptcy opportunity is priorly ruled out from the admissible portfolio processes, yielding the optimal portfolio strategy truly supported over the infinite horizon. 

The present paper aims to extend the study in \cite{WYY23} from the Black-Scholes model to an incomplete market model, where the price process of the underlying risky asset follows a diffusion process affected by some correlated stochastic factors. The stochastic factor model has been extensively used in portfolio choice to integrate the asset predictability of the return and the stochastic volatility. For example, some early studies on the predictability of stock returns using the stochastic factor can be found in \cite{Fama77}, \cite{Ferson93}, \cite{Bre97}, \cite{Bre98}, \cite{Camp99},  \cite{Watchter} among others. The optimal investment under the stochastic volatility  or combining the stochastic returns and stochastic volatility, to name a few, was investigated in \cite{French87}, \cite{Kim96},  \cite{Scruggs}, \cite{FPS2000}, \cite{ZT01}, \cite{Pham02}, \cite{FH2003}, \cite{Chacko05}, \cite{Kraft05}, \cite{CH05}, \cite{Liu}, \cite{F13}, \cite{Hata18}, \cite{F20}, and many subsequent studies in different contexts.

Inspired by \cite{TZ23}, we first reformulate the infinite horizon optimization problem into an auxiliary one-period portfolio optimization problem based on dynamic programming principle, leading to two subsequent open questions: (1). The existence and the characterization of the optimal portfolio for the auxiliary problem; (2). The fixed point argument to characterize the original value function and the verification proof of the constructed optimal portfolio for the original problem over the infinte horizon. Comparing with  \cite{TZ23} and \cite{WYY23} in the Black-Scholes model, the unhedgeable risk from the external stochastic factor renders these two open questions significantly more challenging. Take the power utility for example. Firstly, to cope with the auxiliary portfolio optimization problem under a modified utility function (see the induced utility in \eqref{2.8} and \eqref{problem3}), we need to develop the duality method for the incomplete market model in the face of infinitely many dual processes (see \eqref{dual.pro}). It is also a well-known challenge to characterize the dual optimizer in a more explicit manner when our modified utility does not fit the special types such as the standard power utility. Secondly, due to the presence of the stochastic factor, the fixed point in  \eqref{A*.def} depends on the variable $y$ instead of the constant fixed point in \cite{TZ23} and \cite{WYY23}. More importantly, the verification proof for the constructed portfolio using the result from the auxiliary problem also becomes more technical as the we again encounter some duality arguments in the incomplete market model.  

The present paper contributes to the methodology in resolving the aforementioned challenges. To address the optimal solution of the auxiliary problem in the first question, we study the dual problem in the framework of stochastic factor models similar to \cite{CH05} and develop some novel methods for the modified utility to identify the dual optimizer in the sense of ``least favorable" completion of the market. In particular, some new technical proofs are presented for two distinct cases when risk aversion parameter $0<\alpha<1$ (see Proposition \ref{existence.1}) and $\alpha<0$ (see Proposition \ref{existence.2}), respectively. To tackle the difficulties in the second question, we show the existence of the unique fixed point by the contraction mapping in the space of bounded and continuous nonnegative functions. Based on the identification of the dual minimizer in the auxiliary problem and the fixed point result, we are able to verify the optimality of the concatenated wealth process period by period (see Theorem \ref{thm3.1}) using some duality  convergence arguments. For the case when the periodic evaluation is conducted under a logarithmic utility, we provide some simplified proofs for the existence of the optimal portfolio.

The remainder of this paper is organized as follows. Section \ref{Sec.2} introduces the incomplete market model where the risky asset price is governed by a stochastic factor model. The problem formulation under the ratio-type periodic evaluation on the relative portfolio performance is presented afterwards. In Section \ref{sec:power}, we study the case under power utility and develop the martingale duality arguments for the auxiliary one-period optimization problem and complete the verification proof for the constructed optimal periodic portfolio strategy. Section \ref{sec:log} presents the main results and some distinct proofs for the easier case under logarithmic utility.

\section{Market Model and Problem Formulation}\label{Sec.2}
Let $(\Omega,\mathcal{F},\{\mathcal{F}\}_{t\geq0},\mathbb{P})$ be a filtered probability space supporting a two-dimensional Brownian motion $W=(W_{1t}, W_{2t})_{t\geq0}$. The financial market consists of one risky asset $S=(S_t)_{t\geq0}$ and one risk-free asset $B=(B_t)_{t\geq0}$. The continuous compounding interest rate of the risk-free asset is $r\geq 0$ and $B_t=e^{rt}$.
The price process of the risky asset $S=(S_t)_{t\geq0}$ satisfies the stochastic differential equation
\begin{eqnarray}\label{SDE2.1}
\frac{dS_t}{S_t}=\mu(Y_t) dt+\sigma(Y_t) dW_{1t},\quad t\geq 0,
\end{eqnarray}
with $S_0=1$, where $\mu(\cdot)$ and $\sigma(\cdot)>0$ are the drift and volatility respectively. 
Here, we consider the incomplete market model where the drift and volatility depend on an external stochastic factor process $Y_t$, which is governed by
\begin{eqnarray}\label{SDE2.2}
dY_t=b(Y_t)dt+\beta(Y_t)(\rho dW_{1t}+\sqrt{1-\rho^2}dW_{2t}),\quad t\geq 0,
\end{eqnarray}
where $Y_0=y\in\mathbb{R}$, $|\rho|\leq1$, $\beta(\cdot)\neq0$, and, $b(\cdot)$ is the drift of the process $Y_t$. 
To guarantee that a unique solution to \eqref{SDE2.1} and \eqref{SDE2.2} exists respectively, we assume that 
$\mu(\cdot)$, $\sigma(\cdot)$, $b(\cdot)$ and $\beta(\cdot)$ satisfy the global Lipschitz and linear growth conditions
\begin{eqnarray}
|f(x)-f(y)|\leq K|x-y|,\quad x,y\in\mathbb{R},\nonumber\\
f^2(y)\leq K^2(1+y^2),\quad y\in\mathbb{R},\nonumber
\end{eqnarray}
where $K$ is a positive constant and $f$ represents $\mu,\sigma,b$ and $\beta$.
In addition, we assume that the drift and volatility coefficients satisfy
\begin{eqnarray}\label{M.def}
\frac{(\mu(Y_t)-r)^2}{\sigma^2(Y_t)}\leq M_0,\quad t\geq0,
\end{eqnarray}
for some positive constant $M_0$.

A trading strategy $\phi=(\phi^S_t,\phi^B_t)_{t\geq0}$ is a predictable process representing the holding of $\phi^S_t$ units of the risky asset and $\phi^B_t$ units of the money market instrument at time $t$. The resulting wealth process $X=(X_t)_{t\geq0}$ is assumed to be self-financing that satisfies 
\begin{eqnarray}
X_t:=\phi^S_t S_t+\phi^B_t B_t=X_0+\int_0^t\phi^S_udS_u+\int_0^t\phi^B_udB_u,\quad t\geq 0,
\end{eqnarray}
with $X_0=x\in\mathbb{R}_+$ being the initial capital. 
We can also write $\pi_t:=\frac{\phi^S_t S_t}{X_t}$ as the proportion of wealth invested in the risky asset, and the dynamics of the resulting wealth process $X=X^{x,y,\pi}$ can be rewritten as
\begin{eqnarray}
dX_t=\pi_tX_t\frac{dS_t}{S_t}+(1-\pi_t)X_t\frac{dB_t}{B_t}=[r+(\mu(Y_t)-r)\pi_t]X_tdt+\sigma(Y_t)\pi_tX_tdW_{1t}, \quad t\geq 0.
\end{eqnarray}

In this paper, we allow both short-selling of stocks and borrowing at the interest rate of the risk-free asset, i.e., the proportion $\pi$ takes values on $\mathbb{R}$. Similar to \cite{TZ23} and \cite{WYY23}, we are interested in a new performance measure based on the periodic evaluation of the wealth $(X_t)_{t\geq0}$ on a sequence of deterministic dates $(T_i)_{i\geq 0}$ with $T_0:=0$. For simplicity, we assume $T_i=i\tau$ for $i\geq 0$ for some constant $\tau>0$ such that the portfolio is evaluated every $\tau$ unit of time (e.g. monthly or annually). We adopt the relative wealth performance through the periodic evaluation in a ratio type as in \cite{WYY23} that
\begin{eqnarray}
\frac{X_{T_i}}{(X_{T_{i-1}})^{\gamma}},\quad i\geq 1,
\end{eqnarray}
for some relative performance parameter $\gamma\in(0,1]$. 

Mathematically speaking, our objective is to solve the optimal portfolio problem over an infinite horizon by employing a periodic evaluation of the relative performance defined by
\begin{eqnarray}
\sup_{X\in\mathcal{U}_0(x,y)}\mathbb{E}\left[\sum_{i=1}^{\infty}e^{-\delta T_i}U\left(\frac{X_{T_i}}{(X_{T_{i-1}})^{\gamma}},Y_{T_i}\right)\right],
\end{eqnarray}
where $\delta>0$ is the agent’s subjective discount factor. 

In the present paper, we only focus on  the power utility function $U(x,y)=\frac{1}{\alpha}x^{\alpha}h(y)$ with $\alpha\in(-\infty,0)\cup(0,1)$ and the logarithmic utility function $U(x,y)=\log x+h(y)$ with $\mathbb{R}\ni y\mapsto h(y)\in\mathbb{R}_+$ being a continuous function defined on the factor level $y$. For simplicity, similar to \cite{ZT01}, it is assumed that $m\leq h(y) \leq 1$, $y\in\mathbb{R}$, for some constant $m\in(0,1)$.

Moreover, the set of admissible portfolio processes is defined by
\begin{eqnarray}\label{admissible_set}
\mathcal{U}_0(x,y)
\hspace{-0.3cm}&:=&\hspace{-0.3cm}
\left\{X:X_t=x+\int^t_0\phi^S_udS_u+\int_0^t\phi^B_udB_u>0 \text{ for all } t\geq0, \text{ $\phi$ is predictable, locally}\right.
\nonumber\\
\hspace{-0.3cm}&&\hspace{0cm}
\left.\text{square-integral and self-financing, and
}\sum_{i=1}^{\infty}e^{-\delta T_i}\mathbb{E}\left[\left(U\left(\frac{X_{T_i}}{(X_{T_{i-1}})^{\gamma}},Y_{T_i}\right)\right)_-\right]<\infty \right\},
\nonumber
\end{eqnarray}
with $a_-:=\max\{-a,0\}$, $x\in\mathbb{R}_+$, and $Y_0=y\in\mathbb{R}$. The admissible set refers to the collection of all non-negative self-financing portfolios that can be generated by an initial wealth $x$ and an initial stochastic factor level $y$. Note that the last condition is of the integrability type, which is needed to ensure the wellposedness of the our portfolio optimization problem under the periodic evaluation performance.

\section{Periodic Evaluation under Power Utility}\label{sec:power}

In this section, we study the previous unconventional portfolio optimization problem when the utility is of the power type. In this regard, let us consider 
\begin{align}
\label{def.U}
    U(x,y):=\frac{1}{\alpha}x^{\alpha}{h(y)},\quad (x,y)\in\mathbb{R}_+\times\mathbb{R},
\end{align}
with the risk aversion coefficient $1-\alpha$ that $\alpha\in(-\infty,0)\cup(0,1)$. 

The goal of the agent is to maximize the total sum of the expected utilities on the ratio-type relative portfolio performance over an infinite periods, and the value function is defined by 
\begin{eqnarray}\label{problem}
V(x,y)
\hspace{-0.3cm}&:=&\hspace{-0.3cm}
\sup_{X\in\mathcal{U}_0(x,y)}\mathbb{E}\left[\frac{1}{\alpha}\sum_{i=1}^{\infty}e^{-\delta T_i}\left(\frac{X_{T_i}}{\left(X_{T_{i-1}}\right)^{\gamma}}\right)^{\alpha}h({Y_{T_i}})\right],
\end{eqnarray}
where $\mathcal{U}_0(x,y)$ is given by \eqref{admissible_set}.
By the Markov property of the stochastic factor model, one can easily derive the following dynamic programming principle that
\begin{eqnarray}\label{ddp}
V(x,y)=\sup_{X\in\mathcal{U}_0(x,y)}\mathbb{E}\left[\frac{1}{\alpha}e^{-\delta T_1}\left(\frac{X_{T_1}}{x^{\gamma}}\right)^{\alpha}h({Y_{T_1}})+e^{-\delta T_1}V(X_{T_1},Y_{T_1})\right].
\end{eqnarray}

For the wellposedness of the problem, the following standing assumption is imposed throughout this section.

\begin{Assumption}
\label{ass1}
The model parameters satisfy $\delta>\zeta(\alpha(1-\gamma)){\vee0}$, where the function $\zeta$ is defined by $\zeta(x):=rx+x M_0/2(1-x)$,  $x\in(-\infty,1)$, with $M_0\in(0,\infty)$ being the pre-specified constant satisfying \eqref{M.def}.
\end{Assumption}

The following proposition gives the upper and lower bounds for the value function $V$ defined in \eqref{problem}.

\begin{prop}
It holds that
\begin{eqnarray}
\label{2.12}
\frac{{\left(m\mathbf{1}_{\{0<\alpha<1\}}+\mathbf{1}_{\{\alpha<0\}}\right)}e^{(r\alpha-\delta)\tau}}{\alpha(1-e^{-(\delta-r\alpha(1-\gamma))\tau})}x^{\alpha(1-\gamma)}\leq V(x,y)\leq \frac{{\left(\mathbf{1}_{\{0<\alpha<1\}}+m\mathbf{1}_{\{\alpha<0\}}\right)}e^{(\zeta(\alpha)-\delta)\tau}}{\alpha(1-e^{(\zeta(\alpha(1-\gamma))-\delta)\tau})}x^{\alpha(1-\gamma)},
\end{eqnarray}
where $m\in(0,1)$ is the lower bound of the function $h(\cdot)$ that appears in the utility function $U(\cdot,\cdot)$ given by \eqref{def.U}.
\end{prop}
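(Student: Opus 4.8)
The plan is to establish the two inequalities in \eqref{2.12} separately by exhibiting a concrete admissible strategy for the lower bound and by estimating the one-period performance of an arbitrary admissible strategy for the upper bound. Throughout, the key structural fact is that the periodic objective telescopes geometrically: because each summand carries a discount factor $e^{-\delta T_i}=e^{-\delta i\tau}$ and involves only the ratio $X_{T_i}/X_{T_{i-1}}^\gamma$, the whole problem decouples into one-period subproblems whose values are controlled by the growth exponents $\zeta(\alpha)$ and $\zeta(\alpha(1-\gamma))$. Assumption \ref{ass1}, which forces $\delta>\zeta(\alpha(1-\gamma))\vee 0$, is exactly what makes the resulting geometric series $\sum_{i\ge1}e^{(\zeta(\alpha(1-\gamma))-\delta)i\tau}$ converge, so both bounds are finite.

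For the \emph{lower bound}, I would take the simplest admissible candidate, namely the pure risk-free investment $\pi\equiv 0$, so that $X_{T_i}=x e^{rT_i}=x e^{ri\tau}$ deterministically. Then $X_{T_i}/X_{T_{i-1}}^{\gamma}=x^{1-\gamma}e^{r\tau}e^{r(i-1)\tau(1-\gamma)}\cdot(\text{bookkeeping powers of }e^{r\tau})$; carrying the exponents carefully gives $(X_{T_i}/X_{T_{i-1}}^\gamma)^\alpha = x^{\alpha(1-\gamma)}e^{r\alpha\tau}e^{r\alpha(1-\gamma)(i-1)\tau}$. Plugging into \eqref{problem} and using $h(y)\ge m$ when $0<\alpha<1$ (so that $\frac1\alpha(\cdot)h(y)\ge \frac{m}{\alpha}(\cdot)$) and $h(y)\ge m$ together with $\frac1\alpha<0$ when $\alpha<0$ (so the inequality flips and one keeps the coefficient $1$, i.e. uses $h(y)\le 1$) yields the indicator combination $m\mathbf{1}_{\{0<\alpha<1\}}+\mathbf{1}_{\{\alpha<0\}}$. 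Summing the geometric series $\sum_{i\ge1}e^{-\delta i\tau}e^{r\alpha\tau}e^{r\alpha(1-\gamma)(i-1)\tau}$ gives precisely $\frac{e^{(r\alpha-\delta)\tau}}{1-e^{-(\delta-r\alpha(1-\gamma))\tau}}$, and one must check this admissible $X$ indeed lies in $\mathcal{U}_0(x,y)$, which is immediate since the negative-part sum is a convergent geometric series. This establishes the left inequality.

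For the \emph{upper bound}, I would fix an arbitrary $X\in\mathcal{U}_0(x,y)$ and bound each term $\mathbb{E}\big[\frac1\alpha e^{-\delta T_i}(X_{T_i}/X_{T_{i-1}}^\gamma)^\alpha h(Y_{T_i})\big]$. The natural tool is the exponential supermartingale/Girsanov estimate built from the market price of risk $\theta(Y_t):=(\mu(Y_t)-r)/\sigma(Y_t)$, which is bounded by $\sqrt{M_0}$ via \eqref{M.def}. Concretely, for any admissible wealth process, $X_t^{\lambda}e^{-\lambda r t}\cdot(\text{correction})$ is a supermartingale whenever one inserts the optimal constant-exposure comparison, giving $\mathbb{E}[X_{T}^{\lambda}\,|\,\mathcal{F}_0]\le x^{\lambda}e^{\zeta(\lambda)T}$ for $\lambda\in(0,1)$ — this is the standard bound that the power-$\lambda$ moment of any admissible wealth grows at rate at most $\zeta(\lambda)$. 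Applying this with $\lambda=\alpha$ on the innermost increment $X_{T_1}^\alpha$ (conditioning forward), and then iterating with $\lambda=\alpha(1-\gamma)$ on the residual wealth $X_{T_{i-1}}^{\alpha(1-\gamma)}$ that is passed from one period to the next, one telescopes to obtain $\mathbb{E}[(X_{T_i}/X_{T_{i-1}}^\gamma)^\alpha h(Y_{T_i})]\le (\mathbf{1}_{\{0<\alpha<1\}}+m\mathbf{1}_{\{\alpha<0\}})\,x^{\alpha(1-\gamma)}e^{\zeta(\alpha)\tau}e^{\zeta(\alpha(1-\gamma))(i-1)\tau}$ after dividing by $\alpha$ and tracking the sign (for $\alpha<0$, $\frac1\alpha<0$ flips the estimate and $h\le1$ is replaced by $h\ge m$, producing the coefficient $m$; for $0<\alpha<1$ one uses $h\le1$ and gets coefficient $1$). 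Summing $\sum_{i\ge1}e^{-\delta i\tau}e^{\zeta(\alpha)\tau}e^{\zeta(\alpha(1-\gamma))(i-1)\tau}$ and using Assumption \ref{ass1} for convergence gives the claimed right-hand side $\frac{(\mathbf{1}_{\{0<\alpha<1\}}+m\mathbf{1}_{\{\alpha<0\}})e^{(\zeta(\alpha)-\delta)\tau}}{\alpha(1-e^{(\zeta(\alpha(1-\gamma))-\delta)\tau})}$.

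The main obstacle is the upper bound, specifically the a priori moment estimate $\mathbb{E}[X_T^\lambda]\le x^\lambda e^{\zeta(\lambda)T}$ for $\lambda=\alpha$ and $\lambda=\alpha(1-\gamma)$ over \emph{all} admissible (not just Markovian) strategies, and then assembling the nested/telescoping conditional expectations in the right order without losing the indicator bookkeeping — one has to be careful that $\alpha(1-\gamma)$ and $\alpha$ can lie on the same side of $0$ but the sign of $1/\alpha$ (hence the direction in which $m\le h\le 1$ is applied) differs between the $\alpha>0$ and $\alpha<0$ regimes, which is exactly why the two indicator combinations in \eqref{2.12} are not symmetric. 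The sub-case $\alpha<0$ additionally requires justifying that the relevant supermartingale argument still yields a \emph{lower} bound on the negative power, which is the technically delicate point; all the remaining steps are routine geometric-series summation controlled by Assumption \ref{ass1}.
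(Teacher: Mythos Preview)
Your proposal is correct and follows essentially the same route as the paper: the risk-free portfolio $\pi\equiv 0$ for the lower bound, and for the upper bound the conditioning decomposition $\mathbb{E}\big[(X_{T_i}/X_{T_{i-1}})^\alpha\,\big|\,\mathcal{F}_{T_{i-1}}\big]\cdot X_{T_{i-1}}^{\alpha(1-\gamma)}$ followed by the moment estimate $\sup_X \mathbb{E}[X_T^\lambda\mid\mathcal{F}_0]\le x^\lambda e^{\zeta(\lambda)T}$ applied with $\lambda=\alpha$ and $\lambda=\alpha(1-\gamma)$, then summing the geometric series via Assumption~\ref{ass1}. The only difference is that the paper invokes Proposition~3.1 of \cite{ZT01} for this moment bound rather than re-deriving it from the Girsanov/supermartingale argument you sketch, and it writes out only the case $0<\alpha<1$ explicitly, declaring the case $\alpha<0$ analogous.
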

\begin{proof}
We first assume that $\alpha\in(0,1)$. The lower bound of $V(x,y)$ can be derived by noting that $X=(xe^{rt})_{t\geq 0}$ is an admissible portfolio in $\mathcal{U}_0(x,y)$.
It is sufficient to derive the upper bound. For any $X\in\mathcal{U}_0(x,y)$ and $n\geq1$, using the upper bound of $h(\cdot)$, one has
\begin{eqnarray}
\label{3.11.v0}
\mathbb{E}\left[\sum_{i=1}^{n}e^{-\delta T_i}\frac{1}{\alpha}\bigg(\frac{X_{T_i}}{X_{T_{i-1}}^{\gamma}}\bigg)^{\alpha}h(Y_{T_i})\right]
\hspace{-0.3cm}&=&\hspace{-0.3cm}
\sum_{i=1}^{n}e^{-\delta T_{i}}\mathbb{E}\left[\frac{1}{\alpha}\bigg(\frac{X_{T_i}}{X_{T_{i-1}}}\bigg)^{\alpha}X_{T_{i-1}}^{\alpha(1-\gamma)}{h(Y_{T_{i}})}\right]
\nonumber\\
\hspace{-0.3cm}&=&\hspace{-0.3cm}
\sum_{i=1}^{n}e^{-\delta T_{i}}\mathbb{E}\left[\mathbb{E}\left[\frac{1}{\alpha}\bigg(\frac{X_{T_i}}{X_{T_{i-1}}}\bigg)^{\alpha}X_{T_{i-1}}^{\alpha(1-\gamma)}{h(Y_{T_{i}})}\bigg|\mathcal{F}_{T_{i-1}}\right]\right]
\nonumber\\
\hspace{-0.3cm}&\leq&\hspace{-0.3cm}
\sum_{i=1}^{n}e^{-\delta T_{i}}\mathbb{E}\left[\mathbb{E}\left[\frac{1}{\alpha}\bigg(\frac{X_{T_i}}{X_{T_{i-1}}}\bigg)^{\alpha}\bigg|\mathcal{F}_{T_{i-1}}\right]X_{T_{i-1}}^{\alpha(1-\gamma)}\right]
\nonumber\\
\hspace{-0.3cm}&\leq&\hspace{-0.3cm}
\sum_{i=1}^{n}e^{-\delta T_{i}}
\mathbb{E}\left[\left(\sup_{X\in\mathcal{U}_{T_{i-1}}(1,Y_{T_{i-1}})}\mathbb{E}\left[\frac{1}{\alpha}X_{T_i}^{\alpha}|\mathcal{F}_{T_{i-1}}\right]\right)X_{T_{i-1}}^{\alpha(1-\gamma)}\right].
%\nonumber\\
%\hspace{-0.3cm}&\leq&\hspace{-0.3cm}
%\sum_{i=1}^{n}e^{-\delta T_{i}}\frac{1}{\alpha}\sup_{X\in\mathcal{U}_0(1,y)}\mathbb{E}\left[X_{T_i}^{\alpha}|\mathcal{F}_{T_{i-1}}\right]\sup_{X\in\mathcal{U}_0(x,y)}\mathbb{E}\left[X_{T_{i-1}}^{\alpha(1-\gamma)}\right].
\end{eqnarray}
%Notice that $\sup_{X\in\mathcal{U}_0(1)}\mathbb{E}\left[X_{T_1}^{\alpha}\right]$ is the value function of a finite horizon Merton investment problem with the power function, maturity $T_{1}$ and initial wealth of 1. The solution is known as 
Note that $\sup_{X\in\mathcal{U}_{T_{i-1}}(1,Y_{T_{i-1}})}\mathbb{E}[X^{\alpha}_{T_i}|\mathcal{F}_{T_{i-1}}]$ is the value function of the investment problem considered in \cite{ZT01} with the special choice that $h(\cdot)\equiv1$.
Using Proposition 3.1 of \cite{ZT01}, we have 
\begin{eqnarray}\label{inv.pro}
\sup_{X\in\mathcal{U}_{T_{i-1}}(1,Y_{T_{i-1}})}\mathbb{E}\left[\frac{1}{\alpha}X_{T_i}^{\alpha}\bigg|\mathcal{F}_{T_{i-1}}\right]\leq\frac{1}{\alpha} e^{\zeta(\alpha) \tau}.
\end{eqnarray}
Similarly, it holds that
\begin{eqnarray}
\label{3.13.v0}
\sup_{X\in\mathcal{U}_0(x,y)}\mathbb{E}\left[\frac{1}{\alpha(1-\gamma)}X_{T_{i-1}}^{\alpha(1-\gamma)}\right]\leq \frac{x^{\alpha(1-\gamma)}}{\alpha(1-\gamma)}e^{\zeta(\alpha(1-\gamma))(i-1)\tau}.
\end{eqnarray}
Combining \eqref{3.11.v0}-\eqref{3.13.v0}, we obtain that
\begin{eqnarray}
\mathbb{E}\left[\frac{1}{\alpha}\sum_{i=1}^{n}e^{-\delta T_i}\bigg(\frac{X_{T_i}}{X_{T_{i-1}}^{\gamma}}\bigg)^{\alpha}h(Y_{T_i})\right]
\hspace{-0.3cm}&\leq&\hspace{-0.3cm}
\sum_{i=1}^{n}e^{-\delta T_{i}}
\frac{1}{\alpha}\mathbb{E}\left[\left(\sup_{X\in\mathcal{U}_{T_{i-1}}(1,Y_{T_{i-1}})}\mathbb{E}\left[X_{T_i}^{\alpha}|\mathcal{F}_{T_{i-1}}\right]\right)X_{T_{i-1}}^{\alpha(1-\gamma)}\right]
\nonumber\\
\hspace{-0.3cm}&\leq&\hspace{-0.3cm}
\sum_{i=1}^{n}e^{-\delta T_{i}}e^{\zeta(\alpha)\tau}\sup_{X\in\mathcal{U}_0(x,y)}\mathbb{E}\left[\frac{1}{\alpha}X_{T_{i-1}}^{\alpha(1-\gamma)}\right]
\nonumber\\
\hspace{-0.3cm}&\leq&\hspace{-0.3cm}
%\frac{e^{(\zeta(\alpha)-\delta)\tau}}{\alpha(1-e^{(\zeta(\alpha(1-\gamma))-\delta)\tau})}x^{\alpha(1-\gamma)}
\sum_{i=1}^{n}e^{-\delta T_{i}}\frac{1}{\alpha}e^{\zeta(\alpha)\tau}\times x^{\alpha(1-\gamma)}e^{\zeta(\alpha(1-\gamma))(i-1)\tau}.
\end{eqnarray}
The desired upper bound in \eqref{2.12} follows after sending $n\rightarrow\infty$ on both sides.
The proof for the case $\alpha\in(-\infty,0)$ can be easily modified, and it is hence omitted.  
\end{proof}

In view of the scaling property of the utility function $U(x,y)=x^{\alpha}U(1,y)$ and the fact that the value function $V(x,y)$ is bounded with respect to $y$, we heuristically conjecture that our value function admits the form of $V(x,y)={\frac{1}{\alpha}}A^*(y)x^{\alpha(1-\gamma)}$ for some continuous, bounded, and non-negative function $A^*(\cdot)$. Substituting this form of $V$ into \eqref{ddp} and then dividing both sides by ${\frac{1}{\alpha}x^{\alpha(1-\gamma)}}$, one obtains
\begin{eqnarray}\label{A*.def}
A^*(y)
\hspace{-0.3cm}&=&\hspace{-0.3cm}
\alpha\sup_{X\in\mathcal{U}_0(x,y)}\mathbb{E}\left[e^{-\delta T_1}\frac{1}{\alpha}\left(\frac{X_{T_1}}{x}\right)^{\alpha}h({Y_{\tau}})+e^{-\delta T_1}\frac{1}{\alpha}A^*(Y_{T_1})\left(\frac{X_{T_1}}{x}\right)^{\alpha(1-\gamma)}\right]
\nonumber\\
\hspace{-0.3cm}&=&\hspace{-0.3cm}
\alpha\sup_{X\in\mathcal{U}_0(1,y)}\mathbb{E}\left[e^{-\delta \tau}\frac{1}{\alpha}X_{\tau}^{\alpha}h({Y_{\tau}})+e^{-\delta \tau}\frac{1}{\alpha}A^*(Y_{\tau})X_{\tau}^{\alpha(1-\gamma)}\right].
\end{eqnarray}
Hence, in the case of power utility, the characterization of $V$ now simplifies to the characterization of the unknown, non-negative, continuous, and bounded function $A^*(\cdot)$. More importantly, the unknown function $A^*(\cdot)$ will be proven as the the fixed point of a contraction operator defined on the function space ${C}_b^{+}(\mathbb{R})$ consisting of all continuous, bounded, and non-negative functions on $\mathbb{R}$. To this end, we will first establish the existence of the optimizer to the auxiliary one-period optimization problem \eqref{A*.def} for a fixed $A^*$, and then show the existence of the unique fixed point $A^*$ to the operator.

In the presence of stochastic factor, we follow \cite{CH05} to employ the convex duality approach and formulate the associated dual control problem using the idea of the market completion. To this end, let us first introduce some notations. We consider
$$\theta(y):=\frac{\mu(y)-r}{\sigma(y)},\quad y\in\mathbb{R}.$$
Denote by $\mathcal{M}$ the set of progressively measurable processes $(\eta_t)_{t\geq0}$
such that $\mathbb{E}[\int_0^{\tau}\eta_s^2ds]<\infty$ and the local martingale by
\begin{eqnarray}\label{Z.process}
Z_t^{\eta}:=\exp\left(-\int_0^t[\theta(Y_s)dW_{1s}-\eta_sdW_{2s}]-\frac{1}{2}\int_0^t[\theta^2(Y_s)+\eta^2_s]ds\right),\quad t\in[0,\tau],
\end{eqnarray}
is a true martingale for any $y\in\mathbb{R}$. In this vein, for each $\eta\in\mathcal{M}$, we can defined a new probability measure $\mathbb{P}^{\eta}$ by
$d\mathbb{P}^{\eta}:=Z_{\tau}^{\eta}d\mathbb{P}$.  By the Girsanov's theorem, the two-dimensional process $W^{\eta}=(W^{\eta}_{1t},W^{\eta}_{2t})_{t\in[0,\tau]}$ with
$$W_{1t}^{\eta}:=W_{1t}+\int_0^{t}\theta(Y_s)ds\quad\text{and}\quad W_{2t}^{\eta}:=W_{2t}-\int_0^{t}\eta_sds,$$
is a two-dimensional Brownian motion under the measure $\mathbb{P}^{\eta}$. In addition, the dynamics of the two processes $(Y_t)_{t\geq 0}$ and $(Z_{t}^{\eta})_{t\geq 0}$ can be rewritten as
\begin{align}
    dY_t&=\left(b(Y_t)-\beta(Y_t)(\rho\theta(Y_t)-\sqrt{1-\rho^2} \eta_t)\right)dt + \beta(Y_t)(\rho dW_{1t}^{\eta}+\sqrt{1-\rho^2}dW_{2t}^{\eta}),\quad t\geq 0,\nonumber\\
dZ_t^{\eta}&=Z_t^{\eta}\left(\left(\theta^2(Y_t)+\eta^2_{t}\right)dt-\theta(Y_t)dW_{1t}^{\eta}+\eta_tdW_{2t}^{\eta}\right),\quad t\geq 0.
\label{z.process}
\end{align}
One can also verify that 
\begin{eqnarray}\label{dX/B}
d\left[\frac{X_t}{B_t}\right]=\frac{X_t}{B_t}\pi_t\sigma(Y_t)dW^{\eta}_{1t}, \quad t\geq 0.
\end{eqnarray}
%d\left[\frac{S_t}{B_t}\right]=\frac{S_t}{B_t}\sigma(Y_t)dW^{\eta}_{1t}\text{ and }
Additionally, an application of It\^o's formula to the product of processes $Z^{\eta}$ and $X/B$ yields that
\begin{eqnarray}\label{supermaringale}
\frac{X_tZ_t^{\eta}}{B_t}=x+\int_0^t\frac{X_sZ_s^{\eta}}{B_s}\left[(\pi_s\sigma(Y_s)-\theta(Y_s))dW_{1s}+\eta_sdW_{2s}\right], \quad t\geq 0.
\end{eqnarray}
If $(X_{t})_{t\geq 0}$ is an admissible portfolio process, then, by \eqref{supermaringale}, the process $XZ^{\eta}/B$ is a non-negative $\mathbb{P}$-local martingale, and hence is a $\mathbb{P}$-supermartingale.

Furthermore, let us consider the modified utility function
$\mathbb{R}_+\times\mathbb{R}\ni (x,y)\mapsto h_A(x,y)\in \mathbb{R}$ by
\begin{eqnarray}\label{2.8}
h_A(x,y)
:=\frac{1}{\alpha}x^{\alpha}h(y)+{\frac{1}{\alpha}}A(y)x^{\alpha(1-\gamma)},\quad (x,y)\in \mathbb{R}_+\times\mathbb{R},
\end{eqnarray}
where $A(\cdot)\in C^+_b(\mathbb{R})$ is viewed as a parameter of the function $h_A$.  As a preparation for the main result, we first derive some preliminary properties of the function $h_A$, which plays a pivotal role in later proofs.
\begin{lem}
\label{lem2.1}
The function $h_A(x,y)$ defined by \eqref{2.8} is strictly increasing and strictly concave in $x\in\mathbb{R}_+$, and it holds that $\frac{\partial}{\partial x}h_A(0+,y)=\infty$ and $\frac{\partial}{\partial x}h_A(\infty,y)=0$ for any $y\in\mathbb{R}$. 
In addition, there exist some constants $\vartheta\in(0,1)$ and $\varrho\in(1,\infty)$ such that 
\begin{eqnarray}\label{h_a'>h_a'}
\vartheta \frac{\partial}{\partial x}h_A(x,y)\geq \frac{\partial}{\partial x}h_A(\varrho x,y),\quad (x,y)\in\mathbb{R}_+\times\mathbb{R}.
\end{eqnarray} 
Furthermore, when $\alpha\in(0,1)$, there exist some constants $\kappa_1\in(0,\infty)$ and $\rho_1\in(0,1)$ such that
\begin{eqnarray}\label{0<h}
0< h_A(x,y)\leq \kappa_1(1+x^{\rho_1}),\quad (x,y)\in\mathbb{R}_+\times\mathbb{R};
\end{eqnarray}
when $\alpha\in(-\infty,0)$, there exist some constants $\kappa_2\in(-\infty,0)$ and $\rho_2\in(-\infty,0)$ such that
\begin{eqnarray}\label{0>h}
0>h_A(x,y)\geq \kappa_2(1+x^{\rho_2}),\quad (x,y)\in\mathbb{R}_+\times\mathbb{R}.
\end{eqnarray}
\end{lem}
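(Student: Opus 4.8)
The plan is to reduce the whole statement to elementary calculus on the two power terms of $h_A$, exploiting two structural facts: after one differentiation in $x$ the exponents $\alpha-1$ and $\alpha(1-\gamma)-1$ are both strictly negative (because $\alpha<1$ and $\gamma\in(0,1]$), and the coefficients are controlled because $m\le h\le1$ while $A\in C_b^+(\mathbb{R})$ is nonnegative and bounded, say $0\le A(y)\le\|A\|_\infty<\infty$.

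First I would differentiate \eqref{2.8} termwise: the factor $1/\alpha$ cancels and one gets $\partial_x h_A(x,y)=x^{\alpha-1}h(y)+(1-\gamma)A(y)x^{\alpha(1-\gamma)-1}\ge m x^{\alpha-1}>0$, which is strict monotonicity, and $\partial_{xx} h_A(x,y)=(\alpha-1)x^{\alpha-2}h(y)+(1-\gamma)(\alpha(1-\gamma)-1)A(y)x^{\alpha(1-\gamma)-2}<0$ since $\alpha-1<0$, $\alpha(1-\gamma)-1<0$, and the remaining factors are nonnegative with the first strictly positive, which is strict concavity. The Inada conditions follow by inspection of $\partial_x h_A$: as $x\downarrow0$ the term $x^{\alpha-1}h(y)\ge mx^{\alpha-1}\to\infty$, and as $x\uparrow\infty$ both $x^{\alpha-1}$ and $x^{\alpha(1-\gamma)-1}$ vanish (when $\gamma=1$ the second term is simply absent). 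For \eqref{h_a'>h_a'} I would fix any $\varrho>1$; then $\partial_x h_A(\varrho x,y)=\varrho^{\alpha-1}x^{\alpha-1}h(y)+\varrho^{\alpha(1-\gamma)-1}(1-\gamma)A(y)x^{\alpha(1-\gamma)-1}$, and since $\varrho^{\alpha-1},\varrho^{\alpha(1-\gamma)-1}\in(0,1)$, taking $\vartheta:=\varrho^{\alpha-1}\vee\varrho^{\alpha(1-\gamma)-1}\in(0,1)$ and bounding each summand yields $\partial_x h_A(\varrho x,y)\le\vartheta\,\partial_x h_A(x,y)$ for every $(x,y)$.

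It remains to prove the growth bounds. For $\alpha\in(0,1)$ positivity is immediate since $1/\alpha>0$, $x^\alpha h(y)>0$ and $A(y)x^{\alpha(1-\gamma)}\ge0$; using $h\le1$ and $A\le\|A\|_\infty$ one gets $h_A(x,y)\le\alpha^{-1}x^\alpha+\alpha^{-1}\|A\|_\infty x^{\alpha(1-\gamma)}$, and because $0\le\alpha(1-\gamma)\le\alpha<1$ both powers are $\le1+x^\alpha$, giving \eqref{0<h} with $\rho_1=\alpha$ and $\kappa_1=(1+\|A\|_\infty)/\alpha$. For $\alpha<0$, $h_A<0$ because $1/\alpha<0$ multiplies the positive quantity $x^\alpha h(y)+A(y)x^{\alpha(1-\gamma)}$; for the lower bound, $h\le1$ and $A\le\|A\|_\infty$ together with $1/\alpha<0$ give $h_A(x,y)\ge\alpha^{-1}x^\alpha+\alpha^{-1}\|A\|_\infty x^{\alpha(1-\gamma)}$, and I would split into $x\le1$ (where $\alpha(1-\gamma)\ge\alpha$ forces $x^{\alpha(1-\gamma)}\le x^\alpha$) and $x\ge1$ (where $x^\alpha\le1$ and $x^{\alpha(1-\gamma)}\le1$); in both regimes one checks directly that $h_A(x,y)\ge\kappa_2(1+x^\alpha)$ with $\kappa_2=(1+\|A\|_\infty)/\alpha<0$ and $\rho_2=\alpha<0$, which is \eqref{0>h}.

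Nothing here is deep; the only points needing care are the sign bookkeeping in the case $\alpha<0$, where the factor $1/\alpha<0$ reverses every inequality coming from $m\le h\le1$ and $0\le A\le\|A\|_\infty$, and the degenerate subcase $\gamma=1$, in which the second power term collapses to the bounded constant $A(y)/\alpha$ and all the asserted properties continue to hold a fortiori. The exponents $\rho_1=\rho_2=\alpha$ and the constant $\vartheta$ are forced by matching, respectively, the slowest-decaying and the most singular power term appearing in $h_A$ and in $\partial_x h_A$.
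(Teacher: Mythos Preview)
Your proof is correct and follows essentially the same approach as the paper's own argument: both compute $\partial_x h_A$ and $\partial_{xx}h_A$ explicitly, choose $\vartheta=\varrho^{\alpha-1}\vee\varrho^{\alpha(1-\gamma)-1}$ for the scaling inequality, and take $\rho_1=\rho_2=\alpha$ for the growth bounds. The only cosmetic difference is in the constants---the paper sets $\kappa_j=\tfrac{2}{\alpha}\max\{1,\sup_y A(y)\}$ whereas you take $\kappa_j=(1+\|A\|_\infty)/\alpha$---but both choices work for the same reason.
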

\begin{proof}
The results follow from elementary calculus. Recall that the function $h(\cdot)\in[m,1]$ with $m\in(0,1)$ and $A(\cdot)\in {C}_b^{+}(\mathbb{R})$. Differentiating twice the both sides of \eqref{2.8} gives
\begin{eqnarray}
\label{2.28.v0}
\frac{\partial}{\partial x}h_A(x,y)= x^{\alpha-1}{h(y)}+A(y)(1-\gamma)x^{\alpha(1-\gamma)-1},\quad (x,y)\in\mathbb{R}_+\times\mathbb{R},
\end{eqnarray}
and
\begin{eqnarray}
\frac{\partial^2}{\partial x^2}h_A(x,y)=(\alpha-1) x^{\alpha-2}{h(y)}+A(y)(1-\gamma)(\alpha-1-\alpha\gamma)x^{\alpha(1-\gamma)-2},\quad (x,y)\in\mathbb{R}_+\times\mathbb{R}.
\end{eqnarray}
Due to the facts of $\gamma\in(0,1]$ and {$\alpha\in(-\infty,0)\cup(0,1)$}, we have $\frac{\partial}{\partial x}h_A(x,y)>0$ and $\frac{\partial^2}{\partial x^2}h_A(x,y)<0$ for any $(x,y)\in\mathbb{R}_+\times\mathbb{R}$, implying that the function $h_A(x,y)$ is strictly concave and strictly increasing  with respect to $x$ on $\mathbb{R}_+$. Furthermore, by \eqref{2.28.v0}, one can easily get $\frac{\partial}{\partial x}h_A(0+,y)=\infty$ and $\frac{\partial}{\partial x}h_A(\infty,y)=0$ for any $y\in\mathbb{R}$. To prove the second claim, for any constant $\varrho\in(1,\infty)$, one can take $\vartheta=\varrho^{\alpha-1}\vee \varrho^{\alpha(1-\gamma)-1}\in(0,1)$. Then 
\begin{align}
    \vartheta \frac{\partial}{\partial x}h_A(x,y)&=\vartheta x^{\alpha-1}h(y)+A(y)(1-\gamma)\vartheta x^{\alpha(1-\gamma)-1}
    \nonumber\\
    &\geq (\varrho x)^{\alpha-1}h(y)+A(y)(1-\gamma)(\varrho x)^{\alpha(1-\gamma)-1}
    \nonumber\\
    &= \frac{\partial}{\partial x}h_A(\varrho x,y),\quad (x,y)\in\mathbb{R}_+\times\mathbb{R}.\nonumber
\end{align}
Finally, when $\alpha\in(0,1)$, taking $\kappa_1=\frac{2}{\alpha}\max\{1,\sup_{y\in\mathbb{R}}A(y)\}\in(0,
\infty)$ and $\rho_1=\alpha\in(0,1)$ yields \eqref{0<h}; when $\alpha\in(-\infty,0)$, taking $\kappa_2=\frac{2}{\alpha}\max\{1,\sup_{y\in\mathbb{R}}A(y)\}\in(-\infty,0)$ and $\rho_2=\alpha\in(-\infty,0)$ yields \eqref{0>h}.
The proof is complete.
\end{proof}

Using similar methods as those used in the proof of Lemma 2.2 of \cite{CH05}, we can now state the original problem \eqref{A*.def} equivalently as
\begin{eqnarray}\label{problem2}
A^*(y)=\Psi(A^*;y),\quad y\in\mathbb{R},
\end{eqnarray}
%Consider a family of optimization problems consisting of
where the functional ${C}_b^{+}(\mathbb{R})\ni A(\cdot)\mapsto \Psi(A;\cdot)\in {C}_b^{+}(\mathbb{R})$ is defined as
\begin{eqnarray}\label{problem3}
\Psi(A):=\Psi(A;y)\hspace{-0.3cm}&:=&\hspace{-0.3cm}
\alpha e^{-\delta\tau}\sup_{X\in\mathcal{U}_0(1,y)}\mathbb{E}[h_A(X_{\tau},Y_{\tau})]
\nonumber\\
\hspace{-0.3cm}&=&\hspace{-0.3cm}
\alpha e^{-\delta\tau}\sup_{X\in\Tilde{\mathcal{U}}_{0,\tau}(1,y)}\mathbb{E}\left[\frac{1}{\alpha}X_{\tau}^{\alpha}{h(Y_{\tau})}+{\frac{1}{\alpha}}A(Y_{\tau})X_{\tau}^{\alpha(1-\gamma)}\right],
\end{eqnarray}
and
\begin{eqnarray}\label{tilde.U.power}
\Tilde{\mathcal{U}}_{s,t}(x,y):=\left\{X\in\mathcal{F}^+_{t}:\sup_{\eta\in\mathcal{M}}\mathbb{E}\left[\frac{Z^{\eta}_t/B_{t}}{Z^{\eta}_s/{B_s}}X\Big|\mathcal{F}_s\right]\leq x\right\},\quad 0\leq s\leq t<\infty,
\end{eqnarray}
with $(Z^{\eta}_t)_{t\geq 0}$ defined by \eqref{z.process} and $Y_0=y$.
We aim to show that there indeed exists a unique non-negative $A^*(\cdot)\in {C}_b^{+}(\mathbb{R})$ solving equation \eqref{problem2} and then to formally prove that $\frac{1}{\alpha}A^*(y)x^{\alpha(1-\gamma)}$ coincides with the value function $V(\cdot,\cdot)$ via the verification theorem.

We next use techniques from the convex optimization theory to pose the dual problem associated with the primal problem \eqref{problem3}. For a given function $f$, let us denote its Legendre-Fenchel transform by
\begin{eqnarray}\label{Phi*}
\Phi_f(y):=\sup_{x\geq0}(f(x)-yx),\quad y\in\mathbb{R}_+.\nonumber
\end{eqnarray}
Provided that $f$ is continuous and concave with $f^{\prime}(\infty)=0$, the maximizer attaining the supremum always exists (although not necessarily unique), which is denoted by $x_f^*(y):=\arg\max_{x\geq0}(f(x)-yx)$. It follows that $\Phi_f(y)=f(x_f^*(y))-yx_f^*(y)$.

By Lemma \ref{lem2.1}, for fixed $y\in\mathbb{R}$, one can define the inverse function of $\frac{\partial}{\partial x}h_A(x,y)$ by $\mathbb{R}_+\ni x\mapsto I(x,y)\in\mathbb{R}_+$, which is a strictly decreasing function.
Let 
$\Phi_{h_A}(u,y)=\sup_{x\geq0}\{h_A(x,y)-xu\}$ be the Legendre-Fenchel transform of the concave function $h_A(x,y)$. It is well known that
\begin{eqnarray}\label{phi(y)=h}
h_A(x^*_{h_A}(u,y),y)-x^*_{h_A}(u,y)u=h_A(I(u,y),y)-I(u,y)u,
\quad u\in\mathbb{R}_+,
\end{eqnarray}
as well as when $\alpha\in(0,1)\, (\alpha\in(-\infty,0),\,\text{\text{resp.}})$
\begin{eqnarray}
\label{phi(0)}
\Phi_{h_A}(0,y)
\hspace{-0.3cm}&:=&\hspace{-0.3cm}
\lim_{u\rightarrow0+}\Phi_{h_A}(u,y)=h_A(\infty,y)=\infty\, (0,\,\text{\text{resp.}}),\\
\label{phi(infty)}
\Phi_{h_A}(\infty,y)
\hspace{-0.3cm}&:=&\hspace{-0.3cm}
\lim_{u\rightarrow\infty}\Phi_{h_A}(u,y)=h_A(0,y)=0\, (-\infty,\,\text{\text{resp.}}),
\end{eqnarray}
and 
\begin{eqnarray}\label{Phi'}
\frac{\partial}{\partial u}\Phi_{h_A}(u,y)=-x^*_{h_A}(u,y),\quad \frac{\partial^2}{\partial u^2}\Phi_{h_A}(u,y)=-\frac{\partial}{\partial u}x_{h_A}^{*}(u,y),\quad (u,y)\in\mathbb{R}_+\times\mathbb{R}.
\end{eqnarray}
In particular, the function $\Phi_{h_A}(\cdot,y)$ is a strictly decreasing, strictly convex and twice differentiable function with respect to the first argument.

Following \cite{CH05}, for fixed $y\in\mathbb{R}$, the associated dual problem to the primal problem \eqref{problem3} is defined as
\begin{align}
\label{dual.pro}
\Tilde{V}(\lambda,y)
&:=
\inf_{\eta\in\mathcal{M}}\left\{\sup_{X\in\mathcal{F}_{\tau}^+}\mathbb{E}\left[\frac{1}{\alpha}X_{\tau}^{\alpha}{h(Y_{\tau})}+\frac{1}{\alpha}A(Y_{\tau})X_{\tau}^{\alpha(1-\gamma)}-\lambda\frac{X_{\tau}Z_{\tau}^{\eta}}{B_{\tau}}\right]\right\},
\nonumber\\
&=\inf_{\eta\in\mathcal{M}}\mathbb{E}\left[\Phi_{h_A}\left(\lambda\frac{Z^{\eta}_{\tau}}{B_{\tau}},Y_{\tau}\right)\right]
\nonumber\\
&=:\inf_{\eta\in\mathcal{M}}L(\eta,\lambda),\quad \lambda\in \mathbb{R}_+,
\end{align}
where $\mathcal{F}_{\tau}^{+}$ is the set of non-negative $\mathcal{F}_{\tau}$-measurable random variables.

One can easily verify the weak duality between the primal problem \eqref{problem3} and the dual problem \eqref{dual.pro} that
\begin{eqnarray}\label{relationship}
\inf_{\lambda>0}\left\{\Tilde{V}(\lambda,y)+\lambda\right\}
\hspace{-0.3cm}&\geq&\hspace{-0.3cm}
\sup_{X\in\Tilde{\mathcal{U}}_{0,\tau}(1,y)}\mathbb{E}\left[\frac{1}{\alpha}X_{\tau}^{\alpha}h(Y_{\tau})+\frac{1}{\alpha}A(Y_{\tau})X_{\tau}^{\alpha(1-\gamma)}\right].
\end{eqnarray}
We say that there is no duality gap when the equality holds.

In the following Proposition \ref{prop2.1}, given the existence of an optimal solution to the dual problem \eqref{dual.pro}, we can present the relationship between the optimal solutions to dual problem \eqref{dual.pro} and the associated primal problem \eqref{problem3}.

\begin{prop}\label{prop2.1}
Assume that for $\alpha\in(-\infty,0)\cup(0,1)$, there exists an optimal solution to dual problem \eqref{dual.pro}. Let $(\eta^*,\lambda^*)$ be any point in $\mathcal{M}\times\mathbb{R}_+$ and
$X^*:=x_{h_A}^*(\lambda^*\frac{Z^{\eta^*}_{\tau}}{B_{\tau}},Y_{\tau})$.
If it holds that
\begin{eqnarray}\label{E=1}
X^*\in\Tilde{\mathcal{U}}_{0,\tau}(1,y)\quad\text{and}\quad\mathbb{E}\left[X^*\frac{Z^{\eta^*}_{\tau}}{B_{\tau}}\right]=1,
\end{eqnarray}
then, $X^*$ is the optimal solution to the primal problem \eqref{problem3}; $\eta^*$ is the optimal solution to the dual problem \eqref{dual.pro} and $\lambda^*=\arg\min_{\lambda>0}(\Tilde{V}(\lambda,y)+\lambda)$. In particular, there is no duality gap.
Conversely, if $\eta^*$ is the optimal solution to the dual problem \eqref{dual.pro} and $\lambda^*=\arg\min_{\lambda>0}(\Tilde{V}(\lambda,y)+\lambda)$, then \eqref{E=1} holds true and $X^*$ is the optimal solution to the primal problem \eqref{problem3}.
\end{prop}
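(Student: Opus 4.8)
I would prove the two implications separately; both hinge on the Fenchel--Young relation \eqref{phi(y)=h}, the conjugacy identities \eqref{Phi'}, and the weak duality \eqref{relationship}.

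\emph{Sufficiency.} Suppose $(\eta^*,\lambda^*)$ satisfies \eqref{E=1}. Since $X^*=x^*_{h_A}(\lambda^*Z^{\eta^*}_{\tau}/B_{\tau},Y_{\tau})$ attains the Legendre--Fenchel maximum, \eqref{phi(y)=h} gives the pathwise identity $\Phi_{h_A}(\lambda^*Z^{\eta^*}_{\tau}/B_{\tau},Y_{\tau})=h_A(X^*,Y_{\tau})-\lambda^*X^*Z^{\eta^*}_{\tau}/B_{\tau}$; taking expectations and using $\mathbb{E}[X^*Z^{\eta^*}_{\tau}/B_{\tau}]=1$ yields $\Tilde V(\lambda^*,y)+\lambda^*\le L(\eta^*,\lambda^*)+\lambda^*=\mathbb{E}[h_A(X^*,Y_{\tau})]$. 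On the other hand, $X^*\in\Tilde{\mathcal{U}}_{0,\tau}(1,y)$ forces $\mathbb{E}[h_A(X^*,Y_{\tau})]\le\sup_{X\in\Tilde{\mathcal{U}}_{0,\tau}(1,y)}\mathbb{E}[h_A(X,Y_{\tau})]$, and the latter is at most $\inf_{\lambda>0}\{\Tilde V(\lambda,y)+\lambda\}\le\Tilde V(\lambda^*,y)+\lambda^*$ by \eqref{relationship}. Hence the chain collapses to equalities, from which I read off all the claimed conclusions: $X^*$ is optimal for \eqref{problem3}; $\lambda^*=\arg\min_{\lambda>0}(\Tilde V(\lambda,y)+\lambda)$; there is no duality gap; and $L(\eta^*,\lambda^*)=\Tilde V(\lambda^*,y)$, i.e.\ $\eta^*$ is optimal for \eqref{dual.pro}.

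\emph{Necessity.} Now assume $\eta^*$ attains $\Tilde V(\lambda^*,y)=\inf_{\eta\in\mathcal{M}}L(\eta,\lambda^*)$ and $\lambda^*=\arg\min_{\lambda>0}(\Tilde V(\lambda,y)+\lambda)$; I must derive \eqref{E=1}. For the budget identity, I would note that $\lambda\mapsto\Tilde V(\lambda,y)+\lambda$ lies below $\lambda\mapsto L(\eta^*,\lambda)+\lambda$ with equality at $\lambda^*$, so $\lambda^*$ also minimizes the latter; the first-order condition for $\lambda\mapsto\mathbb{E}[\Phi_{h_A}(\lambda Z^{\eta^*}_{\tau}/B_{\tau},Y_{\tau})]+\lambda$ at $\lambda^*$, together with \eqref{Phi'}, then gives $\mathbb{E}[X^*Z^{\eta^*}_{\tau}/B_{\tau}]=1$. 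For the feasibility $X^*\in\Tilde{\mathcal{U}}_{0,\tau}(1,y)$, I would perturb $\eta^*$ inside $\mathcal{M}$: for admissible directions $\xi$ the map $\varepsilon\mapsto L(\eta^*+\varepsilon\xi,\lambda^*)$ has a minimum at $\varepsilon=0$, and since $\tfrac{d}{d\varepsilon}Z^{\eta^*+\varepsilon\xi}_{\tau}\big|_{\varepsilon=0}=Z^{\eta^*}_{\tau}\int_0^{\tau}\xi_s\,dW_{2s}^{\eta^*}$ by \eqref{Z.process}, the first-order condition reads $\mathbb{E}\big[X^*(Z^{\eta^*}_{\tau}/B_{\tau})\int_0^{\tau}\xi_s\,dW_{2s}^{\eta^*}\big]=0$. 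Since $\mathbb{E}[X^*Z^{\eta^*}_{\tau}/B_{\tau}]=1$, the measure $d\mathbb{Q}:=X^*(Z^{\eta^*}_{\tau}/B_{\tau})\,d\mathbb{P}$ is a probability measure, and this identity (over all $\xi$) says precisely that $W_2^{\eta^*}$ is a $\mathbb{Q}$-martingale; by Girsanov, the $\mathbb{Q}$-density of $X^*/B_{\tau}$ relative to $\mathbb{P}^{\eta^*}$ then carries no $W_2^{\eta^*}$-component. Consequently, for every $\eta\in\mathcal{M}$ the ratio $Z^{\eta}_{\tau}/Z^{\eta^*}_{\tau}=\exp\big(\int_0^{\tau}(\eta_s-\eta^*_s)\,dW_{2s}^{\eta^*}-\tfrac12\int_0^{\tau}(\eta_s-\eta^*_s)^2\,ds\big)$ is a nonnegative $\mathbb{Q}$-local, hence $\mathbb{Q}$-super-, martingale, so $\mathbb{E}[X^*Z^{\eta}_{\tau}/B_{\tau}]=\mathbb{E}^{\mathbb{Q}}[Z^{\eta}_{\tau}/Z^{\eta^*}_{\tau}]\le1$. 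Taking the supremum over $\eta$ gives $X^*\in\Tilde{\mathcal{U}}_{0,\tau}(1,y)$, so \eqref{E=1} holds and the sufficiency part yields the optimality of $X^*$ for \eqref{problem3}.

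\emph{Main obstacle.} The conjugacy bookkeeping and the sandwiching with \eqref{relationship} are routine. The substantive points are: (i) justifying the interchange of differentiation and expectation in both first-order conditions and verifying that $\lambda^*$ is an interior minimum, for which the polynomial growth bounds \eqref{0<h}--\eqref{0>h} of Lemma \ref{lem2.1}, the boundary behaviour \eqref{phi(0)}--\eqref{phi(infty)}, and exponential-moment estimates for $Z^{\eta^*}_{\tau}/B_{\tau}$ coming from \eqref{M.def} are exactly what is needed; and (ii), the genuinely delicate step, upgrading the one-sided variational inequalities furnished by the optimality of $\eta^*$ to the \emph{global} bound $\sup_{\eta\in\mathcal{M}}\mathbb{E}[X^*Z^{\eta}_{\tau}/B_{\tau}]\le1$ --- this needs both a sufficiently rich family of perturbation directions $\xi$ admissible in $\mathcal{M}$ and the Girsanov/martingale-representation argument identifying the driving noise of the relevant density processes.
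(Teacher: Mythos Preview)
Your sufficiency argument is essentially the paper's: the Fenchel--Young identity plus the budget constraint sandwich $\Tilde V(\lambda^*,y)+\lambda^*$ between $\mathbb{E}[h_A(X^*,Y_\tau)]$ and the primal supremum, and weak duality \eqref{relationship} closes the loop. For the budget identity in the necessity direction, your ``touching'' argument (the graph of $\Tilde V(\cdot,y)+\cdot$ lies below that of $L(\eta^*,\cdot)+\cdot$ with contact at $\lambda^*$) is a clean variant of what the paper does; the paper instead verifies directly, via the boundary behaviour \eqref{phi(0)}--\eqref{phi(infty)} and monotone convergence, that $L(\eta^*,\lambda)+\lambda$ has an interior minimum and then differentiates.

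Where you genuinely diverge is the feasibility step $X^*\in\Tilde{\mathcal{U}}_{0,\tau}(1,y)$. The paper does \emph{not} perturb $\eta^*$; instead it applies the martingale representation theorem to $t\mapsto\mathbb{E}[X^*Z^{\eta^*}_\tau/B_\tau\,|\,\mathcal{F}_t]$ and cites Section~6 of \cite{KL91} to extract an actual admissible portfolio $\hat\pi\in\mathcal{U}_0(1,y)$ with terminal wealth $X^*$. Once such a wealth process exists, the already-established supermartingale property \eqref{supermaringale} of $X^{1,y,\hat\pi}Z^\eta/B$ immediately gives $\mathbb{E}[X^*Z^\eta_\tau/B_\tau]\le 1$ for every $\eta\in\mathcal{M}$. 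Your change-of-measure route---derive the $\eta$-variational condition, deduce that $W_2^{\eta^*}$ is a $\mathbb{Q}$-martingale, and bound $Z^\eta_\tau/Z^{\eta^*}_\tau$ as a $\mathbb{Q}$-supermartingale---is conceptually equivalent (indeed the KL91 construction itself uses the optimality of $\eta^*$ to kill the $W_2$-integrand in the representation), but it front-loads exactly the delicate analysis you flag in (ii): you must show the admissible perturbation directions $\xi$ are rich enough, and justify differentiating $L(\eta^*+\varepsilon\xi,\lambda^*)$ under the expectation. The paper's formulation buys simplicity by delegating that work to a standard reference and reducing the final step to a one-line supermartingale inequality; yours is more self-contained but carries the technical burden explicitly.
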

\begin{proof}
We assume \eqref{E=1} holds true.
By Lemma \ref{lem2.1}, we know that, for any fixed $y\in\mathbb{R}$, the one-variable function $\mathbb{R}_{+}\ni x\mapsto h_{A}(x,y)$ is continuous and concave with $\frac{\partial}{\partial x}h_{A}(\infty,y)=0$. Hence, for each fixed $y\in\mathbb{R}$, $x_{h_A}^*(u,y)$ is the maximizer of the function $[0,\infty)\ni x\mapsto h_{A}(x,y)-ux$ for any $u\in\mathbb{R}_+$.
Then, for any $X\in\mathcal{F}_{\tau}^+$ and $(\eta,\lambda)\in\mathcal{M}\times\mathbb{R}_+$, using \eqref{2.8} and \eqref{phi(y)=h}, one can get
\begin{eqnarray}
\label{E<h}
h_A(X,Y_{\tau})-\lambda \frac{Z_{\tau}^{\eta}}{B_{\tau}}X\leq
h_A(x^*_{h_A}(\lambda\frac{Z_{\tau}^{\eta}}{B_{\tau}},Y_{\tau}),Y_{\tau})-\lambda x^*_{h_A}(\lambda\frac{Z_{\tau}^{\eta}}{B_{\tau}},Y_{\tau})\frac{Z_{\tau}^{\eta}}{B_{\tau}}.
\end{eqnarray}
By \eqref{dual.pro}, \eqref{E=1} and \eqref{E<h}, for each $(\eta,\lambda)\in\mathcal{M}\times\mathbb{R}_+$, it holds that
\begin{eqnarray}\label{dual.gap}
%\hspace{-0.3cm}&&\hspace{-0.3cm}
\inf_{\eta\in\mathcal{M},\lambda>0}\left\{L(\eta,\lambda)+\lambda\right\}
%\nonumber\\
\hspace{-0.3cm}&=&\hspace{-0.3cm}
\inf_{\eta\in\mathcal{M},\lambda>0}\left\{\sup_{X\in\mathcal{F}_{\tau}^+}\left\{\mathbb{E}\left[h_A(X,Y_{\tau})-\lambda\frac{Z_{\tau}^{\eta}}{B_{\tau}}X\right]\right\}+\lambda\right\}
\nonumber\\
\hspace{-0.3cm}&=&\hspace{-0.3cm}
\inf_{\eta\in\mathcal{M},\lambda>0}\left\{\mathbb{E}\left[h_A(x^*_{h_A}(\lambda\frac{Z^{\eta}_{\tau}}{B_{\tau}},Y_{\tau}),Y_{\tau})
%\frac{1}{\alpha}\left(x^*_{h_A}(\lambda\frac{Z^{\eta}_{\tau}}{B_{\tau}})\right)^{\alpha}{h(Y_{\tau})}+A(Y_{\tau})\left(x^*_{h_A}(\lambda\frac{Z^{\eta}_{\tau}}{B_{\tau}})\right)^{\alpha(1-\gamma)}\right]
-\lambda x^*_{h_A}(\lambda\frac{Z^{\eta}_{\tau}}{B_{\tau}},Y_{\tau})\frac{Z^{\eta}_{\tau}}{B_{\tau}}\right]+\lambda\right\}
\nonumber\\
\hspace{-0.3cm}&\leq&\hspace{-0.3cm}
\mathbb{E}\left[h_A(x^*_{h_A}(\lambda^*\frac{Z^{\eta^*}_{\tau}}{B_{\tau}},Y_{\tau}),Y_{\tau})
%\frac{1}{\alpha}\left(x^*_{h_A}(\lambda^*\frac{Z^{\eta^*}_{\tau}}{B_{\tau}})\right)^{\alpha}{h(Y_{\tau})}+A(Y_{\tau})\left(x^*_{h_A}(\lambda^*\frac{Z^{\eta^*}_{\tau}}{B_{\tau}})\right)^{\alpha(1-\gamma)}\right]
-\lambda^* x^*_{h_A}(\lambda^*\frac{Z^{\eta^*}_{\tau}}{B_{\tau}},Y_{\tau})\frac{Z^{\eta^*}_{\tau}}{B_{\tau}}\right]+\lambda^*
\nonumber\\
\hspace{-0.3cm}&=&\hspace{-0.3cm}
\mathbb{E}\left[\frac{1}{\alpha}\left(x^*_{h_A}(\lambda^*\frac{Z^{\eta^*}_{\tau}}{B_{\tau}},Y_{\tau})\right)^{\alpha}{h(Y_{\tau})}+{\frac{1}{\alpha}}A(Y_{\tau})\left(x^*_{h_A}(\lambda^*\frac{Z^{\eta^*}_{\tau}}{B_{\tau}},Y_{\tau})\right)^{\alpha(1-\gamma)}\right]
\nonumber\\
\hspace{-0.3cm}&\leq&\hspace{-0.3cm}
\sup_{X\in\Tilde{\mathcal{U}}_{0,\tau}(1,y)}\mathbb{E}\left[\frac{1}{\alpha}X_{\tau}^{\alpha}{h(Y_{\tau})}+{\frac{1}{\alpha}}A(Y_{\tau})X_{\tau}^{\alpha(1-\gamma)}\right],
\end{eqnarray}
which together with \eqref{relationship} implies that there is no duality gap, and then, $X^*$ is the optimal solution to the primal problem \eqref{problem3}, $\eta^*$ is the optimal solution to the dual problem \eqref{dual.pro} and $\lambda^*=\arg\min_{\lambda>0}(\Tilde{V}(\lambda,y)+\lambda)$. Hence, the first claim holds true.

For the second claim, it is not hard to verify that the function $$\mathbb{R}_+\ni\lambda\mapsto\sup_{X\in\mathcal{F}_{\tau}^+}\mathbb{E}\left[h_A(X, Y_{\tau})-\lambda\frac{Z^{\eta}_{\tau}}{B_{\tau}}X\right]$$ is decreasing and convex. For any $y\in \mathbb{R}$ and $A(\cdot)\in{C}_b^{+}(\mathbb{R})$, we put
\begin{align}
\label{def.ell}
    \ell_{A,y}(x):=&h_{A}(x,y)-x\frac{\partial}{\partial x}h_{A}(x,y)
    \nonumber\\
    =&\frac{1}{\alpha}x^{\alpha}h(y)+{\frac{1}{\alpha}}A(y)x^{\alpha(1-\gamma)}
    -x\left(x^{\alpha-1}{h(y)}+A(y)(1-\gamma)x^{\alpha(1-\gamma)-1}\right)
    \nonumber\\
    =&\left(\frac{1}{\alpha}-1\right)x^{\alpha}h(y)+{\left(\frac{1}{\alpha}-(1-\gamma)\right)}A(y)x^{\alpha(1-\gamma)},\quad x\in\mathbb{R}_+.
\end{align}
By the expression of \eqref{def.ell}, it is obvious that, for any $y\in \mathbb{R}$ and $A(\cdot)\in{C}_b^{+}(\mathbb{R})$, the function $(0,\infty)\ni x\mapsto \ell_{A,y}(x)$ is bounded on any bounded interval, is increasing on $(0,\infty)$, $\lim_{x\rightarrow\infty}\ell_{A,y}(x)=\infty\,{(0,\,\text{\text{resp.}})}$, and, $\lim_{x\rightarrow0+}\ell_{A,y}(x)=0 \,{(-\infty,\,\text{resp.})}$ for $\alpha\in(0,1)\, {(\alpha\in (-\infty,0),\,\text{resp.})}$.
Therefore, by Lemma 4.2 in \cite{KL91}, Lemma \ref{lem2.1}, \eqref{E<h}, \eqref{def.ell} and the monotone convergence theorem, it holds that
\begin{align}
\label{2.36.v0}
&\lim_{\lambda\rightarrow0+}\sup_{X\in\mathcal{F}_{\tau}^+}\mathbb{E}\left[h_A(X, Y_{\tau})-\lambda\frac{XZ^{\eta}_{\tau}}{B_{\tau}}\right]
\nonumber\\
=&
\lim_{\lambda\rightarrow0+}
\mathbb{E}\left[h_A(x^*_{h_A}(\lambda\frac{Z^{\eta}_{\tau}}{B_{\tau}},Y_{\tau}),Y_{\tau})
-\lambda x^*_{h_A}(\lambda\frac{Z^{\eta}_{\tau}}{B_{\tau}},Y_{\tau})\frac{Z^{\eta}_{\tau}}{B_{\tau}}\right]
\nonumber\\
=&
\lim_{\lambda\rightarrow0+}
\mathbb{E}\left[h_A(x^*_{h_A}(\lambda\frac{Z^{\eta}_{\tau}}{B_{\tau}},Y_{\tau}),Y_{\tau})
-x^*_{h_A}(\lambda\frac{Z^{\eta}_{\tau}}{B_{\tau}},Y_{\tau})\frac{\partial}{\partial x}h_{A}(x^*_{h_A}(\lambda\frac{Z^{\eta}_{\tau}}{B_{\tau}},Y_{\tau}),Y_{\tau})\right]
\nonumber\\
=&
\lim_{\lambda\rightarrow0+}
\mathbb{E}\left[\ell_{A,Y_{\tau}}(x^*_{h_A}(\lambda\frac{Z^{\eta}_{\tau}}{B_{\tau}},Y_{\tau}))\right]
\nonumber\\
=&\infty\,{(0,\,\text{resp.})},
\end{align}
for $\alpha\in(0,1)\,{(\alpha\in(-\infty,0),\,\text{resp.})}$, because we have $x_{h_A}^*(u,y)\uparrow\infty$ as $u\downarrow0$.
Similarly, we have
\begin{align}
\label{2.37.v0}
%&
\lim_{\lambda\rightarrow\infty}\sup_{X\in\mathcal{F}_{\tau}^+}\mathbb{E}\left[h_A(X,Y_{\tau})-\lambda\frac{XZ^{\eta}_{\tau}}{B_{\tau}}\right]
%\nonumber\\
=
%&
\lim_{\lambda\rightarrow\infty}
\mathbb{E}\left[\ell_{A,Y_{\tau}}(x^*_{h_A}(\lambda\frac{Z^{\eta}_{\tau}}{B_{\tau}},Y_{\tau}))\right]=0\,{(-\infty,\,\text{resp.})},
%\nonumber\\
%=&0,
\end{align}
for $\alpha\in(0,1)\,{(\alpha\in(-\infty,0),\,\text{resp.})}$, since we have $x_{h_A}^*(u,y)\downarrow0$ as $u\uparrow\infty$.
In addition, note that
\begin{eqnarray}
\frac{\partial}{\partial \lambda}\left(L(\eta,\lambda)+\lambda\right)
\hspace{-0,3cm}&=&\hspace{-0.3cm}
\frac{\partial}{\partial \lambda}\left[\sup_{X\in\mathcal{F}_{\tau}^+}\mathbb{E}\left[h_A(X, Y_{\tau})-\lambda\frac{XZ^{\eta}_{\tau}}{B_{\tau}}\right]\right]+1
=
\frac{\partial}{\partial \lambda}\mathbb{E}\left[\ell_{A,Y_{\tau}}(x^*_{h_A}(\lambda\frac{Z^{\eta}_{\tau}}{B_{\tau}},Y_{\tau}))\right]+1
\nonumber\\
\hspace{-0.3cm}&=&\hspace{-0.3cm}
-\mathbb{E}\left[x^*_{h_A}\left(\lambda\frac{Z^{\eta}_{\tau}}{B_{\tau}},Y_{\tau}\right)\frac{Z^{\eta}_{\tau}}{B_{\tau}}\right]+1,\nonumber
\end{eqnarray}
which implies that there exists some $\lambda_0\in(0,\infty)$ such that $\frac{\partial}{\partial \lambda}(L(\eta,\lambda)+\lambda)<0$ for $\lambda\in(0,\lambda_0)$ and $\frac{\partial}{\partial \lambda}(L(\eta,\lambda)+\lambda)>0$ for $\lambda\in(\lambda_0,\infty)$ with $\lim_{\lambda\rightarrow\infty}\frac{\partial}{\partial \lambda}(L(\eta,\lambda)+\lambda)=1$. This, together with \eqref{2.36.v0} and \eqref{2.37.v0}, yields that $$\lim_{\lambda\rightarrow0+}(L(\eta^*,\lambda)+\lambda)=\lim_{\lambda\rightarrow\infty}(L(\eta^*,\lambda)+\lambda)=\infty,\text{ for }\alpha\in(0,1),$$
and $$\lim_{\lambda\rightarrow0+}(L(\eta^*,\lambda)+\lambda)=0,\,\,\lim_{\lambda\rightarrow\infty}(L(\eta^*,\lambda)+\lambda)=\infty,\text{ for }\alpha\in(-\infty,0).$$ 
Hence, $(L(\eta^*,\lambda)+\lambda)$ attains its infimum at some $\lambda^*\in(0,\infty)$.
Then, we have
\begin{eqnarray}
\hspace{-0.3cm}&&\hspace{-0.3cm}
\inf_{u\in\mathbb{R}_+}\left\{u\lambda+\sup_{X\in\mathcal{F}_{\tau}^+}\mathbb{E}\left[h_A(X,Y_{\tau})-u\lambda\frac{XZ^{\eta^*}_{\tau}}{B_{\tau}}\right]\right\}
\nonumber\\
\hspace{-0.3cm}&=&\hspace{-0.3cm}
\inf_{v\in\mathbb{R}_+}\left\{v+\sup_{X\in\mathcal{F}_{\tau}^+}\mathbb{E}\left[h_A(X,Y_{\tau})-v\frac{XZ^{\eta^*}_{\tau}}{B_{\tau}}\right]\right\}
=
L(\eta^*,\lambda^*)+\lambda^*.\nonumber
\end{eqnarray}
That is, the function 
\begin{align}
g(u)&:=u\lambda^*+\sup_{X\in\mathcal{F}_{\tau}^+}\mathbb{E}\left[h_A(X,Y_{\tau})-u\lambda^*\frac{XZ^{\eta^*}_{\tau}}{B_{\tau}}\right]
\nonumber\\
&=u\lambda^*+\mathbb{E}\left[
h_A(x^*_{h_A}(u\lambda^*\frac{Z^{\eta^*}_{\tau}}{B_{\tau}},Y_{\tau}),Y_{\tau})
%\frac{1}{\alpha}\left(x^*_{h_A}(\lambda^*\frac{Z^{\eta^*}_{\tau}}{B_{\tau}})\right)^{\alpha}{h(Y_{\tau})}+A(Y_{\tau})\left(x^*_{h_A}(\lambda^*\frac{Z^{\eta^*}_{\tau}}{B_{\tau}})\right)^{\alpha(1-\gamma)}\right]
-u\lambda^* x^*_{h_A}(u\lambda^*\frac{Z^{\eta^*}_{\tau}}{B_{\tau}},Y_{\tau})\frac{Z^{\eta^*}_{\tau}}{B_{\tau}}\right]\nonumber
\end{align}
achieves its infimum at $u=1$,
which yields that 
$$g^{\prime}(1)=\lambda^*-\lambda^*\mathbb{E}\left[x^*_{h_{A}}(\lambda^*\frac{Z^{\eta^*}_{\tau}}{B_{\tau}},Y_{\tau})\frac{Z^{\eta^*}_{\tau}}{B_{\tau}}\right]=0.$$

Then, one can find some admissible portfolio $\hat{\pi}$ that finances $x^*_{h_{A}}(\lambda^*\frac{Z^{\eta^*}_{\tau}}{B_{\tau}},Y_{\tau})$ by an application of martingale representation theorem to the martingale $$\frac{X^{1,y,\hat{\pi}}_tZ^{{\eta}^*}_t}{B_t}:=\mathbb{E}\left[x^*_{h_{A}}(\lambda^*\frac{Z^{\eta^*}_{\tau}}{B_{\tau}},Y_{\tau})\frac{Z^{\eta^*}_{\tau}}{B_{\tau}}\Bigg|\mathcal{F}_{t}\right],\quad t\in[0,\tau],$$
with $X^{1,y,\hat{\pi}}_0=1$ and $X^{1,y,\hat{\pi}}_\tau=x^*_{h_{A}}(\lambda^*\frac{Z^{\eta^*}_{\tau}}{B_{\tau}},Y_{\tau})$ (see Section 6 in \cite{KL91} for more details).
%This, together with the fact that $x^*_{h_{A}}(\lambda^*\frac{Z^{\eta^*}_{\tau}}{B_{\tau}},Y_{\tau})\in\mathcal{F}_{\tau}^{+}$ and Lemma \ref{prop.euquiv}, yields that there exists some admissible portfolio $\hat{\pi}$ such that $X^{1,y,\hat{\pi}}_{\tau}=x^*_{h_{A}}(\lambda^*\frac{Z^{\eta^*}_{\tau}}{B_{\tau}},Y_{\tau})$ and
Furthermore, from \eqref{supermaringale}, we know $X^{1,y,\hat{\pi}}\frac{Z^{\eta}}{B}$ is a $\mathbb{P}$-supermartingale for any $\eta\in\mathcal{M}$. Hence, it holds that
$$\mathbb{E}\left[x^*_{h_{A}}(\lambda^*\frac{Z^{\eta^*}_{\tau}}{B_{\tau}},Y_{\tau})\frac{Z^{\eta}_{\tau}}{B_{\tau}}\right]=\mathbb{E}\left[X^{1,y,\hat{\pi}}_{\tau}\frac{Z^{\eta}_{\tau}}{B_{\tau}}\right]\leq 1,\quad \eta\in\mathcal{M},$$
which combined with the arbitrary of $\eta$ implies $x^*_{h_{A}}(\lambda^*\frac{Z^{\eta^*}_{\tau}}{B_{\tau}},Y_{\tau})\in\Tilde{\mathcal{U}}_{0,\tau}(1,y)$.
Consequently, \eqref{E=1} holds. Using the first claim, one knows that $X^*$ is the optimal solution to the primal problem \eqref{problem3}.
The proof is complete.
\end{proof}

To prove the existence of the optimal solution to dual problem \eqref{dual.pro}, let us introduce the Dol\'{e}ans-Dade exponential 
\begin{eqnarray}\label{DD.exp}
\mathcal{E}_t^{W}(\eta):=\exp\left\{\int_0^t\eta_sdW_s-\frac{1}{2}\int_0^t\eta^2_sds\right\},\quad t\in[0,\tau].
\end{eqnarray}
The following Propositions \ref{existence.1} and \ref{existence.2} address the existence  of the optimal solution to the dual problem \eqref{dual.pro} for two separate cases $\alpha\in(0,1)$ and $\alpha\in(-\infty,0)$, respectively.
\begin{prop}
\label{existence.1}
For $\alpha\in(0,1)$, there exists an optimal solution to the dual problem \eqref{dual.pro}.
\end{prop}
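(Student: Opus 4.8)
The plan is to prove existence of a dual minimizer by the direct method: take a minimizing sequence $(\eta^n)_{n\ge1}\subset\mathcal{M}$ with $L(\eta^n,\lambda)\downarrow\tilde V(\lambda,y)$, extract a candidate limit $\eta^*$, check $\eta^*\in\mathcal{M}$, and conclude optimality by a lower-semicontinuity argument. First I would collect the structural ingredients. By Lemma \ref{lem2.1} and the Legendre relations \eqref{phi(0)}--\eqref{phi(infty)}, for $\alpha\in(0,1)$ the integrand $u\mapsto\Phi_{h_A}(u,y)$ is nonnegative, strictly convex, strictly decreasing, with $\Phi_{h_A}(0+,y)=\infty$; dominating $h_A(\cdot,y)$ from below by its leading term $\tfrac1\alpha h(y)x^\alpha$ (the $A$-term being nonnegative) and from above via \eqref{0<h} gives the power control $\tfrac{1-\alpha}{\alpha}h(y)^{1/(1-\alpha)}u^{-\alpha/(1-\alpha)}\le\Phi_{h_A}(u,y)\le \kappa_1+C\,u^{-\alpha/(1-\alpha)}$ for $u>0$. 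Since $\theta(Y_\cdot)$ is bounded by \eqref{M.def}, the density $Z^0_\tau$ (i.e.\ $Z^\eta_\tau$ with $\eta\equiv0$) is a true martingale with moments of all orders, positive and negative; combined with the upper power bound this shows $0\le\tilde V(\lambda,y)\le L(0,\lambda)<\infty$, so the minimizing sequence is well posed. I would also record that $L(\cdot,\lambda)$ is convex on $\mathcal{M}$: independence of $W_1,W_2$ gives $Z^\eta_\tau=Z^0_\tau\,\mathcal{E}^{W_2}_\tau(\eta)$, and convexity of $\eta\mapsto\int_0^\tau\eta_s^2\,ds$ yields $Z^{t\eta^1+(1-t)\eta^2}_\tau\ge (Z^{\eta^1}_\tau)^t(Z^{\eta^2}_\tau)^{1-t}$ pointwise; since $\Phi_{h_A}(\cdot,y)$ is decreasing and log-convex in its first argument (equivalently $x\mapsto x\,\partial_x h_A(x,y)=h(y)x^\alpha+A(y)(1-\gamma)x^{\alpha(1-\gamma)}$ is increasing because both exponents are positive), one gets $L(t\eta^1+(1-t)\eta^2,\lambda)\le tL(\eta^1,\lambda)+(1-t)L(\eta^2,\lambda)$.

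The heart of the proof, and the step I expect to be the main obstacle, is a uniform a priori bound on the minimizing sequence. From $\sup_n L(\eta^n,\lambda)<\infty$ and the lower power bound on $\Phi_{h_A}$ one first gets $\sup_n\mathbb{E}\big[(Z^{\eta^n}_\tau)^{-\alpha/(1-\alpha)}\big]<\infty$. Writing $Z^{\eta^n}_\tau=Z^0_\tau\,\mathcal{E}^{W_2}_\tau(\eta^n)$, using boundedness of $\theta$ together with H\"older's inequality to strip off the $Z^0_\tau$ factor, and then a change-of-measure/localization argument for the Dol\'eans--Dade exponential \eqref{DD.exp}, this is upgraded to $\sup_n\mathbb{E}\big[\exp\big(c_0\int_0^\tau(\eta^n_s)^2\,ds\big)\big]<\infty$ for some constant $c_0>1/2$; in particular $\sup_n\mathbb{E}[\int_0^\tau(\eta^n_s)^2\,ds]<\infty$ and, by a Novikov/Kazamaki-type criterion, the family $\{\mathcal{E}^{W_2}_\tau(\eta^n)\}$ is bounded in $L^p(\mathbb{P})$ for some $p>1$, hence uniformly integrable. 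Making this chain rigorous --- handling that $\mathcal{E}^{W_2}(\eta^n)$ need not be a true martingale a priori, and that $Z^0_\tau$ and $\mathcal{E}^{W_2}_\tau(\eta^n)$ are dependent --- is where the real work lies; it is also exactly where the sign of $\alpha$ enters (the negative-power control of $\Phi_{h_A}$ near $0$ and the log-convexity both require $\alpha\in(0,1)$), which is why $\alpha<0$ is handled separately in Proposition \ref{existence.2}.

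With the $L^2$-bound in hand I would extract the limit. The sequence $(\eta^n)$ is bounded in the Hilbert space $L^2(\Omega\times[0,\tau],\,d\mathbb{P}\otimes dt)$, so by Mazur's lemma there are convex combinations $\tilde\eta^n\in\mathrm{conv}(\eta^n,\eta^{n+1},\dots)$ converging strongly, hence --- along a subsequence --- $d\mathbb{P}\otimes dt$-a.e., to some $\eta^*\in L^2$; by convexity of $L(\cdot,\lambda)$ the $\tilde\eta^n$ still satisfy $L(\tilde\eta^n,\lambda)\to\tilde V(\lambda,y)$, and the uniform exponential-moment bound passes to them by convexity of $\eta\mapsto\exp(c_0\int_0^\tau\eta_s^2\,ds)$. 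It\^o's isometry gives $\int_0^\tau\tilde\eta^n_s\,dW_{2s}\to\int_0^\tau\eta^*_s\,dW_{2s}$ in $L^2(\mathbb{P})$ and $\int_0^\tau(\tilde\eta^n_s)^2\,ds\to\int_0^\tau(\eta^*_s)^2\,ds$ in $L^1(\mathbb{P})$, so $\mathcal{E}^{W_2}_\tau(\tilde\eta^n)\to\mathcal{E}^{W_2}_\tau(\eta^*)$ in probability; together with the uniform integrability this yields $L^1$-convergence, whence $\mathbb{E}[\mathcal{E}^{W_2}_\tau(\eta^*)]=1$, so $\mathcal{E}^{W_2}_\tau(\eta^*)$ --- and therefore $Z^{\eta^*}$ --- is a true martingale and $\eta^*\in\mathcal{M}$. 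Finally, since $Z^{\tilde\eta^n}_\tau\to Z^{\eta^*}_\tau$ in probability, Fatou's lemma for the nonnegative integrands $\Phi_{h_A}(\lambda Z^{\tilde\eta^n}_\tau/B_\tau,Y_\tau)$ gives $L(\eta^*,\lambda)\le\liminf_n L(\tilde\eta^n,\lambda)=\tilde V(\lambda,y)$, so $\eta^*$ attains the infimum in \eqref{dual.pro}. (The estimates are uniform in $\lambda$ over compact subsets of $\mathbb{R}_+$, which is convenient when later invoking Proposition \ref{prop2.1}.)
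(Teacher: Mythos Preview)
Your overall strategy --- establish convexity of $\eta\mapsto L(\eta,\lambda)$ and then run a direct-method/compactness argument --- is exactly the paper's approach. The paper verifies convexity by checking that the Arrow--Pratt index of $h_A$ is at most $1$ (so $z\mapsto\Phi_{h_A}(e^z,y)$ is convex, via Lemma~12.6 of \cite{KL91}) and then simply invokes Theorem~12.3 of \cite{KL91} for existence; you verify convexity via the equivalent criterion that $x\mapsto x\,\partial_x h_A(x,y)$ is increasing and then try to reprove the compactness step from scratch. Your pointwise inequality $Z^{t\eta_1+(1-t)\eta_2}_\tau\ge (Z^{\eta_1}_\tau)^t(Z^{\eta_2}_\tau)^{1-t}$ is correct and gives a slightly cleaner route to convexity than the paper's conditioning on $\mathcal{F}^Y_\tau$, so up to this point the two proofs are equivalent.

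The gap is in your a~priori estimate. From $\sup_n L(\eta^n,\lambda)<\infty$ and the lower power bound on $\Phi_{h_A}$ you correctly get $\sup_n\mathbb{E}\big[(Z^{\eta^n}_\tau)^{-\alpha/(1-\alpha)}\big]<\infty$, and stripping off the bounded-$\theta$ factor to control a negative moment of $\mathcal{E}^{W_2}_\tau(\eta^n)$ is fine. But the claimed upgrade to $\sup_n\mathbb{E}\big[\exp(c_0\!\int_0^\tau(\eta^n_s)^2\,ds)\big]<\infty$ with $c_0>\tfrac12$ does not follow from a negative-moment bound on $\mathcal{E}^{W_2}_\tau(\eta^n)$: writing $\mathcal{E}^{W_2}_\tau(\eta)^{-q}=\mathcal{E}^{W_2}_\tau(-q\eta)\exp\!\big(\tfrac{q(q+1)}{2}\!\int\eta^2\big)$ shows the exponential you want is multiplied by a density whose $\mathbb{P}$-expectation you do not control, and no H\"older/change-of-measure manipulation extracts the Novikov-type moment without an additional hypothesis. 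This is precisely why the paper defers to \cite{KL91}: the proof of Theorem~12.3 there does not pass through such an exponential estimate but instead combines convexity with a Koml\'os-type extraction on the densities themselves and closes with Fatou, using the structural assumption (equivalent to your monotonicity of $x\,\partial_x h_A$) at a different point. Your Mazur/Fatou endgame is sound once you have the $L^2$-bound and uniform integrability, but as written you have not produced either; if you want a self-contained proof, follow the density-level argument in \cite{KL91} rather than aiming for exponential moments of $\int\eta^2$.
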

\begin{proof}
Denote by $\mathcal{F}^{Y}_{\tau}$ the smallest sigma-field generated by $(Y_{t})_{0\leq t\leq \tau}$.
We first verify that, for fixed $\lambda\in\mathbb{R}_+$, $L(\cdot,\lambda)$ defined by \eqref{dual.pro} is a convex functional on $\mathcal{M}$.
Notice that the Arrow-Pratt measure of relative risk aversion 
\begin{align}\label{APmrra}
    -\frac{x\frac{\partial^2}{\partial x^2}h_A(x,y)}{\frac{\partial}{\partial x}h_A(x,y)}
    &=-\frac{(\alpha-1)x^{\alpha-1}h(y)+(1-\gamma)(\alpha(1-\gamma)-1)x^{\alpha(1-\gamma)-1}A(y)}{x^{\alpha-1}h(y)+(1-\gamma)x^{\alpha(1-\gamma)-1}A(y)}\nonumber\\
    &\leq1, \quad (x,y)\in\mathbb{R}_+\times\mathbb{R}.
\end{align}
Then, by Lemma 12.6 of \cite{KL91}, one knows the function $z\mapsto \Phi_{h_A}(e^{z},y)$ is convex on $\mathbb{R}$ for any fixed $y\in\mathbb{R}$. This, together with the convexity of the Euclidean norm in $\mathbb{R}$, the decrease of $\Phi_{h_A}$ and Jensen's inequality, implies
\begin{eqnarray}
\hspace{-0.3cm}&&\hspace{-0.3cm}
L(\omega_1\eta_1+\omega_2\eta_2,\lambda)
\nonumber\\
\hspace{-0.3cm}&\leq&\hspace{-0.3cm}
\mathbb{E}\left[\Phi_{h_A}\left(\frac{\lambda}{B_{\tau}}\mathcal{E}^{W_{1}}_{\tau}(-\theta)(\mathcal{E}^{W_{2}}_{\tau}(\eta_1))^{\omega_1}(\mathcal{E}^{W_{2}}_{\tau}(\eta_2))^{\omega_2},Y_{\tau}\right)\right]
\nonumber\\
\hspace{-0.3cm}&=&\hspace{-0.3cm}
\mathbb{E}\left[\mathbb{E}\left[\left.\Phi_{h_A}\left(\frac{\lambda}{B_{\tau}}\mathcal{E}^{W_{1}}_{\tau}(-\theta)(\mathcal{E}^{W_{2}}_{\tau}(\eta_1))^{\omega_1}(\mathcal{E}^{W_{2}}_{\tau}(\eta_2))^{\omega_2},Y_{\tau}\right)\right|\mathcal{F}^{Y}_{\tau}\right]\right]
\nonumber\\
\hspace{-0.3cm}&\leq&\hspace{-0.3cm}
\mathbb{E}\left[\omega_1\mathbb{E}\left[\left.\Phi_{h_A}\left(\frac{\lambda}{B_{\tau}}\mathcal{E}^{W_1}_{\tau}(-\theta)\mathcal{E}^{W_2}_{\tau}(\eta_1),Y_{\tau}\right)\right|\mathcal{F}^{Y}_{\tau}\right]\right.
\nonumber\\
\hspace{-0.3cm}&&\hspace{0.3cm}
\left.+\omega_2\mathbb{E}\left[\left.\Phi_{h_A}\left(\frac{\lambda}{B_{\tau}}\mathcal{E}^{W_1}_{\tau}(-\theta)\mathcal{E}^{W_2}_{\tau}(\eta_2),Y_{\tau}\right)\right|\mathcal{F}^{Y}_{\tau}\right]\right]
\nonumber\\
\hspace{-0.3cm}&=&\hspace{-0.3cm}
\omega_1 L(\eta_1,\lambda)+\omega_2 L(\eta_2,\lambda),
\end{eqnarray}
for any $\eta_1,\eta_2\in\mathcal{M}$ and $\omega_1,\omega_2\geq0$ with $\omega_1+\omega_2=1$. Hence, for fixed $\lambda\in\mathbb{R}_+$, $L(\cdot,\lambda)$ is a convex functional on $\mathcal{M}$.
Furthermore, one can employ similar methods as those adopted in the proof of Theorem 12.3 in \cite{KL91} to prove the existence of optimal solution to dual problem \eqref{dual.pro} when $\alpha\in(0,1)$ since the remaining assumptions of this theorem are fulfilled.
\end{proof}

The next result handles the more tricky case when $\alpha\in(-\infty,0)$. In this case, the arguments in \cite{KL91} (see, Lemma 12.6 and Theorem 12.3 of \cite{KL91}) are no longer applicable because \eqref{APmrra} does not hold. In fact, $z\mapsto \Phi_{h_A}(e^{z},y)$ is concave on $\mathbb{R}$. We note that \cite{La11} investigated this special case and established the existence of the optimal solution to the dual problem when the utility function is of form $U(x)=\frac{1}{\alpha}x^{\alpha}$ for $\alpha<0$, by connecting the dual problem to a fictitious market through a change-of-measure argument. The use of this change-of-measure argument is possible because their dual functional allows for the separation of the variable that needs to be optimized. In our context, however, due to the form of modified utility function $h_A(x,y)=\frac{1}{\alpha}x^{\alpha}h(y)+\frac{1}{\alpha}x^{\alpha(1-\gamma)}A(y)$, it becomes infeasible for our dual functional to separate the variables. Nevertheless, we show below that it is not necessary to change the measure, and we can still finish the task by establishing the relationship to another artificial optimization problem.

\begin{prop}
\label{existence.2}
For $\alpha\in(-\infty,0)$, there exists an optimal solution to the dual problem \eqref{dual.pro}.
\end{prop}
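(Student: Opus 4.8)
The plan is to sidestep both the convexity argument of \cite{KL91} — unavailable here since for $\alpha<0$ the map $z\mapsto\Phi_{h_A}(e^{z},y)$ is \emph{concave}, so $L(\cdot,\lambda)$ need not be convex on $\mathcal{M}$ — and the equivalent change of measure of \cite{La11} — unavailable because $h_A$ is a genuine sum of two powers and the dual functional does not separate — by instead working directly with convex combinations of the martingale densities $Z^{\eta}_{\tau}$ and exploiting the strictly sublinear growth of the conjugate $\Phi_{h_A}$.

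First I would record the growth of $\Phi_{h_A}$. Combining the envelope $\kappa_{2}(1+x^{\rho_{2}})\le h_{A}(x,y)<0$ of Lemma \ref{lem2.1} (with $\rho_{2}=\alpha$) with the elementary bound $h_{A}(x,y)\le\frac{m}{\alpha}x^{\alpha}$ (valid since $h\ge m>0$, $A\ge0$, $\alpha<0$), and passing to Legendre--Fenchel transforms, one obtains constants $0<C_{2}\le C_{1}<\infty$ and the conjugate exponent $q:=\frac{\alpha}{\alpha-1}\in(0,1)$ such that
\[
\kappa_{2}-C_{1}u^{q}\ \le\ \Phi_{h_A}(u,y)\ \le\ -C_{2}u^{q},\qquad(u,y)\in\mathbb{R}_{+}\times\mathbb{R}.
\]
Since every $Z^{\eta}_{\tau}$ is a probability density and $1/q>1$, the family $\{(Z^{\eta}_{\tau})^{q}:\eta\in\mathcal{M}\}$ is bounded (by $1$) in $L^{1/q}(\mathbb{P})$, hence uniformly integrable; together with the two-sided bound this shows that $L(\cdot,\lambda)$ is uniformly bounded on $\mathcal{M}$ (so $\Tilde{V}(\lambda,y)$ is finite) and that $\{\Phi_{h_A}(\lambda Z^{\eta}_{\tau}/B_{\tau},Y_{\tau}):\eta\in\mathcal{M}\}$ is uniformly integrable.

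The compactness argument then plays the role of the ``artificial optimization problem''. The two structural facts I would use are: (i) the set $\mathcal{D}:=\{Z^{\eta}_{\tau}:\eta\in\mathcal{M}\}$ of equivalent local martingale densities of $S/B$ is convex, and a convex combination of such densities is again of this form because the requirement that $Z^{\eta}(S/B)$ be a local martingale forces the $dW_{1}$-part of $Z^{\eta}$ to be $-\theta(Y)Z^{\eta}\,dW_{1}$; and (ii) $u\mapsto\Phi_{h_A}(u,y)$ is convex. Hence, given a minimizing sequence $(\eta^{n})$ with $L(\eta^{n},\lambda)\to\Tilde{V}(\lambda,y)$, a Koml\'os / convex-compactness argument in $L^{0}_{+}$ (as in the proof of Theorem 12.3 of \cite{KL91}) produces forward convex combinations $\bar Z^{\,n}\in\mathrm{conv}\{Z^{\eta^{m}}_{\tau}:m\ge n\}$ converging $\mathbb{P}$-a.s.\ to some $\zeta\ge0$ with $\mathbb{E}[\zeta]\le1$; by (i), $\bar Z^{\,n}=Z^{\bar\eta^{\,n}}_{\tau}$ for suitable $\bar\eta^{\,n}\in\mathcal{M}$, while by (ii) (Jensen on the finite convex combination), $L(\bar\eta^{\,n},\lambda)\le\sum_{m}\omega^{n}_{m}L(\eta^{m},\lambda)\to\Tilde{V}(\lambda,y)$, so $(\bar\eta^{\,n})$ is still minimizing. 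Using the uniform integrability above, the $\mathbb{P}$-a.s.\ convergence, and the continuity of $\Phi_{h_A}(\cdot,y)$ on $[0,\infty)$ (with $\Phi_{h_A}(0+,y)=0$), the Vitali convergence theorem then gives $\Tilde{V}(\lambda,y)=\lim_{n}L(\bar\eta^{\,n},\lambda)=\mathbb{E}[\Phi_{h_A}(\lambda\zeta/B_{\tau},Y_{\tau})]$.

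The main obstacle — the technical heart of the proof — is to upgrade the limiting density $\zeta$ to an admissible one, i.e.\ to produce $\eta^{*}\in\mathcal{M}$ with $\zeta=Z^{\eta^{*}}_{\tau}$ so that Proposition \ref{prop2.1} can be invoked; a priori $\zeta$ lies only in the closed solid hull of $\mathcal{D}$ and might satisfy $\mathbb{E}[\zeta]<1$ (mass escaping to infinity), which the sublinear growth $\Phi_{h_A}(u,y)\le-C_{2}u^{q}$ does not by itself rule out at the level of the value. I would resolve this using (a) the reasonable asymptotic elasticity of $h_A$ — the power-type envelopes of Lemma \ref{lem2.1} give $AE_{+\infty}(h_A)\le\alpha(1-\gamma)<1$ — and (b) the boundedness of the market price of risk \eqref{M.def}, which makes $Z^{0}_{\tau}=\mathcal{E}^{W_{1}}_{\tau}(-\theta)$ have finite moments of all orders; these two inputs ensure that the optional decomposition of the limiting supermartingale density has no decreasing part, whence $\mathbb{E}[\zeta]=1$, and an application of the martingale representation theorem to $(\mathbb{E}[\zeta\mid\mathcal{F}_{t}])_{t\le\tau}$ — whose $dW_{1}$-integrand must equal $-\theta(Y)$ times that martingale — identifies $\zeta=Z^{\eta^{*}}_{\tau}$ for some $\eta^{*}\in\mathcal{M}$. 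Together with the value identity of the previous step, $\eta^{*}$ is then the required optimal solution of \eqref{dual.pro}; as a backup, strict monotonicity of $\Phi_{h_A}(\cdot,y)$ yields the contradiction (a dominating density would strictly lower the dual cost) that rules out $\mathbb{E}[\zeta]<1$ directly.
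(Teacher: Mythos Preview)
Your approach is genuinely different from the paper's, and the first three steps are sound: the two-sided power bound on $\Phi_{h_A}$ with exponent $q=\alpha/(\alpha-1)\in(0,1)$ is correct, the family $\{(Z^{\eta}_{\tau})^{q}\}$ is indeed bounded in $L^{1/q}$ and hence uniformly integrable, the density set $\mathcal{D}=\{Z^{\eta}_{\tau}:\eta\in\mathcal{M}\}$ is convex (a finite convex combination $\sum\omega_{m}Z^{\eta^{m}}_{\tau}$ is again of the form $\mathcal{E}^{W_{1}}_{\tau}(-\theta)\mathcal{E}^{W_{2}}_{\tau}(\bar\eta)$ with $\bar\eta$ a state-dependent convex combination of the $\eta^{m}$), and the \emph{density-level} convexity of $u\mapsto\Phi_{h_A}(u,y)$ does give $L(\bar\eta^{n},\lambda)\le\sum_{m}\omega_{m}^{n}L(\eta^{m},\lambda)$, so Koml\'os plus Vitali yields $\Tilde V(\lambda,y)=\mathbb{E}[\Phi_{h_A}(\lambda\zeta/B_{\tau},Y_{\tau})]$ for some a.s.\ limit $\zeta\ge0$ with $\mathbb{E}[\zeta]\le1$.

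The gap is exactly where you flag it: upgrading $\zeta$ to an element of $\mathcal{D}$. Neither of your proposed resolutions works as stated. The ``backup'' via strict monotonicity needs a density in $\mathcal{D}$ dominating $\zeta$, and there is no reason one exists; scaling $\zeta$ by $1/\mathbb{E}[\zeta]$ does not keep you in $\mathcal{D}$, and the closed solid hull of $\mathcal{D}$ genuinely contains supermartingale deflators that are not dominated by any martingale density. The appeal to asymptotic elasticity and optional decomposition is the right circle of ideas (Kramkov--Schachermayer), but you have not shown how $AE(h_A)<1$ and \eqref{M.def} force the decreasing part of the limiting supermartingale to vanish; in fact the standard theory only guarantees the dual minimizer lies in the bipolar of $\mathcal{D}$, not in $\mathcal{D}$ itself, and additional structure is needed to close this gap. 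Even if $\mathbb{E}[\zeta]=1$, you must also rule out $\mathbb{P}(\zeta=0)>0$ before invoking martingale representation.

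The paper avoids this closure issue entirely by a different, cleaner trick. Writing $Z^{\eta}_{\tau}=\mathcal{E}^{W_{1}}_{\tau}(-\theta)\,\mathcal{E}^{W_{2}}_{\tau}(\eta)$ and viewing $\tilde X_{\tau}:=\mathcal{E}^{W_{2}}_{\tau}(\eta)$ as the terminal wealth in an \emph{artificial} market with risky asset $d\tilde S_{t}=\tilde S_{t}\,dW_{2t}$ and zero interest rate, the dual problem \eqref{dual.pro} becomes an artificial \emph{primal} problem $\sup_{\tilde X}\mathbb{E}[-\Phi_{h_A}(\frac{\lambda}{B_{\tau}}\mathcal{E}^{W_{1}}_{\tau}(-\theta)\tilde X_{\tau},Y_{\tau})]$ with concave objective. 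Its own dual, after one Legendre inversion, reduces to $\inf_{\nu\in\mathcal{M}}\mathbb{E}[-h_{A}(c\,\mathcal{E}^{W_{1}}_{\tau}(\nu),Y_{\tau})]$. For $\alpha<0$ the map $z\mapsto-h_{A}(e^{z},y)$ \emph{is} convex and decreasing, so the convexity-in-$\nu$ argument of Proposition~\ref{existence.1} (i.e.\ the \cite{KL91} machinery) applies verbatim to this artificial dual, giving an optimizer there; the analogue of Proposition~\ref{prop2.1} then transfers existence back to the artificial primal, which is literally the original dual \eqref{dual.pro}. The point is that the ``dual of the dual'' restores the convexity that fails at the first level, so no Koml\'os/closure argument is needed at all.
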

\begin{proof}
We consider the artificial risky asset $\Tilde{S}=(\Tilde{S}_t)_{t\geq0}$ given by
$$d\Tilde{S}_t=\Tilde{S}_tdW_{2t},\quad t\in[0,\tau],$$
with $\Tilde{S}_0=1,$ and the artificial risk-free asset with interest rate 0. A trading strategy $\eta=(\eta_t)_{t\geq0}$ is a predictable process representing the admissible portfolio fraction invested in the risky asset $\Tilde{S}$ at time $t$. The resulting wealth process $\Tilde{X}$ satisfies $$d\Tilde{X}_t=\Tilde{X}_t\eta_tdW_{2t},\quad t\in[0,\tau],$$
with $\Tilde{X}_0=1.$ Recall that the Dol\'{e}ans-Dade exponential $\mathcal{E}_t^W(\cdot)$ is defined by \eqref{DD.exp}. We note that $\mathcal{E}^{W_2}_t(\eta)=\Tilde{X}_t$, $t\in[0,\tau]$. Furthermore, we define the set of wealth processes by
\begin{eqnarray}
\mathcal{V}(1,y)\hspace{-0.3cm}&:=&\hspace{-0.3cm}
\left\{\Tilde{X}:\Tilde{X}_t = 1+\int_0^{t}\Tilde{X}_s\eta_sdW_{2s}>0\text{ for } t\in[0,\tau],\, \eta\text{ is predictable and satisfies}\right.
\nonumber\\
\hspace{-0.3cm}&&\hspace{0cm}
\left.\mathbb{E}\left[\int_0^{\tau}\eta^2_sds\right]<\infty,\text{ and, }\mathbb{E}\left[\left(-\Phi_{h_A}\left(\frac{\lambda}{B_{\tau}}\mathcal{E}^{W_1}_{\tau}(-\theta)\Tilde{X}_{\tau},Y_{\tau}\right)\right)_-\right]<\infty\right\},\nonumber
\end{eqnarray}
with $Y_0=y\in\mathbb{R}$.
When $\alpha\in(-\infty,0)$, it follows from \eqref{phi(0)} and \eqref{phi(infty)} that $\Phi_{h_A}(x,y)\leq0$ for any $(x,y)\in\mathbb{R}_+\times\mathbb{R}$, and hence
$$\mathbb{E}\left[\left(-\Phi_{h_A}\left(\frac{\lambda}{B_{\tau}}\mathcal{E}^{W_1}_{\tau}(-\theta)\Tilde{X}_{\tau},Y_{\tau}\right)\right)_-\right]<\infty$$ is readily satisfied.
Therefore, it can be seen that if $\eta\in\mathcal{M}$ (recall that 
$\mathcal{M}$ is the set of progressively measurable processes $(\eta_t)_{t\geq0}$
such that $\mathbb{E}[\int_0^{\tau}\eta_s^2ds]<\infty$ and the local martingale by
\eqref{Z.process} is a true martingale), then the corresponding  wealth process $\Tilde{X}_t=\mathcal{E}_t^{W_2}(\eta)=1+\int_0^t\Tilde{X}_s\eta_sdW_{2s}\in\mathcal{V}(1,y)$; and vice versa.
Then, we can rewrite the dual problem \eqref{dual.pro} as
\begin{eqnarray}
\label{connection}
\Tilde{V}(\lambda,y)
\hspace{-0.3cm}&=&\hspace{-0.3cm}
\inf_{\eta\in\mathcal{M}}\mathbb{E}\left[\Phi_{h_A}\left(\lambda\frac{Z^{\eta}_{\tau}}{B_{\tau}},Y_{\tau}\right)\right]
\nonumber\\
\hspace{-0.3cm}&=&\hspace{-0.3cm}
-\sup_{\eta\in\mathcal{M}}\mathbb{E}\left[-\Phi_{h_A}\left(\frac{\lambda}{B_{\tau}}\mathcal{E}^{W_1}_{\tau}(-\theta)\mathcal{E}^{W_2}_{\tau}(\eta),Y_{\tau}\right)\right]
\nonumber\\
\hspace{-0.3cm}&=&\hspace{-0.3cm}
-\sup_{\Tilde{X}\in\mathcal{V}(1,y)}\mathbb{E}\left[-\Phi_{h_A}\left(\frac{\lambda}{B_{\tau}}\mathcal{E}^{W_1}_{\tau}(-\theta)\Tilde{X}_{\tau},Y_{\tau}\right)\right]
\nonumber\\
\hspace{-0.3cm}&=:&\hspace{-0.3cm}
-v(1,y),\quad y\in\mathbb{R}.
\end{eqnarray}
From \eqref{2.12}, \eqref{relationship} and the fact that $\Phi_{h_A}(\cdot,\cdot)\leq0$ when $\alpha\in(-\infty,0)$, it holds that
\begin{eqnarray}
\lambda\geq \Tilde{V}(\lambda,y)+\lambda \geq \sup_{X\in\Tilde{\mathcal{U}}_{0,\tau}(1,y)}\mathbb{E}\left[\frac{1}{\alpha}X_{\tau}^{\alpha}h(Y_{\tau})+\frac{1}{\alpha}A(Y_{\tau})X_{\tau}^{\alpha(1-\gamma)}\right]>-\infty.\nonumber
\end{eqnarray}
Consequently, one obtains that the artificial primal problem 
\begin{eqnarray}\label{ar.primal}
v(1,y)=\sup_{\Tilde{X}\in\mathcal{V}(1,y)}\mathbb{E}\left[-\Phi_{h_A}\left(\frac{\lambda}{B_{\tau}}\mathcal{E}^{W_1}_{\tau}(-\theta)\Tilde{X}_{\tau},Y_{\tau}\right)\right]<\infty,\quad y\in\mathbb{R}.
\end{eqnarray}
Furthermore, the artificial dual problem associated to the artificial primal problem \eqref{ar.primal} is defined by
\begin{eqnarray}\label{ar.dual}
\Tilde{v}(z,y)
\hspace{-0.3cm}&:=&\hspace{-0.3cm}
\inf_{\nu\in\mathcal{M}}\left\{\sup_{\Tilde{X}\in\mathcal{F}^{+}_{\tau}}\mathbb{E}\left[-\Phi_{h_A}\left(\frac{\lambda}{B_{\tau}}\mathcal{E}_{\tau}^{W_1}(-\theta)\Tilde{X},Y_{\tau}\right)-z\mathcal{E}_{\tau}^{W_1}(\nu)\Tilde{X}\right]\right\}
\nonumber\\
\hspace{-0.3cm}&=&\hspace{-0.3cm}
\inf_{\nu\in\mathcal{M}}\mathbb{E}\left[-\Phi_{h_A}\left(\frac{\partial}{\partial x}h_A\left(z\frac{B_{\tau}\mathcal{E}_{\tau}^{W_1}(\nu)}{\lambda\mathcal{E}_{\tau}^{W_1}(-\theta)},Y_{\tau}\right),Y_{\tau}\right)-\frac{\partial}{\partial x}h_A\left(z\frac{B_{\tau}\mathcal{E}_{\tau}^{W_1}(\nu)}{\lambda\mathcal{E}_{\tau}^{W_1}(-\theta)},Y_{\tau}\right)z\frac{B_{\tau}\mathcal{E}_{\tau}^{W_1}(\nu)}{\lambda\mathcal{E}_{\tau}^{W_1}(-\theta)}\right]
\nonumber\\
\hspace{-0.3cm}&=&\hspace{-0.3cm}
\inf_{\nu\in\mathcal{M}}\mathbb{E}\left[-h_A\left(z\frac{B_{\tau}}{\lambda\mathcal{E}_{\tau}^{W_1}(-\theta)}\mathcal{E}_{\tau}^{W_1}(\nu),Y_{\tau}\right)\right],\quad (z,y)\in\mathbb{R}_+\times\mathbb{R},
\end{eqnarray}
where, in the last equality, we have used the property (4.5) in \cite{KL91}. 
It follows from Lemma \ref{lem2.1} and \eqref{M.def} that 
\begin{eqnarray}
0<\Tilde{v}(z,y)
%\hspace{-0.3cm}&=&\hspace{-0.3cm}
%\inf_{\eta\in\mathcal{M}}\mathbb{E}\left[-h_A\left(\frac{zB_{\tau}}{\lambda\mathcal{E}_{\tau}^{W_1}(\theta)}\mathcal{E}_{\tau}^{W_1}(\eta),Y_{\tau}\right)\right]
%\nonumber\\
\hspace{-0.3cm}&\leq&\hspace{-0.3cm}
\mathbb{E}\left[-h_A\left(\frac{zB_{\tau}}{\lambda\mathcal{E}_{\tau}^{W_1}(-\theta)}\mathcal{E}_{\tau}^{W_1}(\nu),Y_{\tau}\right)\right]\Bigg|_{\nu\equiv0}
\nonumber\\
\hspace{-0.3cm}&\leq&\hspace{-0.3cm}
-\kappa_2\left(1+\left(\frac{zB_{\tau}}{\lambda}\right)^{\rho_2}\mathbb{E}\left[\exp\left(\rho_2\int_0^{\tau}\theta(Y_t)dW_{1t}+\rho_2\int_0^{\tau}\theta^2(Y_t)dt\right)\right]\right)
\nonumber\\
\hspace{-0.3cm}&<&\hspace{-0.3cm}
\infty,\quad(z,y)\in\mathbb{R}_+\times\mathbb{R}.\nonumber
\end{eqnarray}

Due to the strict decreasing property and convexity of $\mathbb{R}_{+}\ni x\mapsto-h_{A}(e^{x},y)\in\mathbb{R}_{+}$, one can verify that the functional $\mathbb{E}\big[-h_A\big(zB_{\tau}\mathcal{E}_{\tau}^{W_1}(\nu)/\lambda\mathcal{E}_{\tau}^{W_1}(-\theta),Y_{\tau}\big)\big]$ is convex with respect to $\nu$ on $\mathcal{M}$, and then establish the existence of optimal solution to the artificial dual problem \eqref{ar.dual} by employing a similar argument as that of Proposition \ref{existence.1}. Following the arguments of Proposition \ref{prop2.1}, we know that the existence of optimal solution to the artificial dual problem \eqref{ar.dual} implies the existence of optimal solution to the artificial primal problem \eqref{ar.primal}, and then, by \eqref{connection}, implies the existence of optimal solution to the dual problem \eqref{dual.pro}.
\end{proof}

In the next result, we show that there indeed exists a unique $A^*(\cdot)$ solving equation \eqref{A*.def} and we give the upper and lower bounds for the fixed-point $A^*(\cdot)$.
%by applying similar argument as used in Proposition \ref{fixed.log} and Lemma \ref{fixed.power.A}, and hence, we omit the proof.

\begin{prop}\label{fixed.power}
%Define a functional $\Psi: {C}_b^{+}(\mathbb{R})\mapsto {C}_b^{+}(\mathbb{R})$ as
Recall that the functional $\Psi: {C}_b^{+}(\mathbb{R})\mapsto {C}_b^{+}(\mathbb{R})$ is given by \eqref{problem3}.
Then $\Psi$ is a contraction on the metric space $({C}_b^{+}(\mathbb{R}),d)$ with the metric $d$ defined by $d(x,y):=\sup_{t\in\mathbb{R}}|x(t)-y(t)|,$  $x,y\in {C}_b^{+}(\mathbb{R})$, and $\Psi$ admits a unique fixed-point $A^*(\cdot)$ such that $A^*(y)=\Psi(A^*;y)$, for all $y\in\mathbb{R}$.
Moreover, the unique fixed-point $A^*(\cdot)$ of $\Psi$ satisfies, when $\alpha\in(0,1)$,
    $$\frac{me^{(r\alpha-\delta)\tau}}{(1-e^{-(\delta-r\alpha(1-\gamma))\tau})}\leq A^*(y)\leq \frac{e^{(\zeta(\alpha)-\delta)\tau}}{(1-e^{-(\delta-\zeta(\alpha(1-\gamma)))\tau})},\quad y\in\mathbb{R};$$
    and when $\alpha\in(-\infty,0)$,
    $$\frac{m e^{(\zeta(\alpha)-\delta)\tau}}{(1-e^{-(\delta-\zeta(\alpha(1-\gamma)))\tau})}\leq A^*(y)\leq  
 \frac{e^{(r\alpha-\delta)\tau}}{(1-e^{-(\delta-r\alpha(1-\gamma))\tau})},\quad y\in\mathbb{R}.$$
\end{prop}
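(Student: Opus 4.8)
The plan is to verify the Banach fixed-point hypotheses for $\Psi$ on $(C_b^+(\mathbb{R}),d)$ and then bootstrap the contraction iteration to get the explicit bounds. First I would check that $\Psi$ maps $C_b^+(\mathbb{R})$ into itself. Positivity of $\Psi(A;\cdot)$ for $\alpha\in(0,1)$ is immediate since $h_A>0$ (Lemma~\ref{lem2.1}) and the factor $\alpha e^{-\delta\tau}>0$; for $\alpha<0$ one uses that $h_A<0$ together with $\alpha<0$ so the product is again nonnegative. Boundedness follows from the two-sided estimates already recorded in the Proposition giving the upper and lower bounds of $V$ (equation~\eqref{2.12}), applied to the one-period problem: indeed $\Psi(A;y)/\alpha=e^{-\delta\tau}\sup_X\mathbb{E}[h_A(X_\tau,Y_\tau)]$, and inserting the growth bound \eqref{0<h}--\eqref{0>h} for $h_A$ together with the moment estimate $\sup_{X\in\mathcal{U}_0(1,y)}\mathbb{E}[X_\tau^{\rho}]\le e^{\zeta(\rho)\tau}$ from Proposition~3.1 of \cite{ZT01} (valid for $\rho\in\{\alpha,\alpha(1-\gamma)\}$) yields a bound uniform in $y$. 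Continuity of $y\mapsto\Psi(A;y)$ follows as in Lemma~2.2 of \cite{CH05}, using continuity of the SDE coefficients and of $h$, $A$, together with dominated convergence justified by the same moment bounds.

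The core step is the contraction estimate. Given $A_1,A_2\in C_b^+(\mathbb{R})$, I would write, for each fixed $y$,
\begin{align}
\Psi(A_1;y)-\Psi(A_2;y)
&= e^{-\delta\tau}\Bigl(\sup_{X}\mathbb{E}\bigl[X_\tau^\alpha h(Y_\tau)+A_1(Y_\tau)X_\tau^{\alpha(1-\gamma)}\bigr]
-\sup_{X}\mathbb{E}\bigl[X_\tau^\alpha h(Y_\tau)+A_2(Y_\tau)X_\tau^{\alpha(1-\gamma)}\bigr]\Bigr).\nonumber
\end{align}
Using the elementary inequality $|\sup_X F_1(X)-\sup_X F_2(X)|\le\sup_X|F_1(X)-F_2(X)|$, this is bounded by
$e^{-\delta\tau}\sup_{X\in\mathcal{U}_0(1,y)}\mathbb{E}\bigl[|A_1(Y_\tau)-A_2(Y_\tau)|\,X_\tau^{\alpha(1-\gamma)}\bigr]\le e^{-\delta\tau}\,d(A_1,A_2)\,\sup_{X}\mathbb{E}[X_\tau^{\alpha(1-\gamma)}]$.
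By Proposition~3.1 of \cite{ZT01}, $\sup_{X}\mathbb{E}[X_\tau^{\alpha(1-\gamma)}]\le e^{\zeta(\alpha(1-\gamma))\tau}$ (for $\alpha<0$ one must keep track of signs; since $\alpha(1-\gamma)<0$ there the supremum of the negative quantity translates into a lower bound on $\mathbb{E}[X_\tau^{\alpha(1-\gamma)}]$, but the modulus $|A_1-A_2|$ is nonnegative so one instead bounds $\mathbb{E}[X_\tau^{\alpha(1-\gamma)}]$ above using $X_\tau=e^{r\tau}$ achieving the bound $e^{r\alpha(1-\gamma)\tau}\le e^{\zeta(\alpha(1-\gamma))\tau}$ — I will handle the two cases separately). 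Taking the supremum over $y$ gives $d(\Psi(A_1),\Psi(A_2))\le e^{-(\delta-\zeta(\alpha(1-\gamma))\vee 0)\tau}\,d(A_1,A_2)$, and Assumption~\ref{ass1} ($\delta>\zeta(\alpha(1-\gamma))\vee0$) forces the constant to lie in $(0,1)$. Banach's fixed point theorem then yields the unique $A^*$.

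For the explicit two-sided bounds on $A^*$, I would run the contraction iteration starting from the constant function $A_0\equiv 0$: then $A^* =\lim_n \Psi^n(A_0)$ in $d$. One computes $\Psi(0;y)=\alpha e^{-\delta\tau}\sup_X\mathbb{E}[\tfrac1\alpha X_\tau^\alpha h(Y_\tau)]$, which by the bounds $m\le h\le 1$ and again Proposition~3.1 of \cite{ZT01} is squeezed between $m e^{(r\alpha-\delta)\tau}$ and $e^{(\zeta(\alpha)-\delta)\tau}$ (with roles of $m$ and $1$ swapped when $\alpha<0$, exactly as in \eqref{2.12}). Then by monotonicity of $\Psi$ in the $A$-argument (clear from \eqref{problem3}: larger $A$ gives a larger objective when $\alpha>0$, and the sign flips combine to the same monotonicity when $\alpha<0$) one shows inductively that $\Psi^n(0;y)$ stays in the claimed interval, since plugging a constant bound $c$ for $A$ into $\Psi$ adds a geometric term $\alpha e^{-\delta\tau}\cdot\tfrac1\alpha c\sup_X\mathbb{E}[X_\tau^{\alpha(1-\gamma)}]$, i.e. multiplies by $e^{-(\delta-\zeta(\alpha(1-\gamma)))\tau}<1$; summing the geometric series $\sum_{k\ge0} e^{-k(\delta-\zeta(\alpha(1-\gamma)))\tau}=(1-e^{-(\delta-\zeta(\alpha(1-\gamma)))\tau})^{-1}$ produces precisely the stated denominators. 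Passing to the limit $n\to\infty$ gives the bounds for $A^*$.

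The main obstacle I anticipate is bookkeeping the sign reversals in the case $\alpha<0$: there $h_A<0$, the supremum over admissible $X$ of a negative payoff is delicate, and the inequalities from \cite{ZT01} must be reoriented. The cleanest route is to treat $\alpha\in(0,1)$ and $\alpha<0$ as two parallel arguments throughout — verifying $\Psi$-self-map, contraction constant, and the geometric-series bound — rather than trying to unify them, since the constant $e^{-(\delta-\zeta(\alpha(1-\gamma))\vee0)\tau}<1$ and the monotonicity of $\Psi$ in $A$ genuinely hold in both cases but for slightly different reasons.
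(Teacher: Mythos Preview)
Your contraction argument for $\alpha\in(0,1)$ is essentially the paper's: the paper first invokes the existence of an optimizer $X_1^*$ (from Proposition~\ref{prop2.1} together with Propositions~\ref{existence.1}--\ref{existence.2}) and writes $\Psi(A_1;y^*)-\Psi(A_2;y^*)\le e^{-\delta\tau}\mathbb{E}\bigl[(A_1-A_2)(Y_\tau)(X_{1\tau}^*)^{\alpha(1-\gamma)}\bigr]$, then bounds the last factor by $\sup_X\mathbb{E}[X_\tau^{\alpha(1-\gamma)}]\le e^{\zeta(\alpha(1-\gamma))\tau}$. Your elementary inequality $|\sup F_1-\sup F_2|\le\sup|F_1-F_2|$ reaches the same endpoint without ever naming the optimizer, which is a mild simplification. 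For the bounds on $A^*$ the paper does \emph{not} iterate $\Psi^n(0)$; instead it writes $A^*=\Psi(A^*)$, plugs in the risk-free portfolio $X=e^{r\tau}$ for one side and the moment estimate from \cite{ZT01} for the other, obtaining linear inequalities of the form $\underline{A^*}\ge c_1+c_2\underline{A^*}$ and $\overline{A^*}\le c_3+c_4\overline{A^*}$ which are solved algebraically. Your geometric-series route produces the same constants and is equally valid, just slightly less direct.

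There is, however, a genuine gap in your treatment of $\alpha<0$ (with $\gamma<1$). When $\alpha(1-\gamma)<0$ the quantity $\sup_{X\in\tilde{\mathcal{U}}_{0,\tau}(1,y)}\mathbb{E}[X_\tau^{\alpha(1-\gamma)}]$ equals $+\infty$: take the constant $X\equiv\epsilon$ with $0<\epsilon\le e^{r\tau}$, which lies in $\tilde{\mathcal{U}}_{0,\tau}(1,y)$, and let $\epsilon\downarrow0$. Hence your chain $|\sup F_1-\sup F_2|\le\sup_X|F_1-F_2|\le d(A_1,A_2)\sup_X\mathbb{E}[X_\tau^{\alpha(1-\gamma)}]$ is vacuous in this regime. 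Your proposed fix---``bound $\mathbb{E}[X_\tau^{\alpha(1-\gamma)}]$ above using $X_\tau=e^{r\tau}$''---does not help, because the supremum over $X$ has already been taken in the previous step and cannot be undone by substituting a particular $X$. What is actually required is the optimizer-based route: evaluate the difference at the \emph{specific} maximizer $X^*$ of the one-period problem (whose existence for $\alpha<0$ is exactly Proposition~\ref{existence.2}) and control $\mathbb{E}[(X^*)^{\alpha(1-\gamma)}]$ for that particular wealth rather than over all admissible $X$. The paper itself writes out only the case $\alpha\in(0,1)$ in detail and declares the other ``similar'', but the point is that the correct modification for $\alpha<0$ genuinely needs the duality apparatus developed earlier in the section, not the elementary $|\sup-\sup|$ bound you propose.
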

\begin{proof}
Because the arguments for cases $\alpha\in(0,1)$ and $\alpha\in(-\infty,0)$ are similar, we shall only present the proof when $\alpha\in(0,1)$.
It is not difficult to verify that the metric space $(C^+_b(\mathbb{R}),d)$ is complete. Consider any $A_1,A_2\in C^+_b(\mathbb{R})$. By Proposition \ref{prop2.1}, there exists an optimizer $X^*_1\in\Tilde{\mathcal{U}}_{0,\tau}(1,y)$ such that 
\begin{eqnarray}
\Psi(A_1) 
\hspace{-0.3cm}&=&\hspace{-0.3cm}
\alpha\sup_{X\in\Tilde{\mathcal{U}}_{0,\tau}(1,y)}\mathbb{E}\left[\frac{1}{\alpha}e^{-\delta\tau}X^{\alpha}_{\tau}h(Y_{\tau})+\frac{1}{\alpha}A_1(Y_{\tau})e^{-\delta\tau}X_{\tau}^{\alpha(1-\gamma)}\right]
\nonumber\\
\hspace{-0.3cm}&=&\hspace{-0.3cm}
\mathbb{E}\left[e^{-\delta\tau}(X^*_{1\tau})^{\alpha}h(Y_{\tau})+A_1(Y_{\tau})e^{-\delta\tau}(X^*_{1\tau})^{\alpha(1-\gamma)}\right],\nonumber
\end{eqnarray}
and, by the definition of metric $d$, there exists a constant $y^*\in\mathbb{R}$ such that $d(\Psi(A_1),\Psi(A_2))=|\Psi(A_1;y^*)-\Psi(A_2;y^*)|$. Without loss of generality, we assume that $\Psi(A_1;y^*)\geq\Psi(A_2;y^*)$. Then, it holds that
\begin{eqnarray}
d(\Psi(A_1),\Psi(A_2))
\hspace{-0.3cm}&=&\hspace{-0.3cm}
\left[\mathbb{E}\left[e^{-\delta\tau}(X^*_{1\tau})^{\alpha}h(Y_{\tau})+A_1(Y_{\tau})e^{-\delta\tau}(X^*_{1\tau})^{\alpha(1-\gamma)}\right]\right.
\nonumber\\
\hspace{-0.3cm}&&\hspace{-0.3cm}
\left.-\sup_{X\in\Tilde{\mathcal{U}}_{0,\tau}(1,y)}\mathbb{E}\left[e^{-\delta\tau}X^{\alpha}_{\tau}h(Y_{\tau})+A_2(Y_{\tau})e^{-\delta\tau}X_{\tau}^{\alpha(1-\gamma)}\right]\right]\Bigg|_{y=y^*}
\nonumber\\
\hspace{-0.3cm}&\leq&\hspace{-0.3cm}
\left[\mathbb{E}\left[e^{-\delta\tau}(X^*_{1\tau})^{\alpha}h(Y_{\tau})+A_1(Y_{\tau})e^{-\delta\tau}(X^*_{1\tau})^{\alpha(1-\gamma)}\right]\right.
\nonumber\\
\hspace{-0.3cm}&&\hspace{-0.3cm}
\left.-\mathbb{E}\left[e^{-\delta\tau}(X_{1\tau}^*)^{\alpha}h(Y_{\tau})+A_2(Y_{\tau})e^{-\delta\tau}(X^*_{1\tau})^{\alpha(1-\gamma)}\right]\right]\Big|_{y=y^*}
\nonumber\\
\hspace{-0.3cm}&\leq&\hspace{-0.3cm}
e^{-\delta\tau}\sup_{X\in\Tilde{\mathcal{U}}_{0,\tau}(1,y)}\mathbb{E}\left[X_{\tau}^{\alpha(1-\gamma)}\right]\Big|_{y=y^*}\times d(A_1,A_2)
\nonumber\\
\hspace{-0.3cm}&\leq&\hspace{-0.3cm}
e^{-(\delta-\zeta(\alpha(1-\gamma)))\tau}d(A_1,A_2),\nonumber
\end{eqnarray}
where in the last inequality, we have used the fact that
$$\sup_{X\in\Tilde{\mathcal{U}}_{0,\tau}(1,y)}\mathbb{E}\left[X_{\tau}^{\alpha(1-\gamma)}\right]=e^{\zeta(\alpha(1-\gamma))\tau},\quad y\in\mathbb{R}.$$
Due to the standing Assumption \ref{ass1}, the existence and uniqueness of a fixed-point $A^*\in C^+_b(\mathbb{R})$ for $\Psi$ immediately
follow from the Banach contraction theorem.

Next, define $\underline{A}:=\inf_{y\in\mathbb{R}}A(y)$ and $\overline{A}:=\sup_{y\in\mathbb{R}}A(y)$ for any $A(\cdot)\in C_b^+(\mathbb{R})$.
Noting that $X=e^{r\tau}$ is admissible to the problem \eqref{problem}, we readily obtain that 
\begin{eqnarray}
A^*(y)=\Psi(A^*;y)\geq \frac{1}{\alpha}me^{(r\alpha-\delta)\tau}+\frac{1}{\alpha}\underline{A}^*e^{-(\delta-r\alpha(1-\gamma))\tau},\quad y\in\mathbb{R}.\nonumber
\end{eqnarray}
Due to the arbitrariness of $y$, it holds that
$$\underline{A}^*\geq \frac{1}{\alpha}me^{(r\alpha-\delta)\tau}+\frac{1}{\alpha}\underline{A}^*e^{-(\delta-r\alpha(1-\gamma))\tau},$$
which yields the lower bound of $A^*$. For the upper bound, using \eqref{inv.pro}, we have
\begin{eqnarray}
A^*(y)=\Psi(A^*;y)
\hspace{-0.3cm}&=&\hspace{-0.3cm}
\alpha\sup_{X\in\Tilde{\mathcal{U}}_{0,\tau}(1,y)}\mathbb{E}\left[\frac{1}{\alpha}e^{-\delta\tau}X_{\tau}^{\alpha}{h(Y_{\tau})}+\frac{1}{\alpha}{A^*(Y_{\tau})}e^{-\delta\tau}X_{\tau}^{\alpha(1-\gamma)}\right]
\nonumber\\
\hspace{-0.3cm}&\leq&\hspace{-0.3cm}
\frac{1}{\alpha}e^{-\delta\tau}\sup_{X\in\Tilde{\mathcal{U}}_{0,\tau}(1,y)}\mathbb{E}\left[X_{\tau}^{\alpha}\right]+\frac{1}{\alpha}\overline{A}^*e^{-\delta\tau}\sup_{X\in\Tilde{\mathcal{U}}_{0,\tau}(1,y)}\mathbb{E}\left[X_{\tau}^{\alpha(1-\gamma)}\right]
\nonumber\\
\hspace{-0.3cm}&=&\hspace{-0.3cm}
\frac{1}{\alpha}e^{(\zeta(\alpha)-\delta)\tau}+\frac{1}{\alpha}\overline{A}^*e^{-(\delta-\zeta(\alpha(1-\gamma)))\tau},\quad y\in\mathbb{R}.\nonumber
\end{eqnarray}
Due to the arbitrariness of $y$, it holds that $$\overline{A}^*\leq \frac{1}{\alpha}e^{(\zeta(\alpha)-\delta)\tau}+\overline{A}^*e^{-(\delta-\zeta(\alpha(1-\gamma)))\tau},$$which, together with  $\delta>\zeta(\alpha(1-\gamma))\vee0$ from Assumption \ref{ass1}, gives the upper bound.
\end{proof}

With the preparations, our next goal is to identify the solution (i.e., the optimal trading strategy as well as the optimal value function) to the problem \eqref{problem}. 
 To this end, we first give a representation of the wealth $X$ in the next Lemma \ref{prop.euquiv}. Its proof is similar to Theorem 2.3 in \cite{CH05}, and is hence omitted.
\begin{lem}
\label{prop.euquiv}
Let $\hat{\eta}\in\mathcal{M}$. The following statements are equivalent.
\begin{itemize}
    \item [\rm{(i)}] $\hat{X}\in\Tilde{\mathcal{U}}_{0,\tau}(x,y)$ and $\mathbb{E}\left[\frac{\hat{X}_{\tau}Z^{\hat{\eta}}_{\tau}}{B_{\tau}}\right]=x.$
    \item[\rm{(ii)}] There exists an $X^{x,y,\pi}\in\mathcal{U}_0(x,y)$ such that  $X^{x,y,\pi}_{\tau}\equiv\hat{X}$ and $\frac{X^{x,y,\pi}}{B}$ is a $\mathbb{P}^{\hat{\eta}}$-martingale with the representation
    $$\frac{X^{x,y,\pi}_t}{B_t}=x+\int_0^t\psi_{s}dW^{\hat{\eta}}_{1s},\quad t\in[0,\tau],$$ where $\psi$ is a progressively measurable process with $\int_0^{\tau}\psi_u^2du<\infty$ almost surely.
\end{itemize}
\end{lem}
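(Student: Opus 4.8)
The plan is to establish the equivalence by producing the hedging portfolio in one direction via the martingale representation theorem, and by exploiting the supermartingale property of $X Z^\eta/B$ from \eqref{supermaringale} in the other. For the implication (ii)$\Rightarrow$(i), suppose $X^{x,y,\pi}\in\mathcal{U}_0(x,y)$ has the stated representation as a $\mathbb{P}^{\hat\eta}$-martingale $X^{x,y,\pi}/B = x + \int_0^\cdot \psi_s\,dW^{\hat\eta}_{1s}$. Then $\mathbb{E}^{\mathbb{P}^{\hat\eta}}[X^{x,y,\pi}_\tau/B_\tau]=x$, which by the definition $d\mathbb{P}^{\hat\eta}=Z^{\hat\eta}_\tau\,d\mathbb{P}$ rewrites as $\mathbb{E}[\hat X_\tau Z^{\hat\eta}_\tau/B_\tau]=x$. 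For membership in $\Tilde{\mathcal{U}}_{0,\tau}(x,y)$, I would invoke \eqref{supermaringale}: for \emph{every} $\eta\in\mathcal{M}$ the process $X^{x,y,\pi}Z^\eta/B$ is a nonnegative $\mathbb{P}$-local martingale hence a $\mathbb{P}$-supermartingale, so $\mathbb{E}[\hat X_\tau Z^\eta_\tau/B_\tau]\le x$ for all $\eta\in\mathcal{M}$, which is precisely the defining inequality of $\Tilde{\mathcal{U}}_{0,\tau}(x,y)$ in \eqref{tilde.U.power}.

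For the converse (i)$\Rightarrow$(ii), given $\hat X\in\Tilde{\mathcal{U}}_{0,\tau}(x,y)$ with $\mathbb{E}[\hat X_\tau Z^{\hat\eta}_\tau/B_\tau]=x$, I would define the $\mathbb{P}$-martingale $N_t:=\mathbb{E}[\hat X_\tau Z^{\hat\eta}_\tau/B_\tau\mid\mathcal{F}_t]$ on $[0,\tau]$, with $N_0=x$ and $N_\tau=\hat X_\tau Z^{\hat\eta}_\tau/B_\tau$. Applying the martingale representation theorem for the two-dimensional Brownian motion $W$, write $N_t=x+\int_0^t\varphi_{1s}\,dW_{1s}+\int_0^t\varphi_{2s}\,dW_{2s}$ for progressively measurable integrands. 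The key computation is then to define $X^{x,y,\pi}_t/B_t := N_t/Z^{\hat\eta}_t$ and apply It\^o's formula, using the dynamics of $Z^{\hat\eta}$ in \eqref{z.process} (under $\mathbb{P}^{\hat\eta}$: $dZ^{\hat\eta}_t = Z^{\hat\eta}_t(-\theta(Y_t)\,dW^{\hat\eta}_{1t}+\hat\eta_t\,dW^{\hat\eta}_{2t})$ after collecting the drift, equivalently the $\mathbb{P}$-dynamics); one checks that the $dW^{\hat\eta}_{2t}$ coefficient in $d(X^{x,y,\pi}_t/B_t)$ must vanish because $\hat X_\tau$ is replicable by trading only in $S$ (which is driven by $W_1$ alone, cf.\ \eqref{dX/B}), leaving $d(X^{x,y,\pi}_t/B_t)=\psi_t\,dW^{\hat\eta}_{1t}$ with $\psi$ derived from $\varphi_1,\varphi_2,\theta,\hat\eta$. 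Since $\mathbb{E}[\hat X_\tau Z^{\hat\eta}_\tau/B_\tau]=x$ the martingale $N$ actually has constant expectation $x$, so $X^{x,y,\pi}/B$ is a genuine $\mathbb{P}^{\hat\eta}$-martingale (not merely a local one) with $X^{x,y,\pi}_\tau=\hat X_\tau$; recovering $\pi$ from $\psi$ via $\psi_t = (X^{x,y,\pi}_t/B_t)\pi_t\sigma(Y_t)$ and checking the integrability and positivity conditions confirms $X^{x,y,\pi}\in\mathcal{U}_0(x,y)$.

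The main obstacle I anticipate is the step showing the $W_2$-component of $X^{x,y,\pi}/B$ vanishes, i.e.\ that $N/Z^{\hat\eta}$ has no $dW^{\hat\eta}_{2t}$ term: this is where the structure of the completed market and the choice of $\hat\eta$ enter crucially, and one must argue that the supermartingale constraint against \emph{all} $\eta\in\mathcal{M}$ together with the equality $\mathbb{E}[\hat X_\tau Z^{\hat\eta}_\tau/B_\tau]=x$ forces the replicating wealth to live in the span of $W_1$; technically this follows by perturbing $\hat\eta$ and using that $X^{x,y,\pi}Z^\eta/B$ remains a supermartingale for every $\eta$, so the equality can only be preserved if there is no orthogonal noise. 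Since the paper states this is analogous to Theorem 2.3 in \cite{CH05}, I would carry out the It\^o computation in detail and otherwise cite that reference for the finer integrability bookkeeping rather than reproving it from scratch.
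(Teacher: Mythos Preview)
The paper omits this proof entirely, deferring to Theorem~2.3 of \cite{CH05}; your outline follows the same route as that reference, and the direction (ii)$\Rightarrow$(i) is correct as written.

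For (i)$\Rightarrow$(ii), however, your argument for why the $dW^{\hat\eta}_{2}$-component of $N/Z^{\hat\eta}$ vanishes is circular. You assert that ``$\hat X_\tau$ is replicable by trading only in $S$'' and later that ``$X^{x,y,\pi}Z^\eta/B$ remains a supermartingale for every $\eta$,'' but both statements presuppose that the candidate process $X^{x,y,\pi}:=B\,N/Z^{\hat\eta}$ is already a self-financing wealth process in $(S,B)$, which is exactly what you are trying to prove; identity \eqref{supermaringale} applies only to genuine wealth processes, not to an arbitrary process defined as $B\,N/Z^{\hat\eta}$. The correct argument works purely from the terminal constraint: membership $\hat X\in\Tilde{\mathcal{U}}_{0,\tau}(x,y)$ means $\mathbb{E}[\hat X Z^\eta_\tau/B_\tau]\le x$ for \emph{all} $\eta\in\mathcal{M}$, with equality at $\hat\eta$, so $\hat\eta$ maximizes the map $\eta\mapsto\mathbb{E}[\hat X Z^\eta_\tau/B_\tau]$. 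Writing the $\mathbb{P}^{\hat\eta}$-martingale $N_t/Z^{\hat\eta}_t=x+\int_0^t\psi_{1s}\,dW^{\hat\eta}_{1s}+\int_0^t\psi_{2s}\,dW^{\hat\eta}_{2s}$ and perturbing $\hat\eta\to\hat\eta+\varepsilon a$ for bounded progressively measurable $a$, the first-order condition at $\varepsilon=0$ reads
\[
0=\mathbb{E}\Big[\frac{\hat X Z^{\hat\eta}_\tau}{B_\tau}\int_0^\tau a_s\,dW^{\hat\eta}_{2s}\Big]
=\mathbb{E}^{\mathbb{P}^{\hat\eta}}\Big[\int_0^\tau \psi_{2s}\,a_s\,ds\Big]\quad\text{for all such }a,
\]
forcing $\psi_2\equiv 0$. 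Once this replaces the circular step, $\pi$ is recovered from $\psi_1$ via \eqref{dX/B} and the remainder of your sketch goes through.
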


For the later use, we introduce the following notations. Write $(\mathcal{M}_n)_{n\geq1}$ as a series of sets of progressively measurable processes $(\eta_{nt})_{t\in[T_{n-1},T_n)}$ such that $\mathbb{E}[\int_{T_{n-1}}^{T_n}\eta_{nt}^2dt]<\infty$ with $\mathcal{M}_1=\mathcal{M}$, and define the processes 
$(Z^{\eta_n}_{T_n}/Z^{\eta_{n-1}}_{T_{n-1}})_{n\geq1}$ by 
\begin{eqnarray}\label{def.Z.eta}
\frac{Z^{\eta_n}_{T_n}}{Z^{\eta_{n-1}}_{T_{n-1}}}:=\exp\left(-\int_{T_{n-1}}^{T_n}[\theta(Y_s)dW_{1s}-\eta_{ns}dW_{2s}]-\frac{1}{2}\int_{T_{n-1}}^{T_n}[\theta^2(Y_s)+\eta^2_{ns}]ds\right),
\end{eqnarray}
with $\eta_0\equiv0$ and $Z^{\eta_0}_{T_0}=1$.
%Write the function $x^*_{h_{A}}(\cdot,\cdot)$ as the solution of equation \eqref{2.30.v0}.

We are now ready to state the main result in this section.
\begin{thm}\label{thm3.1}
Let $(\eta_n^*,\lambda_n^*)_{n\geq1}$ 
be the optimal solutions to a family of dual problems as follows
\begin{eqnarray}
\inf_{\eta_n\in\mathcal{M}_n,\lambda_n>0}(L_n(\eta_n,\lambda_n)+\lambda_n)
\hspace{-0.3cm}&:=&\hspace{-0.3cm}
\inf_{\eta_n\in\mathcal{M}_n,\lambda_n>0}\left\{\sup_{X\in\mathcal{F}_{T_{n}}^+}\left\{\mathbb{E}\left[\frac{1}{\alpha}X^{\alpha}{h(Y_{T_n})}+{\frac{1}{\alpha}}A(Y_{T_n})X^{\alpha(1-\gamma)}\right.\right.\right.
\nonumber\\
\hspace{-0.3cm}&&\hspace{2.3cm}
\left.\left.-\lambda_n\left.X\frac{Z_{T_n}^{\eta_n}/B_{T_n}}{Z^{\eta_{n-1}}_{T_{n-1}}/B_{T_{n-1}}}\bigg|\mathcal{F}_{T_{n-1}}\right]\right\}+\lambda_n\right\},\nonumber
\end{eqnarray}
and 
\begin{eqnarray}\label{X^*}
X^*_{T_n}:=X^*_{T_{n-1}}x^*_{h_{A^*}}\left(\lambda_n^*\frac{Z^{\eta_{n}^*}_{T_n}/B_{T_n}}{Z^{\eta_{n-1}^*}_{T_{n-1}}/B_{T_{n-1}}},Y_{T_{n}}\right),\quad n\geq 1,
\end{eqnarray}
and $X^*_{T_0}=x$, where the function $x^*_{h_{A^*}}(\cdot,\cdot)$ is defined by \eqref{phi(y)=h} with $A^*(\cdot)\in C^+_b(\mathbb{R})$ being the unique fixed-point of the function $\Psi$ defined in \eqref{problem3}, and 
the process $Z^{\eta_n^*}_{T_n}$ is given by \eqref{def.Z.eta} with $\eta_n$ replaced by $\eta^*_n$ for all $n\geq1$. 
Then, the value function of the problem \eqref{problem} is given by
\begin{eqnarray}
V(x,y)={\frac{1}{\alpha}}A^*(y)x^{\alpha(1-\gamma)},\quad(x,y)\in\mathbb{R}_+\times\mathbb{R}.\nonumber
\end{eqnarray}
The optimal wealth process $X^*$ at time $T_n$ is given by \eqref{X^*}. In particular, the optimal proportion of wealth invested in the risky asset is given by 
\begin{eqnarray}
\label{pi^*}
\pi^*_t=\frac{\psi^*_{nt}B_t}{\sigma(Y_t)},\quad T_{n}\leq t<T_{n+1},
\end{eqnarray}
for some ${\mathcal{F}_t}$-adapted process $\psi^*_n$ satisfying $\int_{T_n}^{T_{n+1}}{\psi^{*}_{nt}}^2dt<\infty$ almost surely. 
\end{thm}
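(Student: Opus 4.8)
The plan is to establish the two bounds $V(x,y)\le \frac{1}{\alpha}A^*(y)x^{\alpha(1-\gamma)}$, valid for every admissible strategy, and $V(x,y)\ge \frac{1}{\alpha}A^*(y)x^{\alpha(1-\gamma)}$, attained by the concatenated strategy generating $X^*$; the representation \eqref{pi^*} of the optimal portfolio will fall out of Lemma \ref{prop.euquiv} along the way. The engine of both halves is a \emph{conditional} version of the one-period duality in Proposition \ref{prop2.1}. Because the coefficients in \eqref{SDE2.1}--\eqref{SDE2.2} are time-homogeneous and $Y$ is Markov, for each $n\ge1$ the auxiliary one-period problem on $[T_{n-1},T_n]$ with initial factor $Y_{T_{n-1}}$ and continuation value $A^*$ has $\mathcal{F}_{T_{n-1}}$-conditional optimal value $e^{\delta\tau}\tfrac1\alpha A^*(Y_{T_{n-1}})$ (by $A^*=\Psi(A^*;\cdot)$ in \eqref{problem2} and the scaling of the utility), with conditional dual problem $L_n$, and conditional optimizers $(\eta^*_n,\lambda^*_n)$ supplied by Propositions \ref{existence.1}--\ref{existence.2} applied measurably in $Y_{T_{n-1}}$.

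For the upper bound, fix $X\in\mathcal{U}_0(x,y)$ and put $R_i:=X_{T_i}/X_{T_{i-1}}$, so that $X_{T_i}^{\alpha}/X_{T_{i-1}}^{\gamma\alpha}=X_{T_{i-1}}^{\alpha(1-\gamma)}R_i^{\alpha}$ and $X_{T_i}^{\alpha(1-\gamma)}=X_{T_{i-1}}^{\alpha(1-\gamma)}R_i^{\alpha(1-\gamma)}$. From the supermartingale identity \eqref{supermaringale}, conditionally on $\mathcal{F}_{T_{i-1}}$ the return $R_i$ lies in the conditional analogue of $\Tilde{\mathcal{U}}_{T_{i-1},T_i}(1,Y_{T_{i-1}})$, so by the definition \eqref{problem3} of $\Psi$,
\[
\mathbb{E}\Big[\tfrac1\alpha R_i^{\alpha}h(Y_{T_i})+\tfrac1\alpha A^*(Y_{T_i})R_i^{\alpha(1-\gamma)}\,\big|\,\mathcal{F}_{T_{i-1}}\Big]\ \le\ e^{\delta\tau}\,\tfrac1\alpha A^*(Y_{T_{i-1}}).
\]
Multiplying by $e^{-\delta T_i}X_{T_{i-1}}^{\alpha(1-\gamma)}$, taking expectations and summing over $i=1,\dots,n$ telescopes to
\[
\sum_{i=1}^{n}e^{-\delta T_i}\mathbb{E}\Big[\tfrac1\alpha\big(X_{T_i}/X_{T_{i-1}}^{\gamma}\big)^{\alpha}h(Y_{T_i})\Big]\ \le\ \tfrac1\alpha A^*(y)x^{\alpha(1-\gamma)}-e^{-\delta T_n}\mathbb{E}\big[\tfrac1\alpha A^*(Y_{T_n})X_{T_n}^{\alpha(1-\gamma)}\big].
\]
I would then send $n\to\infty$: the boundedness of $A^*$ (Proposition \ref{fixed.power}) together with the moment control on $\mathbb{E}[X_{T_n}^{\alpha(1-\gamma)}]$ by $e^{\zeta(\alpha(1-\gamma))n\tau}$ (cf. \eqref{3.13.v0} and its $\alpha<0$ analogue) drives the residual term to $0$, since $\delta>\zeta(\alpha(1-\gamma))\vee0$ by Assumption \ref{ass1}; the left-hand side converges (monotonically in $n$ according to the sign of $\alpha$, the limit being finite by the integrability requirement in \eqref{admissible_set}) to the objective in \eqref{problem}. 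Taking the supremum over $X$ yields the upper bound.

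For attainment, Proposition \ref{prop2.1} in its conditional form on $[T_{n-1},T_n]$ tells us that $R^*_n:=x^*_{h_{A^*}}\!\big(\lambda^*_n\,\tfrac{Z^{\eta^*_n}_{T_n}/B_{T_n}}{Z^{\eta^*_{n-1}}_{T_{n-1}}/B_{T_{n-1}}},Y_{T_n}\big)$ is the conditional primal optimizer, lies in the conditional $\Tilde{\mathcal{U}}_{T_{n-1},T_n}(1,Y_{T_{n-1}})$, satisfies the conditional budget equality $\mathbb{E}\big[R^*_n\,\tfrac{Z^{\eta^*_n}_{T_n}/B_{T_n}}{Z^{\eta^*_{n-1}}_{T_{n-1}}/B_{T_{n-1}}}\mid\mathcal{F}_{T_{n-1}}\big]=1$, and turns the first display above into an equality. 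Lemma \ref{prop.euquiv}, applied on each $[T_{n-1},T_n]$, then yields a self-financing portfolio on that interval with risky fraction $\pi^*_t=\psi^*_{nt}B_t/\sigma(Y_t)$ and wealth $X^*_{T_{n-1}}R^*_n=X^*_{T_n}$; concatenating over $n$ produces a global predictable strategy, which is admissible because $x^*_{h_{A^*}}>0$ keeps the wealth positive and the integrability condition in \eqref{admissible_set} is vacuous for $\alpha\in(0,1)$ and follows from the bounds of Proposition \ref{fixed.power} when $\alpha<0$. Re-running the telescoping computation with equalities and letting $n\to\infty$ (the residual vanishing exactly as before) gives $\mathbb{E}\big[\sum_{i\ge1}e^{-\delta T_i}\tfrac1\alpha\big(X^*_{T_i}/(X^*_{T_{i-1}})^{\gamma}\big)^{\alpha}h(Y_{T_i})\big]=\tfrac1\alpha A^*(y)x^{\alpha(1-\gamma)}$, so $X^*$ is optimal and $V(x,y)=\tfrac1\alpha A^*(y)x^{\alpha(1-\gamma)}$.

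The step I expect to be the main obstacle is the rigorous passage from Proposition \ref{prop2.1}, stated for the fixed time-$0$ initial condition, to its $\mathcal{F}_{T_{n-1}}$-conditional counterpart on a generic period: one must define the conditional dual problems $L_n$ precisely, verify via the Markov property (e.g. through regular conditional distributions of $(Y_t)_{t\in[T_{n-1},T_n]}$) that their optimizers exist as measurable selections and that the conditional primal value is $e^{\delta\tau}\tfrac1\alpha A^*(Y_{T_{n-1}})(X^*_{T_{n-1}})^{\alpha(1-\gamma)}$, and check that the martingale-representation construction in Lemma \ref{prop.euquiv} localizes to each subinterval and pastes into a single admissible $\pi^*$ on $[0,\infty)$. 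A secondary nuisance is the sign bookkeeping and moment control of $X_{T_n}^{\alpha(1-\gamma)}$ when $\alpha<0$, needed so that $e^{-\delta T_n}\mathbb{E}[\tfrac1\alpha A^*(Y_{T_n})X_{T_n}^{\alpha(1-\gamma)}]\to0$; Assumption \ref{ass1} is precisely what guarantees this.
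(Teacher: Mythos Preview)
Your proposal is correct and follows essentially the same approach as the paper: the paper packages your telescoping inequality as a discrete-time supermartingale $D_n:=\sum_{i\le n}e^{-\delta T_i}\tfrac1\alpha\big(X_{T_i}/X_{T_{i-1}}^{\gamma}\big)^{\alpha}h(Y_{T_i})+\tfrac1\alpha e^{-\delta T_n}A^*(Y_{T_n})X_{T_n}^{\alpha(1-\gamma)}$, uses $A^*=\Psi(A^*;\cdot)$ exactly as you do for the one-period bound, kills the residual via \eqref{3.13.v0} and Assumption~\ref{ass1}, and for attainment defines $Q_n$ recursively (your $X^*_{T_{n-1}}R^*_n$), shows $(Z^{\eta^*_n}_{T_n}/B_{T_n})Q_n$ is an $\{\mathcal{F}_{T_n}\}$-martingale, and invokes Lemma~\ref{prop.euquiv} period by period to produce $\pi^*$. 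The technical caveats you flag about the conditional/measurable version of Proposition~\ref{prop2.1} are handled in the paper only informally via time-homogeneity and the Markov property, so your concern is legitimate but not a divergence in method.
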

\begin{proof}
For any admissible wealth process $X\in\mathcal{U}_0(x,y)$, define a discrete-time stochastic process $D=(D_n)_{n=0,1,2,...}$ by
$$D_n:=\sum_{i=1}^n\frac{1}{\alpha}e^{-\delta T_i}\left(\frac{X_{T_i}}{X_{T_{i-1}}^{\gamma}}\right)^{\alpha}h(Y_{T_i})+{\frac{1}{\alpha}}A^*(Y_{T_n})e^{-\delta T_n}X_{T_n}^{\alpha(1-\gamma)}.$$
Then
\begin{eqnarray}\label{Dn+1}
D_{n+1}
\hspace{-0.3cm}&=&\hspace{-0.3cm}
D_{n}+e^{-\delta T_{n+1}}\frac{1}{\alpha}\bigg(\frac{X_{T_{n+1}}}{X_{T_{n}}^{\gamma}}\bigg)^{\alpha}h(Y_{T_{n+1}})-{\frac{1}{\alpha}}A^*(Y_{T_n})e^{-\delta T_n}X_{T_n}^{\alpha(1-\gamma)}+{\frac{1}{\alpha}}A^*(Y_{T_{n+1}})e^{-\delta T_{n+1}}X_{T_{n+1}}^{\alpha(1-\gamma)}
\nonumber\\
\hspace{-0.3cm}&=&\hspace{-0.3cm}
D_n+e^{-\delta T_n}\left[e^{-\delta\tau}\left(\frac{1}{\alpha}\bigg(\frac{X_{T_{n+1}}}{X^{\gamma}_{T_{n}}}\bigg)^{\alpha}h(Y_{T_{n+1}})+{\frac{1}{\alpha}}A^*(Y_{T_{n+1}})X^{\alpha(1-\gamma)}_{T_{n+1}}\right)-{\frac{1}{\alpha}}A^*(Y_{T_n})X^{\alpha(1-\gamma)}_{T_n}\right].
\nonumber\\
\end{eqnarray}
Taking the conditional expectation on both sides of \eqref{Dn+1}, one gets
\begin{eqnarray}\label{3.4}
\mathbb{E}[D_{n+1}|\mathcal{F}_{T_n}]
\hspace{-0.3cm}&=&\hspace{-0.3cm}
D_n+e^{-\delta T_n}X^{\alpha(1-\gamma)}_{T_n}\left[e^{-\delta\tau}\mathbb{E}\left[\frac{1}{\alpha}\bigg(\frac{X_{T_{n+1}}}{X_{T_n}}\bigg)^{\alpha}h(Y_{T_{n+1}})\right.\right.
\nonumber\\
\hspace{-0.3cm}&&\hspace{3cm}
\left.\left.+{{\frac{1}{\alpha}}A^*(Y_{T_{n+1}})}\bigg(\frac{X_{T_{n+1}}}{X_{T_n}}\bigg)^{\alpha(1-\gamma)}\bigg|\mathcal{F}_{T_n}\right]-{\frac{1}{\alpha}}A^*(Y_{T_n})\right].
\end{eqnarray}
In view of $X\in\mathcal{U}_0(x,y)$ and \eqref{supermaringale}, for $\eta_{n+1}\in\mathcal{M}_{n+1}$, one knows that $XZ^{\eta_{n+1}}/B$ is a $\mathbb{P}$-supermartingale over $[T_n,T_{n+1}]$, which yields that $$\mathbb{E}\left[X_{T_{n+1}}\frac{Z^{\eta_{n+1}}_{T_{n+1}}}{B_{T_{n+1}}}\bigg|\mathcal{F}_{T_n}\right]\leq\mathbb{E}\left[X_{T_n}\frac{Z^{\eta_n}_{T_n}}{B_{T_n}}\right].$$
Furthermore, it holds that
\begin{eqnarray}\label{3.5}
\hspace{-0.3cm}&&\hspace{-0.3cm}
e^{-\delta\tau}\mathbb{E}\left[\frac{1}{\alpha}\bigg(\frac{X_{T_{n+1}}}{X_{T_n}}\bigg)^{\alpha}h(Y_{T_{n+1}})+{{\frac{1}{\alpha}}A^*(Y_{T_{n+1}})}\bigg(\frac{X_{T_{n+1}}}{X_{T_n}}\bigg)^{\alpha(1-\gamma)}\bigg|\mathcal{F}_{T_n}\right]
\nonumber\\
\hspace{-0.3cm}&\leq&\hspace{-0.3cm}
e^{-\delta\tau}\sup_{X\in\Tilde{\mathcal{U}}_{T_n,T_{n+1}}(1,Y_{T_n})}\mathbb{E}\left[\frac{1}{\alpha}X^{\alpha}_{T_{n+1}}h(Y_{T_{n+1}})+{\frac{1}{\alpha}}A^*(Y_{T_{n+1}})X^{\alpha(1-\gamma)}\Big|\mathcal{F}_{T_n}\right]
\nonumber\\
\hspace{-0.3cm}&=&\hspace{-0.3cm}
{\frac{1}{\alpha}}\Psi(A^*;Y_{T_n}).
\end{eqnarray}
Recall that $A^*(\cdot)$ is the fixed-point of the function $\Psi(A;\cdot)$. This, together with \eqref{3.4} and \eqref{3.5}, implies
\begin{eqnarray}\label{3.6}
\mathbb{E}[D_{n+1}|\mathcal{F}_{T_n}]\leq D_n+{\frac{1}{\alpha}}e^{-\delta T_n}X_{T_n}^{\alpha(1-\gamma)}\left[\Psi(A^*;Y_{T_n})-A^*(Y_{T_n})\right]=D_n.
\end{eqnarray}
Hence, $(D_n)_{n=0,1,2,...}$ is a $\{\mathcal{F}_{T_n}\}$-supermartingale, and it holds that
\begin{eqnarray}
{\frac{1}{\alpha}}A^*(y)x^{\alpha(1-\gamma)}=D_0\geq\mathbb{E}\left[\sum_{i=1}^ne^{-\delta T_i}\frac{1}{\alpha}\bigg(\frac{X_{T_i}}{X_{T_{i-1}}^{\gamma}}\bigg)^{\alpha}h(Y_{T_{i}})+{\frac{1}{\alpha}}A^*(Y_{T_n})e^{-\delta T_n}X_{T_n}^{\alpha(1-\gamma)}\right],\nonumber
\end{eqnarray}
which yields that
\begin{eqnarray}
\mathbb{E}\left[\sum_{i=1}^ne^{-\delta T_i}\frac{1}{\alpha}\bigg(\frac{X_{T_i}}{X_{T_{i-1}}^{\gamma}}\bigg)^{\alpha}h(Y_{T_{i}})\right]
\hspace{-0.3cm}&\leq&\hspace{-0.3cm}
{\frac{1}{\alpha}}A^*(y)x^{\alpha(1-\gamma)}-{\frac{1}{\alpha}}e^{-\delta T_n}
%\sup_{X\in\mathcal{U}_0(x,y)}
\mathbb{E}\left[A^*(Y_{T_n})X^{\alpha(1-\gamma)}_{T_n}\right].\nonumber
%\nonumber\\
%\hspace{-0.3cm}&\leq&\hspace{-0.3cm}
%A^*(y)x^{\alpha(1-\gamma)}+e^{-(\delta-\eta(\alpha(1-\gamma))) T_n}A^*x^{\alpha(1-\gamma)},
\end{eqnarray}
By Assumption \ref{ass1} (i.e., $\delta>\zeta(\alpha(1-\gamma))\vee0$), \eqref{3.13.v0}, the fact $A^*(\cdot)\in C^+_b(\mathbb{R})$, and the monotone convergence theorem, we have that
\begin{eqnarray}
\mathbb{E}\left[\sum_{i=1}^{\infty}e^{-\delta T_i}\frac{1}{\alpha}\bigg(\frac{X_{T_i}}{X_{T_{i-1}}^{\gamma}}\bigg)^{\alpha}h(Y_{T_{i}})\right]\leq {\frac{1}{\alpha}}A^*(y)x^{\alpha(1-\gamma)},\nonumber
\end{eqnarray}
and then
\begin{eqnarray}
V(x,y)=\sup_{X\in\mathcal{U}_0(x,y)}\mathbb{E}\left[\sum_{i=1}^{\infty}e^{-\delta T_i}\frac{1}{\alpha}\bigg(\frac{X_{T_i}}{X_{T_{i-1}}^{\gamma}}\bigg)^{\alpha}h(Y_{T_{i}})\right]\leq {\frac{1}{\alpha}}A^*(y)x^{\alpha(1-\gamma)}.\nonumber
\end{eqnarray}
To show the reverse inequality, it suffices to verify the existence of some admissible process $X^*$ such that $$\mathbb{E}\left[\sum_{i=1}^{\infty}e^{-\delta T_i}\frac{1}{\alpha}\left(\frac{X^*_{T_i}}{(X^*_{T_{i-1}})^{\gamma}}\right)^{\alpha}h(Y_{T_i})\right]={\frac{1}{\alpha}}A^*(y)x^{\alpha(1-\gamma)}.$$
By Proposition \ref{prop2.1}, under the choice of $\frac{X_{T_{n+1}}}{X_{T_n}}=x^*_{h_{A^*}}\left(\lambda^*_n\frac{Z^{\eta_{{n+1}}^*}_{T_{n+1}}/B_{T_{n+1}}}{Z^{\eta_{n}^*}_{T_n}/B_{T_n}},Y_{T_{n+1}}\right)$, we have that
\begin{eqnarray}
\hspace{-0.3cm}&&\hspace{-0.3cm}
e^{-\delta\tau}\mathbb{E}\left[\left.\frac{1}{\alpha}\bigg(\frac{X_{T_{n+1}}}{X_{T_n}}\bigg)^{\alpha}h(Y_{T_{n+1}})+{\frac{1}{\alpha}}A^*(Y_{T_{n+1}})\bigg(\frac{X_{T_{n+1}}}{X_{T_n}}\bigg)^{\alpha(1-\gamma)}\right|\mathcal{F}_{T_n}\right]
\nonumber\\
\hspace{-0.3cm}&=&\hspace{-0.3cm}
e^{-\delta\tau}\sup_{X\in\Tilde{\mathcal{U}}_{T_n,T_{n+1}}(1,Y_{T_n})}\mathbb{E}\left[\frac{1}{\alpha}X^{\alpha}_{T_{n+1}}h(Y_{T_{n+1}})+{\frac{1}{\alpha}}A^*(Y_{T_{n+1}})X^{\alpha(1-\gamma)}\Big|\mathcal{F}_{T_n}\right]={\frac{1}{\alpha}}\Psi(A^*;Y_{T_n}),\nonumber
\end{eqnarray}
%Here $(\eta^*_{n},\lambda^*_n)$ is given by the optimal solution to the dual problem
%\begin{eqnarray}
%\inf_{\eta\in\mathcal{M},\lambda>0}\left\{\sup_{X\in\mathcal{U}_0(1,Y_{T_n})}\left\{h_A^*(X)-\lambda\mathbb{E}\left[\frac{XZ^{\eta}_{T_{n+1}}}{B_{T_{n+1}}}\right]\right\}+\lambda\right\}.
%\end{eqnarray}
%\begin{eqnarray}\label{3.11}
%\mathbb{E}\left[\frac{Z_{T_{n+1}}}{Z_{T_n}}y^*_{h_{A^*}}\left(\lambda^*_n\frac{Z_{T_{n+1}}}{Z_{T_n}}\right)\Big|\mathcal{F}_{T_n}\right]=1.
%\end{eqnarray}
%It follows from the form of $Z^{\eta}$ given by \eqref{z.process}, the stationary increment property of Brownian motion implies that
and, for all $n$,
\begin{eqnarray}\label{3.11.power}
\hspace{-0.3cm}&&\hspace{-0.3cm}
\mathbb{E}\left[\frac{Z^{\eta^*_{n+1}}_{T_{n+1}}/B_{T_{n+1}}}{Z^{\eta^*_{n}}_{T_n}/B_{T_n}}x^*_{h_{A^*}}\left(\lambda^*_n\frac{Z^{\eta^*_{n+1}}_{T_{n+1}}/B_{T_{n+1}}}{Z^{\eta^*_n}_{T_n}/B_{T_n}},Y_{T_{n+1}}\right)\Bigg|\mathcal{F}_{T_n}\right]
\nonumber\\
\hspace{-0.3cm}&=&\hspace{-0.3cm}
\mathbb{E}\left[\frac{Z^{\eta^*_1}_{T_{1}}/B_{T_1}}{Z^{\eta^*_0}_{T_0}/B_{T_0}}x^*_{h_{A^*}}\left(\lambda^*_n\frac{Z^{\eta^*_1}_{T_{1}}/B_{T_1}}{Z^{\eta^*_0}_{T_0}/B_{T_0}},Y_{T_1}\right)\bigg|\mathcal{F}_{T_0}\right]=\mathbb{E}\left[x^*_{h_{A^*}}\left(\lambda^*\frac{Z^{\eta^*}_{\tau}}{B_{\tau}},Y_{\tau}\right)\frac{Z^{\eta^*}_{\tau}}{B_{\tau}}\right]=1.
\end{eqnarray}
Next, define a sequence of random variables recursively that
\begin{eqnarray}
Q_n:=\left\{
\begin{array}{ll}
    xx^*_{h_{A^*}}\left(\lambda^*\frac{Z^{\eta_1^*}_{T_1}}{B_{T_1}},Y_{T_1}\right), & n=1, \\
     Q_{n-1}x^*_{h_{A^*}}\left(\lambda^*\frac{Z^{\eta^*_{n}}_{T_{n}}/B_{T_{n}}}{Z^{\eta^*_{n-1}}_{T_{n-1}}/B_{T_{n-1}}},Y_{T_n}\right),& n=2,3,....
\end{array}
\right.\nonumber
\end{eqnarray}
By \eqref{3.11.power}, it holds that
\begin{eqnarray}
\mathbb{E}\left[\frac{Z^{\eta^*_{n+1}}_{T_{n+1}}}{B_{T_{n+1}}}Q_{{n+1}}\bigg|\mathcal{F}_{T_n}\right]=\mathbb{E}\left[\frac{Z^{\eta^*_{n+1}}_{T_{n+1}}}{B_{T_n}}Q_{n}x^*_{h_{A^*}}\left(\lambda^*\frac{Z^{\eta^*_{n+1}}_{T_{n+1}}/B_{T_{n+1}}}{Z^{\eta^*_{n}}_{T_{n}}/B_{T_n}},Y_{T_{n+1}}\right)\Bigg|\mathcal{F}_{T_n}\right]=\frac{Z^{\eta^*_{n}}_{T_n}}{B_{T_n}}Q_{n},\nonumber
\end{eqnarray}
which yields that $(\frac{Z^{\eta^*_{n}}_{T_n}}{B_{T_n}}Q_{n})_{n\geq 1}$ is a $\{\mathcal{F}_{T_n}\}$-martingale. Additionally, by Lemma \ref{prop.euquiv},
%Using standard arguments of martingale representation theorem, 
there exists some $X^*\in\mathcal{U}_0(x,y)$ such that $X^*_{T_n}=Q_n$ for all $n$. Define
$$D^*_n:=\sum_{i=1}^ne^{-\delta T_i}\frac{1}{\alpha}\bigg(\frac{X^*_{T_i}}{{(X^*_{T_{i-1}})}^{\gamma}}\bigg)^{\alpha}h(Y_{T_{i}})+{\frac{1}{\alpha}}A^*(Y_{T_n})e^{-\delta T_n}{(X^*_{T_n})}^{\alpha(1-\gamma)}, \quad n\geq 0.$$
Using the same arguments leading to \eqref{3.6}, we are ready to conclude that
\begin{eqnarray}
\mathbb{E}[D^*_{n+1}|\mathcal{F}_{T_n}]=D^*_n+{\frac{1}{\alpha}}e^{-\delta T_n}{(X^*_{T_n})}^{\alpha(1-\gamma)}\left[\Psi(A^*(Y_{T_n}))-A^*(Y_{T_n})\right]=D^*_n, \quad n\geq 0,\nonumber
\end{eqnarray}
which gives that $D^*_n$ is a $\{\mathcal{F}_{T_n}\}$-martingale. Hence, we have
\begin{eqnarray}
\mathbb{E}\left[\sum_{i=1}^{n}e^{-\delta T_i}\frac{1}{\alpha}\bigg(\frac{X^*_{T_i}}{(X^*_{T_{i-1}})^{\gamma}}\bigg)^{\alpha}h(Y_{T_{i}})\right]
\hspace{-0.3cm}&=&\hspace{-0.3cm}
{\frac{1}{\alpha}}A^*(y)x^{\alpha(1-\gamma)}-{\frac{1}{\alpha}}e^{-\delta T_n}\mathbb{E}\left[A^*(Y_{T_n})(X^*_{T_n})^{\alpha(1-\gamma)}\right].\nonumber
\end{eqnarray}
By Assumption \ref{ass1} and the monotone convergence theorem, we have
\begin{eqnarray}
\mathbb{E}\left[\sum_{i=1}^{\infty}e^{-\delta T_i}\frac{1}{\alpha}\bigg(\frac{X^*_{T_i}}{(X^*_{T_{i-1}})^{\gamma}}\bigg)^{\alpha}h(Y_{T_{i}})\right]={\frac{1}{\alpha}}A^*(y)x^{\alpha(1-\gamma)}.\nonumber
\end{eqnarray}
Finally, it follows from Lemma \ref{prop.euquiv} that the optimal wealth process admits the stochastic integral representation
$$\frac{X^*_t}{B_t}=X^*_{T_n}
+\int_{T_n}^t\psi^*_{nt}dW^{\eta_{n}^*}_{1t},\quad T_n\leq t<T_{n+1},$$
for some ${\mathcal{F}_t}$-adapted process $\psi^*_n$ satisfying $\int_{T_n}^{T_{n+1}}{\psi^*_{nt}}^2dt<\infty$ almost sure,
which, combined with \eqref{dX/B}, then implies \eqref{pi^*}.
The proof is complete.
\end{proof}

\section{Periodic Evaluation under Logarithmic Utility}\label{sec:log}

In this section, we consider the case of the logarithmic utility that
\begin{eqnarray}\label{log}
U(x,y):=\log x+h(y),\quad (x,y)\in\mathbb{R}_+\times\mathbb{R},
\end{eqnarray}
where $\mathbb{R}\ni y\mapsto h(y)\in\mathbb{R}_+$ is a bounded continuous function such that, $m\leq h(y)\leq 1$ for any $y\in\mathbb{R}$ with $m\in(0,1)$. 

The value function of the
portfolio optimization problem under periodic evaluation is defined by
\begin{eqnarray}\label{problem.log}
V(x,y)
\hspace{-0.3cm}&:=&\hspace{-0.3cm}
\sup_{X\in\mathcal{U}_0(x,y)}\mathbb{E}\left[\sum_{i=1}^{\infty}e^{-\delta T_i}\left(\log\frac{X_{T_i}}{\left(X_{T_{i-1}}\right)^{\gamma}}+h(Y_{T_i})\right)\right],
\end{eqnarray}
where $\mathcal{U}_0$ is the set of admissible portfolio processes defined by \eqref{admissible_set}.
%Using the Markov property of $X$, one can easily derive the following dynamic programming principle
Again, using the dynamic programming principle, we can heuristically derive that
\begin{eqnarray}\label{ddp.dual}
V(x,y)
%\hspace{-0.3cm}&=&\hspace{-0.3cm}
%\sup_{X\in\mathcal{U}_0(x,y)}\mathbb{E}\left[e^{-\delta T_1}U\left(\frac{X_{T_1}}{x^{\gamma}}{,Y_{T_1}}\right)+e^{-\delta T_1}V(X_{T_1},Y_{T_1})\right]
%\nonumber\\
\hspace{-0.3cm}&=&\hspace{-0.3cm}
\sup_{X\in\mathcal{U}_0(x,y)}\mathbb{E}\left[e^{-\delta \tau}\left(\log\frac{X_{\tau}}{x^{\gamma}}+h(Y_{\tau})\right)+e^{-\delta \tau}V(X_{\tau},Y_{\tau})\right],\quad (x,y)\in\mathbb{R}_+\times\mathbb{R}.
\end{eqnarray}

Thanks to the structure of the logarithmic utility and the fact that it is bounded with respect to $y$, we heuristically conjecture that our value function would have the form of $V(x,y)=A^*(y)+C^*\log x$ for some continuous, bounded, and non-negative function $A^*(\cdot)\in C^+_b(\mathbb{R})$ as well as some constant $C^*\in\mathbb{R}$. Substituting this form of $V$ into \eqref{ddp.dual} and then subtract the both sides by $C^*\log x$, one obtains
\begin{eqnarray}\label{A*}
A^*(y)
\hspace{-0.3cm}&=&\hspace{-0.3cm}
\sup_{X\in\mathcal{U}_0(x,y)}\mathbb{E}\left[e^{-\delta T_1}\left(\log\frac{X_{T_1}}{x^{\gamma}}+h(Y_{T_1})\right)+e^{-\delta T_1}[A^*(Y_{T_1})+C^*\log X_{T_1}]-C^* \log x\right]
\nonumber\\
\hspace{-0.3cm}&=&\hspace{-0.3cm}
\sup_{X\in\mathcal{U}_0(x,y)}\mathbb{E}\left[e^{-\delta \tau}(1+C^*)\log \frac{X_{\tau}}{x}+[C^*(e^{-\delta\tau}-1)+(1-\gamma) e^{-\delta \tau}]\log x\right.
\nonumber\\
\hspace{-0.3cm}&&\hspace{2cm}
\left.+e^{-\delta \tau}\left(h(Y_{\tau})+A^*(Y_{\tau})\right)\right].
\end{eqnarray}
By setting $C^*=\frac{1-\gamma}{e^{\delta\tau}-1}$ in the above equation, \eqref{A*} can be reformulated as
\begin{eqnarray}\label{A*.dual}
A^*(y)=\frac{1-\gamma e^{-\delta\tau}}{e^{\delta\tau}-1}\sup_{X\in\mathcal{U}_0(1,y)}\mathbb{E}\left[\log X_{\tau}\right]+e^{-\delta \tau}\mathbb{E}\left[h(Y_{\tau})+A^*(Y_{\tau})\right].
\end{eqnarray}
Hence, in the case of logarithmic utility, the characterization of $V$ now simplifies to the characterization of the unknown, non-negative, continuous, and bounded function $A^*(\cdot)$. We again first solve the one-period terminal wealth utility maximization problem:
\begin{eqnarray}\label{log.pri}
\sup_{X\in\mathcal{U}_0(x,y)}\mathbb{E}\left[\log X_{\tau}\right],\quad (x,y)\in\mathbb{R}_+\times\mathbb{R}.
\end{eqnarray}
%Using similar methods as those used in the proof of Lemma 2.2 of \cite{CH05}, 
To this end, we state the original problem \eqref{log.pri} equivalently as
\begin{eqnarray}\label{log.pri2}
\sup_{X\in\Tilde{\mathcal{U}}_{0,\tau}(x,y)}\mathbb{E}[\log X_{\tau}],
\end{eqnarray}
where
\begin{eqnarray}\label{tilde.U}
\Tilde{\mathcal{U}}_{s,t}(x,y):=\left\{X\in\mathcal{F}^+_{t}:\sup_{\eta\in\mathcal{M}}\mathbb{E}\left[\frac{Z^{\eta}_{t}/B_t}{Z^{\eta}_s/B_{s}}X\bigg|\mathcal{F}_s\right]\leq x\text{ and }\mathbb{E}\left[\left(\log{X}_t+h(Y_{t})\right)_-|\mathcal{F}_s\right]<\infty\right\},
\end{eqnarray}
with $0\leq s\leq t<\infty$, $X_s=x$ and $Y_s=y$. 
Indeed, it is straightforward that when $X\in\mathcal{U}_{0}(x,y)$ and $\eta\in\mathcal{M}$, according to \eqref{supermaringale}, the process $\frac{XZ^{\eta}}{B}$ is a $\mathbb{P}$-supermartingale. This, together with the arbitrariness of $\eta$, yields that
\begin{eqnarray}\label{sup<x}
\sup_{\eta\in\mathcal{M}}\mathbb{E}\left[\frac{X_{\tau}Z^{\eta}_{\tau}}{B_{\tau}}\right]\leq x.
\end{eqnarray}
In addition, note that $$e^{-\delta\tau}\mathbb{E}\left[\left(\log\frac{X_{\tau}}{x^{\gamma}}+h(Y_{\tau})\right)_-\right]\leq\sum_{i=1}^{\infty}e^{-\delta T_i}\mathbb{E}\left[\left(\log\frac{X_{T_i}}{(X_{T_{i-1}})^{\gamma}}+h(Y_{T_i})\right)_-\right]<\infty,$$ which implies that $\mathbb{E}[(\log X_{\tau}+h(Y_{\tau}))_-]<\infty$. This, together with \eqref{sup<x}, yields that, $X_{\tau}\in\Tilde{\mathcal{U}}_{0,\tau}(x,y)$ for any $X\in\mathcal{U}_{0}(x,y)$. Therefore, one gets $$\sup_{X\in\mathcal{U}_{0}(x,y)}\mathbb{E}\left[\log X_{\tau}\right]\leq \sup_{X\in\Tilde{\mathcal{U}}_{0,\tau}(x,y)}\mathbb{E}\left[\log X_{\tau}\right].$$ We claim that the reverse inequality also holds. In fact, for any $X\in\Tilde{\mathcal{U}}_{0,\tau}(x,y)$, one can follow the same arguments of Lemma 2.2 in \cite{CH05} to find a portfolio proportion $\pi$ and the wealth process $X^{x,y,\pi}\in\mathcal{U}_{0}(x,y)$ (as we can define $(\pi_t)_{t\in[0,\tau]}$ in the same manner as Lemma 2.2 in \cite{CH05} and let $\pi_t\equiv0$ for $t\in(\tau,\infty)$) such that $X^{x,y,\pi}_{\tau}\geq X$, and hence, the reverse inequality also holds true.

As in previous section, we shall show that the optimal solution to optimization problem \eqref{log.pri2} exists and there indeed exists a unique non-negative $A^*(\cdot)\in {C}_b^{+}(\mathbb{R})$ solving equation \eqref{A*} and then to formally prove that $A^*(y)+C^*\log x$ coincides with the value function $V(\cdot,\cdot)$ via the verification theorem.

Following \cite{CH05}, for fixed $(x,y)\in\mathbb{R}_+\times\mathbb{R}$, we consider the dual functional as
\begin{eqnarray}\label{L.log}
L(\eta,\lambda):=L(\eta,\lambda;x,y)
\hspace{-0.3cm}&:=&\hspace{-0.3cm}
\sup_{X\in\mathcal{F}_{\tau}^+}\left\{\mathbb{E}\left[\log X-\lambda\frac{XZ_{\tau}^{\eta}}{B_{\tau}}\right]\right\}+\lambda x, 
\quad (\eta,\lambda)\in\mathcal{M}\times\mathbb{R}_+.
\end{eqnarray}
The dual problem associated to the primal problem \eqref{log.pri} is defined by
\begin{eqnarray}\label{dual.log}
\text{minimizing}\quad L(\eta,\lambda),\quad\text{over}\quad(\eta,\lambda)\in\mathcal{M}\times\mathbb{R}_+.
\end{eqnarray}

%One can easily verify that
%\begin{eqnarray}\label{sup<inf}
%\inf_{\eta\in\mathcal{M},\lambda>0}L(\eta,\lambda)
%\hspace{-0.3cm}&\geq&\hspace{-0.3cm}
%\sup_{X\in\Tilde{\mathcal{U}}_{0,\tau}(x,y)}\mathbb{E}\left[\log X\right],
%\end{eqnarray}
%which says that the optimal value of the dual problem \eqref{dual.log} exceeds that of the primal problem \eqref{log.pri2}. We say that there is no duality gap when equality holds in \eqref{sup<inf}.
%The existence of optimal solution to the dual problem \eqref{dual.log} is guaranteed in the following proposition.

%The next proposition guarantees the existence of optimal solution to the dual problem \eqref{dual.log} by the direct arguments and explicit computations, and reveals a relationship between the optimal solutions to the primal problem \eqref{log.pri2} and dual problem \eqref{dual.log}.

By Proposition 3.1 and the first result of Remark 4.6 in \cite{CH05}, we have the following duality characterization of the optimal solution $X^*$ to the primal problem \eqref{log.pri2}.
\begin{prop}\label{prop2.2} 
For fixed $(x,y)\in\mathbb{R}_+\times\mathbb{R}$, put $\eta^*\equiv0$, $\lambda^*=\frac{1}{x}$ and
$X^*:=\frac{B_{\tau}}{\lambda^*Z^{\eta^*}_{\tau}}$.
It holds that $X^*$ is the optimal solution to the primal problem \eqref{log.pri2}, and $(\eta^*,\lambda^*)$ is the optimal solution to the dual problem \eqref{dual.log}. In particular, it holds that
\begin{eqnarray}\label{sup<inf}
\inf_{\eta\in\mathcal{M},\lambda>0}L(\eta,\lambda)
\hspace{-0.3cm}&=&\hspace{-0.3cm}
\sup_{X\in\Tilde{\mathcal{U}}_{0,\tau}(x,y)}\mathbb{E}\left[\log X_{\tau}\right],\quad (x,y)\in\mathbb{R}_+\times\mathbb{R}.
\end{eqnarray}
\end{prop}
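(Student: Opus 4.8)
The plan is to exploit the explicit structure of the logarithmic utility, which renders both the inner maximization in the dual functional \eqref{L.log} and the primal optimizer completely explicit, and then to close the loop by weak duality. First I would carry out the $\omega$-by-$\omega$ maximization inside \eqref{L.log}: the elementary inequality $\log X - aX \le -\log a - 1$, valid for all $X,a>0$ with equality precisely at $X = 1/a$, applied with $a = \lambda Z_\tau^\eta/B_\tau$, shows that the supremum over $X\in\mathcal{F}_\tau^+$ is attained at $X = B_\tau/(\lambda Z_\tau^\eta)$ and equals $\mathbb{E}[-\log(\lambda Z_\tau^\eta/B_\tau) - 1]$, so that $L(\eta,\lambda) = -\log\lambda - 1 + \lambda x - \mathbb{E}[\log(Z_\tau^\eta/B_\tau)]$ for $(\eta,\lambda)\in\mathcal{M}\times\mathbb{R}_+$. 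One must check that the supremum of expectations equals the expectation of the pointwise supremum; this holds because the pointwise maximizer is itself $\mathcal{F}_\tau$-measurable and non-negative and because $\mathbb{E}[\log(Z_\tau^\eta/B_\tau)]$ is finite. Using $B_\tau = e^{r\tau}$, the expression \eqref{Z.process} for $Z^\eta$, the bound $\theta^2(Y_\cdot)\le M_0$ of \eqref{M.def}, the integrability $\mathbb{E}[\int_0^\tau\eta_s^2\,ds]<\infty$ built into $\mathcal{M}$, and the fact that the stochastic integrals in $\log Z_\tau^\eta$ are genuine (mean-zero) martingales, one obtains $\mathbb{E}[\log(Z_\tau^\eta/B_\tau)] = -r\tau - \frac12\mathbb{E}[\int_0^\tau(\theta^2(Y_s)+\eta_s^2)\,ds]$, whence
\[
L(\eta,\lambda) = -\log\lambda - 1 + \lambda x + r\tau + \tfrac12\,\mathbb{E}\!\left[\int_0^\tau\big(\theta^2(Y_s)+\eta_s^2\big)\,ds\right].
\]

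From this transparent formula the dual minimization \eqref{dual.log} is immediate: the term $\mathbb{E}[\int_0^\tau\eta_s^2\,ds]\ge 0$ is minimized over $\mathcal{M}$ at $\eta^*\equiv 0$, while $\lambda\mapsto -\log\lambda + \lambda x$ is strictly convex on $(0,\infty)$ with unique minimizer $\lambda^* = 1/x$. This identifies $(\eta^*,\lambda^*)$ as the unique minimizer of $L$ over $\mathcal{M}\times\mathbb{R}_+$, with minimal value $\log x + r\tau + \frac12\mathbb{E}[\int_0^\tau\theta^2(Y_s)\,ds]$.

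Next I would verify admissibility of the candidate primal optimizer $X^* = B_\tau/(\lambda^* Z_\tau^{\eta^*}) = x\,e^{r\tau}/Z_\tau^{0}$ (note $Z^0$ is a true martingale by Novikov applied to \eqref{M.def}). Since $\eta^*\equiv 0$ and $W_1,W_2$ are independent, $Z_\tau^\eta/Z_\tau^{0} = \mathcal{E}_\tau^{W_2}(\eta)$ is a non-negative local martingale, hence a supermartingale, so $\mathbb{E}[X^* Z_\tau^\eta/B_\tau] = x\,\mathbb{E}[\mathcal{E}_\tau^{W_2}(\eta)] \le x$ for every $\eta\in\mathcal{M}$, i.e.\ $\sup_{\eta\in\mathcal{M}}\mathbb{E}[X^* Z_\tau^\eta/B_\tau]\le x$; moreover $\log X^* + h(Y_\tau)$ differs from the $L^1$ random variable $\int_0^\tau\theta(Y_s)\,dW_{1s}$ by a bounded-below quantity (again by $\theta^2\le M_0$ and $h\ge m>0$), so $\mathbb{E}[(\log X^* + h(Y_\tau))_-]<\infty$; hence $X^*\in\Tilde{\mathcal{U}}_{0,\tau}(x,y)$. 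A direct computation then gives $\mathbb{E}[\log X^*] = \log x + r\tau + \frac12\mathbb{E}[\int_0^\tau\theta^2(Y_s)\,ds] = L(\eta^*,\lambda^*)$.

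Finally I would invoke weak duality to close the argument: for any $X\in\Tilde{\mathcal{U}}_{0,\tau}(x,y)$ and any $(\eta,\lambda)\in\mathcal{M}\times\mathbb{R}_+$, the budget constraint $\mathbb{E}[X Z_\tau^\eta/B_\tau]\le x$ together with $\lambda>0$ gives $\mathbb{E}[\log X]\le \mathbb{E}[\log X - \lambda X Z_\tau^\eta/B_\tau] + \lambda x \le L(\eta,\lambda)$, so $\sup_{X\in\Tilde{\mathcal{U}}_{0,\tau}(x,y)}\mathbb{E}[\log X] \le \inf_{\eta\in\mathcal{M},\lambda>0} L(\eta,\lambda)$. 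Chaining this with $\mathbb{E}[\log X^*]\le \sup_X\mathbb{E}[\log X]$ and $\inf L \le L(\eta^*,\lambda^*) = \mathbb{E}[\log X^*]$ forces equality throughout, which is exactly \eqref{sup<inf} and shows simultaneously that $X^*$ is optimal for \eqref{log.pri2} and $(\eta^*,\lambda^*)$ is optimal for \eqref{dual.log}. I expect the only genuinely delicate points to be the integrability bookkeeping --- justifying that the relevant stochastic integrals are true martingales so that the $\log Z_\tau^\eta$ expectation is exactly as stated, that $\mathbb{E}[\mathcal{E}_\tau^{W_2}(\eta)]\le 1$ holds uniformly over $\eta\in\mathcal{M}$, and that $\mathbb{E}[\log X]$ is well defined (no $\infty-\infty$) for every competitor $X$ --- all of which are supplied by the boundedness \eqref{M.def}, the condition $h\ge m>0$, and the defining constraints of $\Tilde{\mathcal{U}}_{0,\tau}$.
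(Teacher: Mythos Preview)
Your argument is correct. The paper itself does not give a proof of this proposition; it simply invokes Proposition~3.1 and Remark~4.6 of \cite{CH05}. Your route is a direct, self-contained derivation that exploits the special structure of the logarithm: because the Legendre--Fenchel transform of $\log$ is $u\mapsto -\log u -1$, the inner maximization in \eqref{L.log} is explicit, the dual functional separates additively into a $\lambda$-part and an $\eta$-part, and both minimizations are trivial. This is exactly the classical ``myopic'' simplification for logarithmic utility; it recovers the same $(\eta^{*},\lambda^{*},X^{*})$ as the cited general duality result, but without appealing to it. One minor remark: the independence of $W_{1}$ and $W_{2}$ that you mention is not actually used in the identity $Z_{\tau}^{\eta}/Z_{\tau}^{0}=\mathcal{E}_{\tau}^{W_{2}}(\eta)$, which is purely algebraic from \eqref{Z.process}; the supermartingale bound $\mathbb{E}[\mathcal{E}_{\tau}^{W_{2}}(\eta)]\le 1$ then follows just from $\mathcal{E}^{W_{2}}(\eta)$ being a non-negative local martingale.
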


Thanks to the characterization of the optimal solution to primal problem \eqref{log.pri2} in Proposition \ref{prop2.2}, we can obtain the upper and lower bounds for the value function $V(\cdot,\cdot)$ given by \eqref{problem.log}.

\begin{prop}
For fixed $(x,y)\in\mathbb{R}_+\times\mathbb{R}$, it holds that
\begin{eqnarray}\label{V.log}
\frac{1-\gamma}{e^{\delta\tau}-1}\log x+\frac{m}{e^{\delta\tau}-1}+r\tau\frac{e^{\delta\tau}-\gamma}{(e^{\delta\tau}-1)^2}
\hspace{-0.3cm}&\leq&\hspace{-0.3cm}
V(x,y)
\nonumber\\
\hspace{-0.3cm}&\leq&\hspace{-0.3cm}
\frac{1-\gamma}{e^{\delta\tau}-1}\log x+\frac{1}{e^{\delta\tau}-1}+\tau(r+\frac{M_0}{2})\frac{e^{\delta\tau}-\gamma}{(e^{\delta\tau}-1)^2},
\nonumber
\end{eqnarray}
where $m\in(0,1)$ is the lower bound of the function $h(\cdot)$ that appears in the utility function $U(\cdot,\cdot)$ given by \eqref{log} and $M_0$ is a positive constant given by \eqref{M.def}.
\end{prop}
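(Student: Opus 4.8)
The plan is to prove the two inequalities separately, testing \eqref{problem.log} against an explicit admissible strategy for the lower bound and using the one-period logarithmic value identified in Proposition \ref{prop2.2} for the upper bound. Throughout set $q:=e^{-\delta\tau}\in(0,1)$ (recall $\delta>0$), and record the elementary identities
\[
\sum_{i\ge1}q^{i}=\frac{1}{e^{\delta\tau}-1},\qquad \sum_{i\ge1}iq^{i}=\frac{e^{\delta\tau}}{(e^{\delta\tau}-1)^{2}},\qquad (1-\gamma)\sum_{i\ge1}iq^{i}+\gamma\sum_{i\ge1}q^{i}=\frac{e^{\delta\tau}-\gamma}{(e^{\delta\tau}-1)^{2}} .
\]
The last expression is exactly the coefficient of $r\tau$ (resp.\ of $(r+M_{0}/2)\tau$) appearing in \eqref{V.log}, so both bounds will reduce to the same geometric summation once matching per-period estimates are available. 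For the lower bound I would simply take the pure money-market strategy $X_{t}=xe^{rt}\in\mathcal{U}_{0}(x,y)$. Then $\log\big(X_{T_i}/X_{T_{i-1}}^{\gamma}\big)=(1-\gamma)\log x+r\tau\big((1-\gamma)i+\gamma\big)$ and, since $h(\cdot)\ge m$, the series $\sum_{i\ge1}q^{i}\big[(1-\gamma)\log x+m+r\tau((1-\gamma)i+\gamma)\big]$ lower-bounds $V(x,y)$; by the identities above this equals the left-hand side of \eqref{V.log}.

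For the upper bound, fix an arbitrary $X\in\mathcal{U}_{0}(x,y)$ and split $\log\big(X_{T_i}/X_{T_{i-1}}^{\gamma}\big)=\log\big(X_{T_i}/X_{T_{i-1}}\big)+(1-\gamma)\log X_{T_{i-1}}$. By \eqref{supermaringale} the process $XZ^{\eta}/B$ is a $\mathbb{P}$-supermartingale on $[T_{i-1},T_{i}]$ for every $\eta\in\mathcal{M}$, hence $X_{T_i}\in\Tilde{\mathcal{U}}_{T_{i-1},T_{i}}(X_{T_{i-1}},Y_{T_{i-1}})$; combining the scaling property of $\log$ with the value computed in Proposition \ref{prop2.2} (for initial data $1$ and $Y_{T_{i-1}}$, using the time-homogeneity of the coefficients and the Markov property of $Y$) gives the deterministic bound
\[
\mathbb{E}\Big[\log\tfrac{X_{T_i}}{X_{T_{i-1}}}\,\Big|\,\mathcal{F}_{T_{i-1}}\Big]\ \le\ r\tau+\tfrac12\,\mathbb{E}\Big[\textstyle\int_{T_{i-1}}^{T_i}\tfrac{(\mu(Y_s)-r)^{2}}{\sigma^{2}(Y_s)}\,ds\,\Big|\,\mathcal{F}_{T_{i-1}}\Big]\ \le\ \big(r+\tfrac{M_{0}}{2}\big)\tau ,
\]
where the last step uses \eqref{M.def}. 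Taking full expectations and iterating over $j=1,\dots,i-1$ yields the a priori estimate $\mathbb{E}[\log X_{T_j}]\le\log x+j\big(r+\tfrac{M_{0}}{2}\big)\tau$. Plugging $j=i-1$ back into the split, together with $h(\cdot)\le1$, produces the per-period bound $\mathbb{E}[A_i]\le(1-\gamma)\log x+1+\big(r+\tfrac{M_{0}}{2}\big)\tau\big((1-\gamma)i+\gamma\big)$ for $A_i:=\log\big(X_{T_i}/X_{T_{i-1}}^{\gamma}\big)+h(Y_{T_i})$; summing $q^{i}\mathbb{E}[A_i]$ over $i\ge1$ and using the identities above reproduces the right-hand side of \eqref{V.log}, and the supremum over $X\in\mathcal{U}_{0}(x,y)$ then gives the claim.

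The main obstacle is not any of these estimates but rather the bookkeeping needed to make them rigorous: a priori neither $\log X_{T_j}$ nor $\log(X_{T_i}/X_{T_{i-1}})$ is integrable, and the summand $A_i$ is not sign-definite. I would resolve this by a forward induction that exploits the integrability requirement $\sum_i e^{-\delta T_i}\mathbb{E}[(A_i)_-]<\infty$ built into $\mathcal{U}_{0}(x,y)$ in \eqref{admissible_set} (so each $\mathbb{E}[(A_i)_-]<\infty$): starting from $X_{T_0}=x$, the pointwise bound $(\log X_{T_i})_-\le(A_i)_-+\gamma(\log X_{T_{i-1}})_-+1$ shows inductively that every $\mathbb{E}[(\log X_{T_i})_-]$ and hence every $\mathbb{E}[(\log(X_{T_i}/X_{T_{i-1}}))_-]$ is finite, so all the conditional expectations above and $\mathbb{E}[A_i]\in(-\infty,+\infty]$ are well defined. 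The passage from the partial-sum bound to the infinite-horizon bound is then a routine monotone-convergence argument applied to the positive and negative parts separately, the finiteness of $\sum_i q^i\mathbb{E}[(A_i)_+]$ being guaranteed precisely because the per-period estimate is deterministic. All remaining computations are the elementary geometric-series manipulations indicated above.
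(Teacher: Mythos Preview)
Your proposal is correct and follows essentially the same route as the paper: the lower bound via the pure money-market portfolio $X_t=xe^{rt}$ with $h\ge m$, and the upper bound via the split $\log(X_{T_i}/X_{T_{i-1}}^{\gamma})=\log(X_{T_i}/X_{T_{i-1}})+(1-\gamma)\log X_{T_{i-1}}$, the one-period logarithmic value from Proposition~\ref{prop2.2} capped by $(r+M_0/2)\tau$ via \eqref{M.def}, and the same geometric summation. The only difference is cosmetic---the paper bounds $\mathbb{E}[\log X_{T_{i-1}}]$ in one shot over $[0,T_{i-1}]$ whereas you iterate period by period---and your explicit integrability bookkeeping (the forward induction on $\mathbb{E}[(\log X_{T_i})_-]$ from the admissibility condition) is a welcome addition that the paper leaves implicit.
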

\begin{proof}
The lower bound of $V(x,y)$ can be derived by noting that $(xe^{rt})_{t\geq 0}$ is an admissible portfolio process in $\mathcal{U}_0(x,y)$. More precisely, using the lower bound of $h(\cdot)$, one has
\begin{eqnarray}
\mathbb{E}\left[\sum_{i=1}^{\infty}e^{-\delta T_i}\bigg(\log\frac{X_{T_i}}{X_{T_{i-1}}^{\gamma}}+h(Y_{T_i})\bigg)\right]
\hspace{-0.3cm}&=&\hspace{-0.3cm}
\mathbb{E}\left[\sum_{i=1}^{\infty}e^{-\delta T_{i}}\left(\log\frac{xe^{rT_i}}{(xe^{rT_{i-1}})^{\gamma}}+h(Y_{T_{i}})\right)\right]
\nonumber\\
\hspace{-0.3cm}&\geq&\hspace{-0.3cm} 
\sum_{i=1}^{\infty}e^{-\delta T_{i}}\left[m+(1-\gamma)\log x+r\tau(i(1-\gamma)+\gamma)\right]
%\nonumber\\
%\hspace{-0.3cm}&=&\hspace{-0.3cm}
%(m+r\tau \gamma+(1-\gamma)\log x)\frac{e^{-\delta\tau}}{1-e^{-\delta\tau}}+\frac{r\tau(1-\gamma)e^{-\delta\tau}}{(1-e^{-\delta\tau})^2}
\nonumber\\
\hspace{-0.3cm}&=&\hspace{-0.3cm}
\frac{1-\gamma}{e^{\delta\tau}-1}\log x+\frac{m}{e^{\delta\tau}-1}+r\tau\frac{e^{\delta\tau}-\gamma}{(e^{\delta\tau}-1)^2}.
\end{eqnarray}

We then establish the upper bound. For any admissible portfolio process $(X_{t})_{t\geq0}\in\mathcal{U}_0(x,y)$ and $n\geq1$, using the fact of $m\leq h(y)\leq 1$ with $m\in(0,1)$, one has
\begin{eqnarray}\label{cond.1}
\hspace{-0.3cm}&&\hspace{-0.3cm}
\mathbb{E}\left[\sum_{i=1}^{n}e^{-\delta T_i}\bigg(\log\frac{X_{T_i}}{X_{T_{i-1}}^{\gamma}}+h(Y_{T_i})\bigg)\right]=
%\nonumber\\
%\hspace{-0.3cm}&=&\hspace{-0.3cm}
\sum_{i=1}^{n}e^{-\delta T_{i}}\mathbb{E}\left[\log\frac{X_{T_i}}{X_{T_{i-1}}^{\gamma}}+h(Y_{T_{i}})\right]
\nonumber\\
\hspace{-0.3cm}&\leq&\hspace{-0.3cm}
\sum_{i=1}^{n}e^{-\delta T_{i}}+\sum_{i=1}^{n}e^{-\delta T_{i}}\mathbb{E}\left[\log\frac{X_{T_i}}{X_{T_{i-1}}}+(1-\gamma)\log X_{T_{i-1}}\right]
\nonumber\\
\hspace{-0.3cm}&=&\hspace{-0.3cm}
\sum_{i=1}^{n}e^{-\delta T_{i}}+\sum_{i=1}^{n}e^{-\delta T_{i}}\mathbb{E}\left[\mathbb{E}\left[\left.\log\frac{X_{T_i}}{X_{T_{i-1}}}\right|\mathcal{F}_{T_{i-1}}\right]+(1-\gamma)\log X_{T_{i-1}}\right]
\nonumber\\
\hspace{-0.3cm}&\leq&\hspace{-0.3cm}
\sum_{i=1}^{n}e^{-\delta T_{i}}+\sum_{i=1}^{n}e^{-\delta T_{i}}
\mathbb{E}\left[\left(\sup_{X\in\mathcal{U}_{T_{i-1},T_i}(1,Y_{T_{i-1}})}\mathbb{E}\left[\log{X}_{T_i}\big|\mathcal{F}_{T_{i-1}}\right]\right)+(1-\gamma)\log X_{T_{i-1}}\right].
\end{eqnarray}
Note that $\sup_{X\in\mathcal{U}_{T_{i-1},T_i}(1,Y_{T_{i-1}})}\mathbb{E}[\log X_{T_i}|\mathcal{F}_{T_{i-1}}]$ is the value function of the investment problem with maturity $\tau$.
Moreover, by  Proposition \ref{prop2.2}, it is possible to derive an upper bound for the primal problem $\sup_{X\in\mathcal{U}_{T_{i-1},T_i}(1,Y_{T_{i-1}})}\mathbb{E}[\log X_{T_i}|\mathcal{F}_{T_{i-1}}]$ by straightforward calculus. Specifically, using the results in Proposition \ref{prop2.2}, one gets
\begin{eqnarray}
\label{upper}
\sup_{X\in\Tilde{\mathcal{U}}_{T_{i-1},T_i}(1,Y_{T_{i-1}})}\mathbb{E}[\log X_{T_i}|\mathcal{F}_{T_{i-1}}]
\hspace{-0.3cm}&=&\hspace{-0.3cm}
r\tau-\mathbb{E}[\log Z^0_{T_i}|\mathcal{F}_{T_{i-1}}]
\nonumber\\
\hspace{-0.3cm}&=&\hspace{-0.3cm}
r\tau+\mathbb{E}\left[\int_{T_{i-1}}^{T_i}\theta(Y_s)dW_{1s}+\frac{1}{2}\int_{T_{i-1}}^{T_i}\theta^2(Y_s)ds\bigg|\mathcal{F}_{T_{i-1}}\right]
\nonumber\\
\hspace{-0.3cm}&\leq&\hspace{-0.3cm}
(r+\frac{M_0}{2})\tau.
\end{eqnarray}
Similarly, it holds that
\begin{eqnarray}\label{cond.2}
\sup_{X\in\Tilde{\mathcal{U}}_{0,T_{i-1}}(x,y)}\mathbb{E}\left[\log{X}_{T_{i-1}}\right]\leq \left(r+\frac{M_0}{2}\right)(i-1)\tau+\log x.
\end{eqnarray}
Combining \eqref{cond.1}-\eqref{cond.2}, one obtains
\begin{eqnarray}
\hspace{-0.3cm}&&\hspace{-0.3cm}
\mathbb{E}\left[\sum_{i=1}^{\infty}e^{-\delta T_i}\bigg(\log\frac{X_{T_i}}{X_{T_{i-1}}^{\gamma}}+h(Y_{T_i})\bigg)\right]
\nonumber\\
\hspace{-0.3cm}&\leq&\hspace{-0.3cm}
\sum_{i=1}^{\infty}e^{-\delta T_{i}}(1+(r+\frac{M_0}{2})\tau)+(1-\gamma)\sum_{i=1}^{\infty}e^{-\delta T_{i}}\mathbb{E}\left[\log X_{T_{i-1}}\right]
\nonumber\\
\hspace{-0.3cm}&\leq&\hspace{-0.3cm}
\sum_{i=1}^{\infty}e^{-\delta T_{i}}\left[1+(r+\frac{M_0}{2})\tau+(1-\gamma)(r+\frac{M_0}{2})(i-1)\tau+(1-\gamma)\log x\right],\nonumber
%\nonumber\\
%\hspace{-0.3cm}&=&\hspace{-0.3cm}
%\frac{1-\gamma}{e^{\delta\tau}-1}\log x+\frac{1}{e^{\delta\tau}-1}+\tau(r+\frac{M_0}{2})\frac{e^{\delta\tau}-\gamma}{(e^{\delta\tau}-1)^2}
\end{eqnarray}
from which the desired upper bound in \eqref{V.log} follows.
\end{proof}

%Combining the relationship between the optimal solutions to the primal problem \eqref{log.pri2} and dual problem \eqref{dual.log} showed in Proposition \ref{prop2.2}, one can obtain that the solution to the optimization problem \eqref{log.pri} exists. 
Similar to the proof for power utility, we can also verify the existence of fixed point $A^*(\cdot)\in C^+_b(\mathbb{R})$ solving equation \eqref{A*.dual}. Hence, we present the following Proposition \ref{fixed.log} without proof.

\begin{prop}\label{fixed.log}
Define a functional $\Psi: {C}_b^{+}(\mathbb{R})\mapsto {C}_b^{+}(\mathbb{R})$ as
\begin{eqnarray}\label{Psi.2}
\Psi(A):=\Psi(A;y):=\frac{1-\gamma e^{-\delta\tau}}{e^{\delta\tau}-1}\sup_{X\in\Tilde{\mathcal{U}}_{0,\tau}(1,y)}\mathbb{E}\left[\log X_{\tau}\right]+e^{-\delta \tau}\mathbb{E}\left[h(Y_{\tau})+A(Y_{\tau})\right],\quad y\in\mathbb{R}.
\end{eqnarray} 
Then $\Psi$ is a contraction on the metric space $({C}_b^{+}(\mathbb{R}),d)$, where the metric $d$ is defined by $d(x,y):=\sup_{t\in\mathbb{R}}|x(t)-y(t)|$, $x,y\in {C}_b^{+}(\mathbb{R})$. Furthermore, $\Psi$ admits a unique fixed-point $A^*(\cdot)$ such that $A^*(y)=\Psi(A^*;y)$ for all $y\in\mathbb{R}$,
and $A^*(\cdot)$ satisfies
    $$\frac{e^{\delta\tau}-\gamma}{(e^{\delta\tau}-1)^2}r\tau+\frac{me^{-\delta\tau}}{1-e^{-\delta\tau}}\leq A^*(y)\leq \frac{e^{\delta\tau}-\gamma}{(e^{\delta\tau}-1)^2}(r+\frac{M_0}{2})\tau+\frac{e^{-\delta\tau}}{1-e^{-\delta\tau}},\quad y\in\mathbb{R}.$$
\end{prop}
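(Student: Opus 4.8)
The plan is to mimic the proof of Proposition \ref{fixed.power}: first check that $(C_b^+(\mathbb{R}),d)$ is a complete metric space, then that $\Psi$ maps $C_b^+(\mathbb{R})$ into itself, then that $\Psi$ is a $d$-contraction, and finally invoke the Banach fixed-point theorem; the two-sided bound on $A^*$ is afterwards read off from the fixed-point equation \eqref{A*.dual}. Completeness of $(C_b^+(\mathbb{R}),d)$ is standard, since $d$ is the sup-metric and a uniform limit of bounded continuous nonnegative functions is again bounded, continuous and nonnegative. For the self-map property, the key observation is that, by Proposition \ref{prop2.2} and the computation in \eqref{upper}, one has the explicit identity
\[
\sup_{X\in\Tilde{\mathcal{U}}_{0,\tau}(1,y)}\mathbb{E}[\log X_{\tau}]=r\tau+\frac12\,\mathbb{E}\left[\int_0^{\tau}\theta^2(Y_s)\,ds\right],
\]
the stochastic-integral term dropping out because $\theta$ is bounded by \eqref{M.def}. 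Hence this quantity lies in $[r\tau,(r+M_0/2)\tau]$ and, by the continuous dependence of the solution of \eqref{SDE2.2} on its initial value $y$ together with dominated convergence (using $\theta^2\le M_0$), depends continuously on $y$; similarly $y\mapsto e^{-\delta\tau}\mathbb{E}[h(Y_{\tau})+A(Y_{\tau})]$ is bounded and continuous for each $A\in C_b^+(\mathbb{R})$. Therefore $\Psi(A)\in C_b^+(\mathbb{R})$.

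The contraction estimate is in fact easier than in the power case, because $A$ enters $\Psi$ linearly and decoupled from the inner optimization: for any $A_1,A_2\in C_b^+(\mathbb{R})$ and any $y\in\mathbb{R}$,
\[
|\Psi(A_1;y)-\Psi(A_2;y)|=e^{-\delta\tau}\left|\mathbb{E}\big[A_1(Y_{\tau})-A_2(Y_{\tau})\big]\right|\leq e^{-\delta\tau}\,\mathbb{E}\big[|A_1(Y_{\tau})-A_2(Y_{\tau})|\big]\leq e^{-\delta\tau}\,d(A_1,A_2),
\]
so that $d(\Psi(A_1),\Psi(A_2))\le e^{-\delta\tau}d(A_1,A_2)$ with $e^{-\delta\tau}<1$ since $\delta>0$ — note that no analogue of Assumption \ref{ass1} is needed here. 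By the Banach contraction theorem, $\Psi$ has a unique fixed point $A^*\in C_b^+(\mathbb{R})$ with $A^*(y)=\Psi(A^*;y)$ for all $y$.

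For the bounds, set $\underline{A^*}:=\inf_{y}A^*(y)$ and $\overline{A^*}:=\sup_{y}A^*(y)$, both finite by the previous step. Since $X_t=e^{rt}$ is admissible in $\mathcal{U}_0(1,y)$, we have $\sup_{X}\mathbb{E}[\log X_{\tau}]\ge r\tau$, and combining this with $h\ge m$ in \eqref{A*.dual} yields $A^*(y)\ge \frac{1-\gamma e^{-\delta\tau}}{e^{\delta\tau}-1}r\tau+e^{-\delta\tau}(m+\underline{A^*})$ for every $y$; taking the infimum over $y$ and solving the resulting scalar inequality for $\underline{A^*}$ (using $1-e^{-\delta\tau}>0$ and $\frac{1-\gamma e^{-\delta\tau}}{(e^{\delta\tau}-1)(1-e^{-\delta\tau})}=\frac{e^{\delta\tau}-\gamma}{(e^{\delta\tau}-1)^2}$) gives the stated lower bound. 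Symmetrically, inserting $\sup_{X}\mathbb{E}[\log X_{\tau}]\le(r+M_0/2)\tau$ from \eqref{upper} and $h\le 1$ into \eqref{A*.dual} gives $A^*(y)\le\frac{1-\gamma e^{-\delta\tau}}{e^{\delta\tau}-1}(r+M_0/2)\tau+e^{-\delta\tau}(1+\overline{A^*})$, and taking the supremum over $y$ and solving for $\overline{A^*}$ produces the stated upper bound.

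The only genuinely nonroutine point, and the one I would write out most carefully, is the verification that $\Psi$ is a self-map of $C_b^+(\mathbb{R})$ — specifically the continuity in the factor's initial value $y$ of $y\mapsto\sup_{X\in\Tilde{\mathcal{U}}_{0,\tau}(1,y)}\mathbb{E}[\log X_{\tau}]$, which is where one needs the explicit logarithmic dual characterization of Proposition \ref{prop2.2} together with the standard continuous-dependence estimates for \eqref{SDE2.2}. Everything else (completeness, the contraction property, the fixed point, and the bounds) follows the template already established for the power case.
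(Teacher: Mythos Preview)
Your proposal is correct and follows exactly the approach the paper suggests: the paper presents Proposition \ref{fixed.log} without proof, stating only that it is ``similar to the proof for power utility'' (Proposition \ref{fixed.power}). Your argument indeed mimics that template, and you correctly observe that the contraction step is in fact simpler here because $A$ enters $\Psi$ linearly and decoupled from the optimization, yielding the clean Lipschitz constant $e^{-\delta\tau}$ directly; the algebraic identity $\frac{1-\gamma e^{-\delta\tau}}{(e^{\delta\tau}-1)(1-e^{-\delta\tau})}=\frac{e^{\delta\tau}-\gamma}{(e^{\delta\tau}-1)^2}$ you use for the bounds is also right. Your explicit attention to the self-map property (continuity in $y$ via Proposition \ref{prop2.2} and continuous dependence for \eqref{SDE2.2}) is a point the paper's power-case proof glosses over, so if anything your write-up is more complete than what the paper would have supplied.
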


With the previous preparations, we are ready to present the main result in this section.
\begin{thm}
The value function of the problem \eqref{problem.log} is given by
\begin{eqnarray}
V(x,y)=A^*(y)+C^*\log x, \quad (x,y)\in\mathbb{R}_+\times\mathbb{R},\nonumber
\end{eqnarray}
where $A^*(\cdot)$ is the unique fixed-point of the function $\Psi$ defined in \eqref{Psi.2} and $C^*=\frac{1-\gamma}{e^{\delta\tau}-1}$.
In addition, the optimal wealth process is given by
\begin{eqnarray}
X^*_{t}:=x\frac{B_{t}}{Z_{t}^0},\quad t\in[0,\infty),\nonumber
\end{eqnarray}
where the process $Z_t^0$ is given by \eqref{Z.process} with $\eta\equiv0$. The optimal proportion of wealth invested in the risky asset is given by
\begin{eqnarray}
\pi^*_t=\frac{\mu(Y_t)-r}{\sigma^2(Y_t)},\quad t\in [0,\infty).\nonumber
\end{eqnarray}
\end{thm}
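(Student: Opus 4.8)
The plan is to run a verification argument parallel to that of Theorem~\ref{thm3.1}, with the power ansatz replaced by $A^*(y)+C^*\log x$. For $X\in\mathcal{U}_0(x,y)$ I would put $U_i:=\log\frac{X_{T_i}}{X_{T_{i-1}}^{\gamma}}+h(Y_{T_i})$ and
$$D_n:=\sum_{i=1}^{n}e^{-\delta T_i}U_i+e^{-\delta T_n}\big(A^*(Y_{T_n})+C^*\log X_{T_n}\big),$$
and check that $(D_n)$ is an $\{\mathcal{F}_{T_n}\}$-supermartingale. Expanding $D_{n+1}-D_n$ via $\log X_{T_{n+1}}=\log\frac{X_{T_{n+1}}}{X_{T_n}}+\log X_{T_n}$, the coefficient of $\log X_{T_n}$ comes out to $e^{-\delta\tau}(1-\gamma)-C^*(1-e^{-\delta\tau})$, which vanishes precisely for $C^*=\frac{1-\gamma}{e^{\delta\tau}-1}$; the surviving coefficient of $\log\frac{X_{T_{n+1}}}{X_{T_n}}$ is then $\frac{1-\gamma e^{-\delta\tau}}{e^{\delta\tau}-1}$, so that
$$e^{\delta T_n}\,\mathbb{E}[D_{n+1}-D_n\mid\mathcal{F}_{T_n}]=\tfrac{1-\gamma e^{-\delta\tau}}{e^{\delta\tau}-1}\,\mathbb{E}\big[\log\tfrac{X_{T_{n+1}}}{X_{T_n}}\,\big|\,\mathcal{F}_{T_n}\big]+e^{-\delta\tau}\,\mathbb{E}\big[h(Y_{T_{n+1}})+A^*(Y_{T_{n+1}})\,\big|\,\mathcal{F}_{T_n}\big]-A^*(Y_{T_n}).$$
Bounding the conditional expectation of $\log\frac{X_{T_{n+1}}}{X_{T_n}}$ by the (conditional) one-period value $\sup_{X\in\Tilde{\mathcal{U}}_{T_n,T_{n+1}}(1,Y_{T_n})}\mathbb{E}[\log X_{T_{n+1}}\mid\mathcal{F}_{T_n}]$ --- invoking the equivalence between $\mathcal{U}$ and $\Tilde{\mathcal{U}}$ and the time-homogeneous Markov structure of the factor --- the right side is at most $\Psi(A^*;Y_{T_n})-A^*(Y_{T_n})=0$ by Proposition~\ref{fixed.log}. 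The same computation shows that $D_n$ is a \emph{true} martingale whenever the one-period value is attained at every step.

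\textbf{Upper bound.} From $\mathbb{E}[D_n]\le D_0=A^*(y)+C^*\log x$ I would read off $\mathbb{E}[\sum_{i=1}^{n}e^{-\delta T_i}U_i]\le A^*(y)+C^*\log x-e^{-\delta T_n}\mathbb{E}[A^*(Y_{T_n})]-C^*e^{-\delta T_n}\mathbb{E}[\log X_{T_n}]$. Since $V$ is finite by \eqref{V.log}, every $X\in\mathcal{U}_0(x,y)$ has $\mathbb{E}[\sum_i e^{-\delta T_i}U_i]<\infty$, which with the admissibility bound $\sum_i e^{-\delta T_i}\mathbb{E}[(U_i)_-]<\infty$ forces $\sum_i e^{-\delta T_i}\mathbb{E}[|U_i|]<\infty$. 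Then, writing $\log X_{T_i}=U_i+\gamma\log X_{T_{i-1}}-h(Y_{T_i})$ and using $0\le h\le1$, the sequence $c_i:=\mathbb{E}[|\log X_{T_i}|]$ obeys $c_i\le\gamma c_{i-1}+\mathbb{E}[|U_i|]+1$, hence $e^{-\delta T_n}c_n\le e^{-\delta T_n}\big(|\log x|+\tfrac{1}{1-\gamma}\big)+\sum_{i=1}^{n}(\gamma e^{-\delta\tau})^{\,n-i}\big(e^{-\delta T_i}\mathbb{E}[|U_i|]\big)$ (the case $\gamma=1$ is trivial since then $C^*=0$). As $\gamma e^{-\delta\tau}<1$ and $e^{-\delta T_i}\mathbb{E}[|U_i|]\to0$, the right-hand side is a convolution of an $\ell^1$ sequence with a null sequence and tends to $0$, so $e^{-\delta T_n}\mathbb{E}[|\log X_{T_n}|]\to0$; the two terminal (transversality) terms therefore vanish, and dominated convergence gives $\mathbb{E}[\sum_{i=1}^{\infty}e^{-\delta T_i}U_i]\le A^*(y)+C^*\log x$. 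Taking the supremum over $X$ yields $V(x,y)\le A^*(y)+C^*\log x$.

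\textbf{Lower bound and optimal controls.} Take $X^*_t:=xB_t/Z^0_t$, corresponding to $\eta^*\equiv0$. Over each $[T_n,T_{n+1}]$, $X^*_{T_{n+1}}/X^*_{T_n}=(B_{T_{n+1}}/Z^0_{T_{n+1}})/(B_{T_n}/Z^0_{T_n})$ is exactly the log-optimal one-period terminal wealth started from $1$ identified in Proposition~\ref{prop2.2} (there with $\lambda^*=1$, $\eta^*\equiv0$), so the one-period value is attained at each step and $(D^*_n)$ is a true martingale. I would then verify $X^*\in\mathcal{U}_0(x,y)$: positivity and self-financing are immediate, and from $\log X^*_{T_i}=\log x+rT_i+\int_0^{T_i}\theta(Y_s)dW_{1s}+\tfrac12\int_0^{T_i}\theta^2(Y_s)ds$, the non-negativity of the drift and $\int\theta^2$ terms together with It\^o's isometry and $\theta^2\le M_0$ give $\sum_i e^{-\delta T_i}\mathbb{E}[(U^*_i)_-]<\infty$; also $e^{-\delta T_n}\mathbb{E}[\log X^*_{T_n}]=e^{-\delta T_n}\big(\log x+rT_n+\tfrac12\mathbb{E}[\int_0^{T_n}\theta^2(Y_s)ds]\big)\to0$ since $\mathbb{E}[\int_0^{T_n}\theta^2]\le M_0T_n$. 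Hence $\mathbb{E}[D^*_n]=D^*_0=A^*(y)+C^*\log x$, and letting $n\to\infty$ (monotone convergence on the positive parts, summability of the negative parts) gives $\mathbb{E}[\sum_{i=1}^{\infty}e^{-\delta T_i}U^*_i]=A^*(y)+C^*\log x$. With the upper bound this proves $V(x,y)=A^*(y)+C^*\log x$ and the optimality of $X^*$. Finally, applying It\^o's formula to $X^*_t=x\exp\big(rt+\int_0^t\theta(Y_s)dW_{1s}+\tfrac12\int_0^t\theta^2(Y_s)ds\big)$ gives $dX^*_t=X^*_t\big[(r+\theta^2(Y_t))dt+\theta(Y_t)dW_{1t}\big]$; comparing the diffusion with $dX_t=[r+(\mu(Y_t)-r)\pi_t]X_tdt+\sigma(Y_t)\pi_tX_tdW_{1t}$ forces $\sigma(Y_t)\pi^*_t=\theta(Y_t)$, i.e.\ $\pi^*_t=\frac{\mu(Y_t)-r}{\sigma^2(Y_t)}$, and the drift then agrees automatically.

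\textbf{Main obstacle.} The delicate step is the transversality estimate in the upper bound. Unlike the power case of Theorem~\ref{thm3.1}, neither $U_i$ nor $\log X_{T_n}$ is sign-definite, so the terminal term cannot simply be dropped; the recursion $c_i\le\gamma c_{i-1}+\mathbb{E}[|U_i|]+1$ combined with the convolution argument --- which is exactly where the admissibility/integrability constraint is used to force $\mathbb{E}[|\log X_{T_n}|]=o(e^{\delta T_n})$ --- is the crux that lets the limit pass through.
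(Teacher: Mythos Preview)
Your proposal is correct and follows essentially the same verification scheme as the paper: the same discrete process $D_n$, the same supermartingale/martingale dichotomy, the same candidate $X^*_t=xB_t/Z^0_t$ drawn from Proposition~\ref{prop2.2}, and the same It\^o computation to read off $\pi^*_t=(\mu(Y_t)-r)/\sigma^2(Y_t)$.

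The one genuine difference is exactly the step you flag as the main obstacle. The paper does \emph{not} derive your recursion $c_i\le\gamma c_{i-1}+\mathbb{E}[|U_i|]+1$ nor the convolution estimate; instead it quotes Lemma~5.3 of \cite{WYY23} as a black box, both to kill the terminal term $e^{-\delta T_n}\mathbb{E}[A^*(Y_{T_n})+C^*\log X_{T_n}]$ in the upper bound and again in the lower bound for $X^*$. Your argument is a self-contained replacement for that external lemma: you first extract $\sum_i e^{-\delta T_i}\mathbb{E}[|U_i|]<\infty$ from the a~priori finiteness of $V$ (already established in the bounds proposition) together with the admissibility constraint on $(U_i)_-$, and then the geometric recursion in $\gamma$ plus the $\ell^1$--null convolution bound forces $e^{-\delta T_n}\mathbb{E}[|\log X_{T_n}|]\to0$. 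This buys self-containment at the cost of a few extra lines; the paper's route is shorter but relies on the companion paper. Both handle the $\gamma=1$ corner case trivially since then $C^*=0$.
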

\begin{proof}
For any admissible wealth process $(X_{t})_{t\geq 0}\in\mathcal{U}_0(x,y)$, define a discrete-time stochastic process $D=(D_n)_{n\geq 0}$ by
$$D_n:=\sum_{i=1}^ne^{-\delta T_i}\bigg(\log\frac{X_{T_i}}{X_{T_{i-1}}^{\gamma}}+h(Y_{T_i})\bigg)+e^{-\delta T_n}(A^*(Y_{T_n})+C^*\log X_{T_n}),\quad n\geq 1,$$
and $D_0=A^*(y)+C^*\log x$.
Then
\begin{eqnarray}\label{Dn+1.log}
D_{n+1}
\hspace{-0.3cm}&=&\hspace{-0.3cm}
D_{n}+e^{-\delta T_{n+1}}\bigg(\log\frac{X_{T_{n+1}}}{X_{T_{n}}^{\gamma}}+h(Y_{T_{n+1}})\bigg)-e^{-\delta T_n}(A^*(Y_{T_n})+C^*\log X_{T_n})
\nonumber\\
\hspace{-0.3cm}&&\hspace{-0.3cm}
+e^{-\delta T_{n+1}}(A^*(Y_{T_{n+1}})+C^*\log X_{T_{n+1}})
\nonumber\\
\hspace{-0.3cm}&=&\hspace{-0.3cm}
D_n+e^{-\delta T_n}\left[e^{-\delta\tau}\left(\log\frac{X_{T_{n+1}}}{X^{\gamma}_{T_{n}}}+h(Y_{T_{n+1}})+(A^*(Y_{T_{n+1}})+C^*\log X_{T_{n+1}})\right)\right.
\nonumber\\
\hspace{-0.3cm}&&\hspace{-0.3cm}
\left.-(A^*(Y_{T_n})+C^*\log X_{T_n})\right].
\end{eqnarray}
Taking the conditional expectation on both sides of \eqref{Dn+1.log}, one gets
\begin{eqnarray}\label{3.4.log}
\mathbb{E}[D_{n+1}|\mathcal{F}_{T_n}]
\hspace{-0.3cm}&=&\hspace{-0.3cm}
D_n+e^{-\delta T_n}\left[\mathbb{E}\left[\frac{1-\gamma e^{-\delta\tau}}{e^{\delta\tau}-1}\log\frac{X_{T_{n+1}}}{X_{T_n}}\right.\right.
%\nonumber\\
%\hspace{-0.3cm}&&\hspace{4cm}
\left.\left.+e^{-\delta\tau}(h(Y_{T_{n+1}})+A^*(Y_{T_{n+1}}))\bigg|\mathcal{F}_{T_n}\right]-A^*(Y_{T_n})\right].
\nonumber\\
\end{eqnarray}
%Due to $X\in\mathcal{U}_0(x,y)$ and \eqref{supermaringale}, for $\eta_{n+1}\in\mathcal{M}_{n+1}$, one knows that $XZ^{\eta_{n+1}}/B$ is a $\mathbb{P}$-supermartingale over $[T_n,T_{n+1}]$, which yields that $$\mathbb{E}\left[X_{T_{n+1}}\frac{Z^{\eta_{n+1}}_{T_{n+1}}}{B_{T_{n+1}}}\Big|\mathcal{F}_{T_n}\right]\leq\mathbb{E}\left[X_{T_n}\frac{Z^{\eta}_{T_n}}{B_{T_n}}\right],$$and, 
Furthermore, according to the definition of $\Psi$ (see, \eqref{Psi.2}), one knows
\begin{eqnarray}\label{3.5.log}
\hspace{-0.3cm}&&\hspace{-0.3cm}
\mathbb{E}\left[\frac{1-\gamma e^{-\delta\tau}}{e^{\delta\tau}-1}\log\frac{X_{T_{n+1}}}{X_{T_n}}+e^{-\delta\tau}(h(Y_{T_{n+1}})+A^*(Y_{T_{n+1}}))\bigg|\mathcal{F}_{T_n}\right]
\nonumber\\
\hspace{-0.3cm}&\leq&\hspace{-0.3cm}
\sup_{X\in\Tilde{\mathcal{U}}_{T_n,T_{n+1}}(1,Y_{T_n})}\mathbb{E}\left[\frac{1-\gamma e^{-\delta\tau}}{e^{\delta\tau}-1}\log X_{T_{n+1}}+e^{-\delta\tau}(h(Y_{T_{n+1}})+A^*(Y_{T_{n+1}}))\bigg|\mathcal{F}_{T_n}\right]
\nonumber\\
\hspace{-0.3cm}&=&\hspace{-0.3cm}
\Psi(A^*;Y_{T_n}).
\end{eqnarray}
Recall that $A^*(\cdot)$ is the fixed point of the function $\Psi(A;\cdot)=A(\cdot)$. This, together with \eqref{3.4.log} and \eqref{3.5.log}, implies
\begin{eqnarray}\label{3.6.log}
\mathbb{E}[D_{n+1}|\mathcal{F}_{T_n}]\leq D_n+e^{-\delta T_n}\left[\Psi(A^*;Y_{T_n})-A^*(Y_{T_n})\right]=D_n.
\end{eqnarray}
Hence, $(D_n)_{n\geq 0}$ is a $\{\mathcal{F}_{T_n}\}$-supermartingale, and then, one gets
\begin{eqnarray}\label{A+C+=D}
\hspace{-0.3cm}&&\hspace{-0.3cm}
A^*(y)+C^*\log x=D_0
\nonumber\\
\hspace{-0.3cm}&\geq&\hspace{-0.3cm}
\mathbb{E}\left[\sum_{i=1}^ne^{-\delta T_i}\left(\log\frac{X_{T_i}}{X_{T_{i-1}}^{\gamma}}+h(Y_{T_i})\right)+e^{-\delta T_n}(A^*(Y_{T_n})+C^*\log X_{T_n})\right]
\nonumber\\
\hspace{-0.3cm}&=&\hspace{-0.3cm}
\mathbb{E}\left[\sum_{i=1}^ne^{-\delta T_i}\left(\log\frac{X_{T_i}}{X_{T_{i-1}}^{\gamma}}\right)_+-\sum_{i=1}^ne^{-\delta T_i}\left(\log\frac{X_{T_i}}{X_{T_{i-1}}^{\gamma}}\right)_-\right.
\nonumber\\
\hspace{-0.3cm}&&\hspace{0.3cm}
\left.+\sum_{i=1}^ne^{-\delta T_i}h(Y_{T_i})+e^{-\delta T_n}(A^*(Y_{T_n})+C^*\log X_{T_n})\right].
\end{eqnarray}
By the definition of the admissible set, it holds that
\begin{eqnarray}\label{standing ass.}
\sum_{i=1}^{\infty}e^{-\delta T_i}\mathbb{E}\left[\left(\log\frac{X_{T_i}}{\left(X_{T_{i-1}}\right)^{\gamma}}\right)_{-}\right]<\infty,
\end{eqnarray}
and we obtain that
\begin{eqnarray}
\mathbb{E}\left[\sum_{i=1}^ne^{-\delta T_i}\left(\log\frac{X_{T_i}}{X_{T_{i-1}}^{\gamma}}\right)_-\right]
\hspace{-0.3cm}&\leq&\hspace{-0.3cm}
\sum_{i=1}^{\infty}e^{-\delta T_i}\mathbb{E}\left[\left(\log\frac{X_{T_i}}{X_{T_{i-1}}^{\gamma}}\right)_-\right]<\infty,\quad n\geq1.\nonumber
\end{eqnarray}
Hence, one can rewrite \eqref{A+C+=D} as
\begin{eqnarray}
A^*(y)+C^*\log x
\hspace{-0.3cm}&\geq&\hspace{-0.3cm}
\mathbb{E}\left[\sum_{i=1}^ne^{-\delta T_i}\left(\log\frac{X_{T_i}}{X_{T_{i-1}}^{\gamma}}\right)_+\right]-\mathbb{E}\left[\sum_{i=1}^ne^{-\delta T_i}\left(\log\frac{X_{T_i}}{X_{T_{i-1}}^{\gamma}}\right)_-\right]
\nonumber\\
\hspace{-0.3cm}&&\hspace{0.3cm}
+\mathbb{E}\left[\sum_{i=1}^ne^{-\delta T_i}h(Y_{T_i})+e^{-\delta T_n}(A^*(Y_{T_n})+C^*\log X_{T_n})\right].\nonumber
\end{eqnarray}
This, together with $m\leq h(\cdot)\leq1$, \eqref{standing ass.}, Proposition \ref{fixed.log} and Lemma 5.3 of \cite{WYY23}, the monotone convergence theorem, as well as the Fubini's theorem,  yields that
\begin{eqnarray}
\mathbb{E}\left[\sum_{i=1}^{\infty}e^{-\delta T_i}\left(\log\frac{X_{T_i}}{X_{T_{i-1}}^{\gamma}}\right)_+\right]
\hspace{-0.3cm}&\leq&\hspace{-0.3cm}
A^*(y)+C^*\log x+\mathbb{E}\left[\sum_{i=1}^{\infty}e^{-\delta T_i}\left(\left(\log\frac{X_{T_i}}{X_{T_{i-1}}^{\gamma}}\right)_-+h(Y_{T_i})\right)\right]
\nonumber\\
\hspace{-0.3cm}&\leq&\hspace{-0.3cm}
A^*(y)+C^*\log x+\sum_{i=1}^{\infty}e^{-\delta T_i}\mathbb{E}\left[\left(\log\frac{X_{T_i}}{X_{T_{i-1}}^{\gamma}}\right)_-+h(Y_{T_i})\right]
\nonumber\\
\hspace{-0.3cm}&<&\hspace{-0.3cm}
\infty.\nonumber
\end{eqnarray}
Therefore, the random variable $\sum_{i=1}^{\infty}e^{-\delta T_i}\left(\log\frac{X_{T_i}}{X_{T_{i-1}}^{\gamma}}\right)_{+}$ is integrable. Recall that, by \eqref{standing ass.} and the Fubini's theorem, the random variable $\sum_{i=1}^{\infty}e^{-\delta T_i}\left(\log\frac{X_{T_i}}{X_{T_{i-1}}^{\gamma}}\right)_{-}$ is also integrable. In addition, one can verify that
\begin{align}\label{Control function}
    \left|\sum_{i=1}^{n}e^{-\delta T_i}\left(\log\frac{X_{T_i}}{X_{T_{i-1}}^{\gamma}}\right)\right|
    %\nonumber\\
\leq & \sum_{i=1}^{n}e^{-\delta T_i}\left(\log\frac{X_{T_i}}{X_{T_{i-1}}^{\gamma}}\right)_{+}
    +
    \sum_{i=1}^{n}e^{-\delta T_i}\left(\log\frac{X_{T_i}}{X_{T_{i-1}}^{\gamma}}\right)_{-}
    \nonumber\\
    \leq
&    \sum_{i=1}^{\infty}e^{-\delta T_i}\left(\log\frac{X_{T_i}}{X_{T_{i-1}}^{\gamma}}\right)_{+}
    +
    \sum_{i=1}^{\infty}e^{-\delta T_i}\left(\log\frac{X_{T_i}}{X_{T_{i-1}}^{\gamma}}\right)_{-},\quad n\geq 1,
\end{align}
where the random variable on the right hand side of \eqref{Control function} is integrable that
\begin{align}
\label{control func.integrable}
\mathbb{E}\left[\sum_{i=1}^{\infty}e^{-\delta T_i}\left(\log\frac{X_{T_i}}{X_{T_{i-1}}^{\gamma}}\right)_{+}
    +
    \sum_{i=1}^{\infty}e^{-\delta T_i}\left(\log\frac{X_{T_i}}{X_{T_{i-1}}^{\gamma}}\right)_{-}\right]<\infty.
\end{align}
By \eqref{control func.integrable}, we know that
\begin{align}
    \sum_{i=1}^{\infty}e^{-\delta T_i}\left|\log\frac{X_{T_i}}{X_{T_{i-1}}^{\gamma}}\right|
%\nonumber\\
=&
     \sum_{i=1}^{\infty}e^{-\delta T_i}\left(\log\frac{X_{T_i}}{X_{T_{i-1}}^{\gamma}}\right)_{+}
    +
    \sum_{i=1}^{\infty}e^{-\delta T_i}\left(\log\frac{X_{T_i}}{X_{T_{i-1}}^{\gamma}}\right)_{-}<\infty\,\,  \text{ almost surely}.\nonumber
\end{align}
Because an absolutely convergent series is conditionally convergent, one has
\begin{align}
\label{convergence.a.s.}
    \sum_{i=1}^{n}e^{-\delta T_i}\log\bigg(\frac{X_{T_i}}{X_{T_{i-1}}^{\gamma}}\bigg)\longrightarrow
    \sum_{i=1}^{\infty}e^{-\delta T_i}\log\bigg(\frac{X_{T_i}}{X_{T_{i-1}}^{\gamma}}\bigg) \text{ almost surely as } n\rightarrow\infty.
\end{align}
With the help of \eqref{Control function}, \eqref{control func.integrable} and \eqref{convergence.a.s.}, one can apply Lemma 5.3 in \cite{WYY23} and the dominated convergence theorem to the inequality \eqref{A+C+=D} to obtain that
\begin{eqnarray}
\mathbb{E}\left[\sum_{i=1}^{\infty}e^{-\delta T_i}\bigg(\log\frac{X_{T_i}}{X_{T_{i-1}}^{\gamma}}+h(Y_{T_{i}})\bigg)\right]
\hspace{-0.3cm}&=&\hspace{-0.3cm}
\mathbb{E}\left[\sum_{i=1}^{\infty}e^{-\delta T_i}\left(\log\frac{X_{T_i}}{X_{T_{i-1}}^{\gamma}}+h(Y_{T_i})\right)\right]
\nonumber\\
\hspace{-0.3cm}&\leq&\hspace{-0.3cm}
A^*(y)+C^*\log x.\nonumber
\end{eqnarray}
It then holds that
\begin{eqnarray}
V(x,y)=\sup_{X\in\mathcal{U}_0(x,y)}\mathbb{E}\left[\sum_{i=1}^{\infty}e^{-\delta T_i}\bigg(\log\frac{X_{T_i}}{X_{T_{i-1}}^{\gamma}}+h(Y_{T_{i}})\bigg)\right]\leq A^*(y)+C^*\log x.\nonumber
\end{eqnarray}
To show the reverse inequality, it is sufficient to show the existence of some admissible portfolio process $X^*$ such that $$\mathbb{E}\left[\sum_{i=1}^{\infty}e^{-\delta T_i}\left(\log\frac{X^*_{T_i}}{(X^*_{T_{i-1}})^{\gamma}}+h(Y_{T_i})\right)\right]=A^*(y)+C^*\log x.$$
By Proposition \ref{prop2.2}, given the choice of $X^*_{T_{n+1}}=X^*_{T_n}\frac{Z^{0}_{T_n}/B_{T_n}}{Z^{0}_{T_{n+1}}/B_{T_{n+1}}}$ for $n\geq0$, it holds that
\begin{eqnarray}
\hspace{-0.3cm}&&\hspace{-0.3cm}
\mathbb{E}\left[\left.\frac{1-\gamma e^{-\delta\tau}}{e^{\delta\tau}-1}\log\frac{X^*_{T_{n+1}}}{X^*_{T_n}}+e^{-\delta\tau}(A^*(Y_{T_{n+1}})+h(Y_{T_{n+1}}))\right|\mathcal{F}_{T_n}\right]
%\nonumber\\\hspace{-0.3cm}&=&\hspace{-0.3cm}\mathbb{E}\left[U(X_{T_1})+A^*(X_{T_1})^{\alpha(1-\gamma)}\right]
\nonumber\\
\hspace{-0.3cm}&=&\hspace{-0.3cm}
\sup_{X\in\Tilde{\mathcal{U}}_{T_n,T_{n+1}}(1,Y_{T_n})}\mathbb{E}\left[\frac{1-\gamma e^{-\delta\tau}}{e^{\delta\tau}-1}\log X_{T_{n+1}}+e^{-\delta\tau}(A^*(Y_{T_{n+1}})+h(Y_{T_{n+1}}))\Big|\mathcal{F}_{T_n}\right]=\Psi(A^*;Y_{T_n}).\nonumber
\end{eqnarray}
Define
$$D^*_n:=\sum_{i=1}^ne^{-\delta T_i}\bigg(\log\frac{X^*_{T_i}}{{(X^*_{T_{i-1}})}^{\gamma}}+h(Y_{T_{i}})\bigg)+e^{-\delta T_n}(A^*(Y_{T_n})+C^*\log X^*_{T_n}), \quad n\geq 0.$$
Using the same arguments leading to \eqref{3.6.log}, it is easy to conclude that
\begin{eqnarray}
\mathbb{E}[D^*_{n+1}|\mathcal{F}_{T_n}]=D^*_n+e^{-\delta T_n}\left[e^{-\delta\tau}\Psi(A^*;Y_{T_n})-A^*(Y_{T_n})\right]=D^*_n, \quad n\geq 0,\nonumber
\end{eqnarray}
which implies that $(D^*_n)_{n\geq0}$ is a $\{\mathcal{F}_{T_n}\}$-martingale. Hence, we have
\begin{eqnarray}
\mathbb{E}\left[\sum_{i=1}^{n}e^{-\delta T_i}\bigg(\log\frac{X^*_{T_i}}{(X^*_{T_{i-1}})^{\gamma}}+h(Y_{T_{i}})\bigg)\right]
\hspace{-0.3cm}&=&\hspace{-0.3cm}
A^*(y)+C^*\log x-e^{-\delta T_n}\mathbb{E}\left[A^*(Y_{T_n})+C^*\log X^*_{T_n}\right].\nonumber
\end{eqnarray}
By the dominated convergence theorem, Proposition \ref{fixed.log} and Lemma 5.3 in \cite{WYY23}, we have
\begin{eqnarray}
\mathbb{E}\left[\sum_{i=1}^{\infty}e^{-\delta T_i}\bigg(\log\frac{X^*_{T_i}}{(X^*_{T_{i-1}})^{\gamma}}+h(Y_{T_{i}})\bigg)\right]=A^*(y)+C^*\log x.\nonumber
\end{eqnarray}
Therefore, the first claim holds. To prove the second claim, we only need to verify that $X^*_t=x\frac{B_t}{Z^0_t}$ satisfies $dX^*_t=[r+(\mu(Y_t)-r)\pi^*_t]X^*_tdt+\sigma(Y_t)\pi^*_tX^*_tdW_{1t}$. To this end, we observe that 
$$d\frac{B_t}{Z^0_t}=\frac{B_t}{Z_t^0}\left[rdt+\sigma(Y_t)\pi^*_tdW_{1t}+(\mu(Y_t)-r)\pi^*_tdt\right],\quad t\in[0,\infty).$$
Hence, we arrive at 
$$d\left[x\frac{B_t}{Z^0_t}\right]=[r+(\mu(Y_t)-r)\pi^*_t]X^*_tdt+\sigma(Y_t)\pi^*_tX^*_tdW_{1t},$$
which yields that the second claim holds.
\end{proof}

\ \\
\textbf{Acknowledgements}:  W. Wang is supported by the National Natural Science Foundation of China under no. 12171405 and no. 11661074. 
X. Yu is supported by the Hong Kong RGC General Research Fund (GRF) under grant no. 15304122 and grant no. 15306523.


\begin{thebibliography}{}
\bibitem[Ait-Sahalia and Brandt(2001)]{Saha2001}
\small{Ait-Sahalia, Y. and Brandt, M. (2001): Variable selection for portfolio choice. \emph{Journal of Finance}, 56, 1297-1351.}

\bibitem[Avanesyan et al.(2020)]{F20}
\small{Avanesyan, L., Shkolnikov, M. and Sircar, R. (2020): Construction of a class of forward performance processes in stochastic factor models and an extension of Widder's theorem. \emph{Finance and Stochastics}, 24, 981-1011.}

\bibitem[Bielecki and Pliska(1999)]{BPK99}
\small{Bielecki, T. R. and Pliska, S. R. (1999). Risk-sensitive dynamic asset management. \emph{Applied Mathematics $\&$ Optimization}. 39, 337-360.}

\bibitem[Brandt(1999)]{Brandt99}
\small{Brandt, M. (1999): Estimating portfolio and consumption choice: A conditional Euler equation approach. \emph{Journal of Finance}, 54, 1609-1645.}

\bibitem[Brennan et al.(1997)]{Bre97}
\small{Brennan, M.J., Schwartz, E.S. and Lagnado, R. (1997): Strategic asset allocation. \emph{Journal of Economic Dynamics and Control}, 21, 1377-1402.}

\bibitem[Brennan(1998)]{Bre98}
\small{Brennan, M.J. (1998): The role of learning in dynamic portfolio decisions. \emph{European Finance Review}, 1, 295-306.}

\bibitem[Campbell and Viceira(1999)]{Camp99}
\small{Campbell, J.Y. and Viceira, L. (1999): Consumption and portfolio decisions when expected returns are time varying. \emph{Quarterly Journal of Economics}, 114, 433-495. }

\bibitem[Castaneda-Leyva and Hernández-Hernández(2005)]{CH05}
\small{Castaneda-Leyva, N., and Hernández-Hernández, D. (2005): Optimal consumption-investment problems in incomplete markets with stochastic coefficients. \emph{SIAM Journal on Control and Optimization}, 44(4), 1322-1344.}

\bibitem[Chacko and Viceira(2005)]{Chacko05}
\small{Chacko, G. and Viceira, L.M. (2005): Dynamic consumption and portfolio choice with stochastic volatility in incomplete markets. \emph{Review of Financial Studies}, 18, 1369-1402. }

\bibitem[Fama and Schwert(1977)]{Fama77}
\small{Fama, W.E. and Schwert, G.W. (1977): Asset returns and inflation. \emph{Journal of Financial Economics}, 5, 115-146.}

\bibitem[Ferson and Harvey(1993)]{Ferson93}
\small{Ferson, W.E. and Harvey, C.R. (1993): The risk and predictability of international equity returns. \emph{Review of Financial Studies}, 6, 527-566.}

\bibitem[Fleming and Hernández-Hernández(2003)]{FH2003}
\small{Fleming, W.H., and Hernández-Hernández, D. (2003): An optimal consumption model with stochastic volatility. \emph{Finance and Stochastics}, 7, 245-262. }


\bibitem[Fleming and Sheu(1999)]{FlemingS99}
\small{Fleming, W. H. and Sheu, S.-J. (1999). Optimal long term growth rate of expected utility of wealth. The Annals of Applied Probability, 9(3), 871-903.}

\bibitem[French et al.(1987)]{French87}
\small{French, K.R., Schwert, G.W. and Stambaugh, R.F. (1987): Expected stock returns and volatility. \emph{Journal of Financial Economics}, 19, 3-29.}

\bibitem[Fouque et al.(2000)]{FPS2000} 
\small{Fouque, J.-P., Papanicolaou, G., and Sircar, R. (2000): Derivatives in financial markets with stochastic volatility.} \emph{Cambridge University Press}, Cambridge, UK.

\bibitem[Fouque et al.(2013)]{F13}
\small{Fouque, J.-P., Sircar, R. and Zariphopoulou, T. (2013): Portfolio optimization and stochastic volatility asymptotics. \emph{Mathematical Finance}, 27(3), 704-745. }

\bibitem[Hata et al.(2018)]{Hata18}
\small{Hata, H., Nagai, H. and Sheu S. (2018): An optimal consumption problem for general factor models. \emph{SIAM Journal on Control and Optimization}, 56(5), 3149-3183}

\bibitem[Karatzas et al.(1991)]{KL91}
\small{Karatzas, I., Lehoczky, J. P., Shreve, S. E., and Xu, G. L. (1991): Martingale and duality methods for utility maximization in an incomplete market. \emph{SIAM Journal on Control and optimization}, 29(3), 702-730.}

\bibitem[Karatzas and Shreve(1998)]{KS98}
\small{Karatzas, I., Shreve, S. E. (1998): Methods of mathematical finance. \emph{New York: Springer}.}

\bibitem[Kim and Omberg(1996)]{Kim96}
\small{Kim, T.S. and Omberg, E. (1996): Dynamic nonmyopic portfolio behavior. \emph{Review of Financial Studies}, 9, 141-161.}

\bibitem[Kraft(2005)]{Kraft05}
\small{Kraft, H. (2005): Optimal portfolios and Heston's stochastic volatility model. \emph{Quantitative Finance}, 5, 303-313.}

\bibitem[Larsen(2011)]{La11}
\small{Larsen, K. (2011): A note on the existence of the power investor’s optimizer. \emph{Finance and Stochastics}, 15, 183-190.}

\bibitem[Liu(2007)]{Liu}
\small{Liu, J. (2007): Portfolio selection in stochastic environments. \emph{Review of Financial Studies}, 20(1), 1-39.}

\bibitem[Musiela and Zariphopoulou(2008)]{MZ08}
\small{Musiela, M. and Zariphopoulou, T. (2008). Optimal asset allocation under forward exponential performance criteria.  Markov Processes and Related Topics: A Festschrift for T. G. Kurtz, Lecture Notes - Monograph Series, Institute for Mathematical Statistics 4, 285-300.}

\bibitem[Musiela and Zariphopoulou(2009)]{MZ09}
\small{Musiela, M. and Zariphopoulou, T. (2009). Portfolio choice under dynamic investment performance criteria. Quantitative Finance, 9(2), 161-170.}


\bibitem[Musiela and Zariphopoulou(2010)]{MZ10}
\small{Musiela, M. and Zariphopoulou, T. (2010). Portfolio choice under space-time monotone performance criteria. SIAM Journal on Financial Mathematics, 1, 326-365.}

\bibitem[Pham(2002)]{Pham02}
\small{Pham, H. (2002): Smooth solutions to optimal investment models with stochastic volatilities and portfolio constraints. \emph{Applied Mathematics $\&$ Optimization}, 46, 1-55. }

\bibitem[Pham(2003)]{Pham03}
\small{Pham, H. (2003). A large deviations approach to optimal long term investment. Finance and Stochastics, 7, 169-195.}

\bibitem[Scruggs(1998)]{Scruggs}
\small{Scruggs, J.T. (1998): Resolving the puzzling intertemporal relation between the market: risk premium and conditional market variance: A two-factor approach. \emph{Journal of Finance}, 53, 575-603.}


\bibitem[Tse and Zheng(2023)]{TZ23}
\small{Tse, A. S., and Zheng, H. (2023): Portfolio Selection, Periodic Evaluations and Risk Taking. \emph{Operations Research},  DOI: 10.1287/opre.2021.0780. }

\bibitem[Wachter(2002)]{Watchter}
\small{Wachter, J. (2002): Portfolio and consumption decisions under mean-reverting returns: An exact solution for complete markets. \emph{Journal of Financial and Quantitative Analysis}, 37, 63-91.}

\bibitem[Wang et al.(2023)]{WYY23}
\small{Wang, W., Yan, K., and Yu, X. (2023): Optimal portfolio with ratio type periodic evaluation under short-selling prohibition. Preprint, available at arXiv:2311.12517.}

\bibitem[Zariphopoulou(2001)]{ZT01}
\small{Zariphopoulou, T. (2001): A solution approach to valuation with unhedgeable risks. \emph{Finance and Stochastics}, 5, 61-82.}
\end{thebibliography}
\end{document}